%% file: main_preprint.tex
\documentclass[11pt]{article}
\usepackage[dvips,letterpaper,margin=1.0in]{geometry}

\usepackage{authblk}
\usepackage{graphicx}
\usepackage{amsmath}
\usepackage{amssymb}
\usepackage{sty}
\usepackage{lmodern}
\usepackage[utf8]{inputenc} 
\usepackage[T1]{fontenc}    
\usepackage{url}            
\usepackage{booktabs}       
\usepackage{amsfonts}       
\usepackage{nicefrac}       
\usepackage{microtype}      
\usepackage{xcolor}         
\usepackage{macros}         
\usepackage{algorithm}
\usepackage{algorithmic}
\usepackage{eqparbox}
\usepackage{comment}
\usepackage{setspace}
\usepackage{caption}
\usepackage{tikz}
\usepackage{float}
\usetikzlibrary{positioning, arrows.meta, shapes.geometric, decorations.text, decorations.pathreplacing}
\usepackage[
backend=biber,
backref=true,
style=alphabetic,
citestyle=alphabetic,
maxnames=99,
maxalphanames=4,
minalphanames=4,
date=year]{biblatex}
\DefineBibliographyStrings{english}{%
  backrefpage = {page},
  backrefpages = {pages},
}
\AtEveryBibitem{
    \clearname{editor}
    \clearlist{publisher}
    \clearlist{location}
    \clearfield{url}
    \clearfield{urldate}
    \clearfield{review}
    \clearfield{doi}
    \clearfield{series}
    \clearfield{pages}
    \clearfield{issn}
    \clearfield{isbn}
}
\title{Unforgeable Watermarks for Language Models
\\via Robust Signatures}

\author[a]{Huijia Lin}
\author[a]{Kameron Shahabi}
\author[b]{Min Jae Song}
\affil[a]{Paul G.~Allen School of Computer Science \& Engineering,
University of Washington}
\affil[b]{Data Science Institute,
University of Chicago}
\affil[ ]{\texttt {\{rachel,kshahabi\}@cs.washington.edu, minjae.song@uchicago.edu}}

\begin{document}

\maketitle

\begin{abstract}
Language models now routinely produce text that is difficult to distinguish from human writing, raising the need for robust tools to verify content provenance. Watermarking has emerged as a promising countermeasure, with existing work largely focused on model quality preservation and robust detection. However,  current schemes provide limited protection against false attribution.

We strengthen the notion of soundness by introducing two novel guarantees: unforgeability and recoverability. Unforgeability prevents adversaries from crafting false positives, texts that are far from any output from the watermarked model but are nonetheless flagged as watermarked.
Recoverability provides an additional layer of protection: whenever a watermark is detected, the detector identifies the source text from which the flagged content was derived. Together, these properties strengthen content ownership by linking content exclusively to its generating model, enabling secure attribution and fine-grained traceability.

We construct the first undetectable watermarking scheme that is robust, unforgeable, and recoverable with respect to substitutions (i.e., perturbations in Hamming metric). The key technical ingredient is a new cryptographic primitive called robust (or recoverable) digital signatures,
which allow verification of messages that are close to signed ones, while preventing forgery of messages that are far from all previously signed messages. We show that any standard digital signature scheme can be boosted to a robust one using property-preserving hash functions (Boyle, LaVigne, and Vaikuntanathan, ITCS 2019).
\end{abstract}

\newpage
\tableofcontents
\newpage

\input{sections/intro-new}

\input{sections/technical-overview}
\input{sections/watermarking-primitive}
\input{sections/preliminaries}

\input{sections/rds-formal}
\input{sections/watermarking-formal}
\newpage
\printbibliography
\newpage

\appendix
\input{sections/appendix}

\end{document}

%% file: sections/intro-new.tex
\section{Introduction}

The rapid rise of generative AI has made reliable content provenance---maintaining verifiable records of the origin and history of generated content---an urgent priority. The remarkable quality of content output by GenAI models, including large language models (LLMs) like GPT-4~\cite{achiam2023gpt} and diffusion models such as Stable Diffusion~\cite{rombach2022high}, increasingly blurs the line between authentic and synthetic content. Without reliable provenance, trust in digital ecosystems erodes, enabling large-scale misinformation, disputed ownership, and other serious risks \cite{barrett2023identifying}. This paper focuses on two avenues of content provenance: 
\begin{itemize}
    \item \textit{Tracing content to its source}: Attributing generated content to the model, model owner, or prompter responsible for generating it. 
    \item \textit{Tracking edits}: Recovering the modifications or edits made to generated content by computationally bounded parties.
    \end{itemize}
A recent line of work~\cite{aaronson2022my,kirchenbauer2023watermark,gunn2024undetectable} has introduced \emph{watermarking} as a tool for provenance. However, existing notions of watermarking typically target a narrow objective and offer guarantees that fall short of reliable source tracing. Moreover, prior approaches do not consider the challenge of tracking edits. In this work, we address these limitations in the context of language models.

\paragraph{Existing watermarking guarantees.} A watermarking scheme for a language model is a protocol for embedding hidden, verifiable watermarks into generated content. It consists of three algorithms, $(\Gen, \Wat, \Ver)$. The key-generation algorithm jointly samples watermarking and verification keys $(\wk, \vk) \gets \Gen(1^\lambda)$. As we will discuss shortly, we may either consider a public or secret verification key. The watermarking algorithm $\Wat(\wk, \cdot)$ acts as the interface to the language model, receiving prompts $x \in \{0,1\}^*$ and producing watermarked responses $y \in \{0,1\}^*$. The verification algorithm $\Ver(\vk, \cdot)$ decides whether a candidate string $\zeta \in \{0, 1\}^*$ is watermarked. Existing constructions~\cite{christ2024pseudorandom,golowich2024edit,fairoze2023publicly} satisfy the following security properties:
\begin{itemize}
    \item \textit{Robustness:} No efficient attacker with oracle access to $\Wat_{\wk}$\footnote{In the public verification key setting, the attacker also receives the verification key $\vk$; in the secret key setting they instead have oracle access to $\Ver_{\vk}$.} can produce text that is \textit{close}, under a specified closeness predicate, to one of the responses $y \leftarrow \Wat_{\wk}(x)$ that they observe, but for which verification fails. In other words, it is hard to remove the watermark through mild modifications.
    \item \textit{Soundness:} No attacker acting \textit{independently} of the watermarking and verification keys $(\wk, \vk)$ can generate text that is incorrectly flagged as watermarked.
\end{itemize}
Preserving model quality is also a central goal. The strongest notion, \emph{undetectability}~\cite[Definition 9]{christ2024undetectable}, requires that without the watermarking and verification keys, no efficient observer can distinguish the oracle $\Wat_{\wk}$ from an oracle that produces standard (unwatermarked) outputs from the underlying model. In this work, we adopt undetectability as a core requirement for watermarking schemes.

At a high level, robustness and soundness address complementary failures: robustness protects against adversaries attempting to trigger false negatives, while soundness defends against false positives. However, the threat model for soundness is weaker than that for robustness. Soundness adversaries must act \emph{without knowledge} of the watermarking and verification keys. This restriction is information-theoretic, implying that the attacker has no access to watermarked content (which depends on the keys) and cannot test verification decisions on candidate text. In contrast, robustness adversaries may adaptively query watermarked outputs and are provided access to the verification procedure \footnote{Because soundness is defined under a weaker threat model, existing watermarking schemes typically achieve soundness against computationally unbounded adversaries, whereas robustness has been achieved either against computationally bounded \cite{christ2024pseudorandom,alrabiah2024ideal} or unbounded \cite{fairoze2023publicly} adversaries.}.

Despite this gap, these definitions do well in capturing the original use cases of language watermarking: distinguishing AI-generated content from honest, human-written text. Prior work has focused on two settings: \emph{public verification} \cite{fairoze2023publicly}, where the verification key is released so that anyone can distinguish content, and \emph{private verification} \cite{christ2024pseudorandom,golowich2024edit}, where a model owner keeps their verification key secret and uses it to test their own model's content---for example, to avoid retraining their model on its own prior outputs while scraping the web for training data \cite{shumailov2024ai}. The robustness adversary then receives appropriate access to the verification key---full access to $\vk$ in the public key setting and oracle access to $\Ver_{\vk}$ in the secret key setting---as well as oracle access to $\Wat_{\sk}$, reflecting the capabilities of a motivated attacker who is actively trying to remove the watermark. The soundness adversary, on the other hand, gets no access to the keys, which is justified under this traditional setup, as honest humans are unlikely to produce text correlated with randomly chosen watermarking and verification keys, and thus would rarely trigger false positives. Moreover, some applications can naturally tolerate a small number of false positives. For instance, in the example of filtering a model’s own outputs from its training data, it may be acceptable if a few human written texts are mistakenly identified as watermarked and filtered out of the training data.

\paragraph{Tracing the source: Unforgeability.} Unfortunately, existing soundness guarantees are insufficient for reliably tracing content back to its source. This is an important functionality for applications that require accountability, where it is crucial to identify parties responsible for generating harmful content, or important to assign credit to deserving parties that produced content.

One might hope that verifying text under a particular verification key, together with soundness, would serve as proof that the content originated from the owner of the corresponding watermarking key. However, enabling such fine-grained attribution motivates a new class of attacks. Adversaries may observe previously generated watermarked content, detection outcomes (in the secret key setting), or the verification key itself (in the public key setting) to produce content that differs significantly from any genuine model output, yet is still flagged as watermarked. We refer to such attacks as \emph{forgeries}. For example, an adversary may want to forge a model owner's watermark onto harmful content, to frame the model owner for producing it. Soundness alone, which only protects against key-independent adversaries, provides no protection against these key-dependent forgeries. 

To address this, we propose a strengthened notion of soundness, (robust) unforgeability, which allows for dependence on keys:
\begin{itemize}
\item \textit{Unforgeability:} No efficient attacker with a watermarking oracle $\Wat_{\wk}$ can produce content that is \textit{far} (under some closeness predicate) from all the responses $y \leftarrow \Wat_{\wk}(x)$ that they observe, yet is still flagged as watermarked. 
\end{itemize}
The attacker is also given access to the verification key $\vk$ (in the public key setting) or to a verification oracle $\Ver_{\vk}$ (in the secret key setting). Figure~\ref{fig:guarantees} illustrates the distinction between soundness and unforgeability. With an unforgeable watermarking scheme, content that verifies under a particular verification key $\vk$ can be reliably attributed to the owner of the corresponding watermarking key $\wk$, since it must \emph{closely resemble} content generated using that key.

\begin{figure}[htbp]
    \centering
    \includegraphics[width=0.8\textwidth]{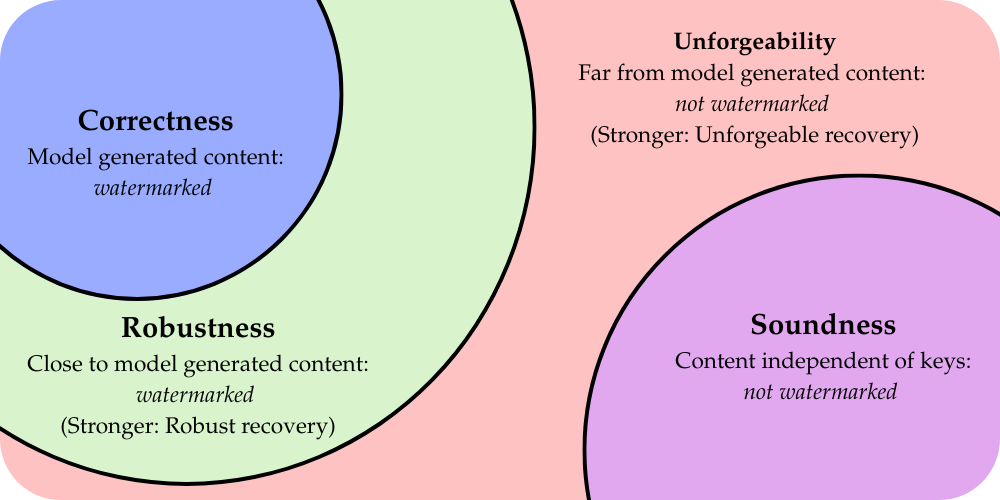} 
    \caption{Illustration of watermarking guarantees. Correctness and soundness (blue and purple) are statistical guarantees, whereas robustness and unforgeability (green and red) can be computational. Consequently, the green and red regions may \textit{contain} unwatermarked and watermarked content respectively, but no efficient adversary can find them.}
    \label{fig:guarantees} 
\end{figure}
Our notion of unforgeability is inspired by digital signatures but differs in that it only prevents forgery of content that is significantly \textit{far} from the model’s outputs, rather than content that is merely different. This design ensures that unforgeability complements robustness. A strict digital signature–style definition would conflict with robustness, since even mildly modified text that still verifies could be treated as a forgery. Prior work on watermark unforgeability~\cite{christ2024pseudorandom,fairoze2023publicly} illustrates this trade-off: these approaches leverage digital signatures but do not maintain robustness.

\paragraph{Tracking edits: Recoverability.} Prior works have yet to consider tracking edits to generated text. We introduce a formal notion, (robust) {recoverability}, which allows doing so. Recoverable watermarks include an additional polynomial time algorithm $\Rec$ which outputs a list, with the property:
\begin{itemize}
    \item \textit{Robust recoverability:} If an efficient adversary produces text $\zeta$ that is \emph{close} (under a closeness predicate) to some response $y \leftarrow \Wat_{\wk}(x)$, then the list $\Rec(\vk, \zeta)$ contains a substring of $y$ whose length is nearly the length of $\zeta$\footnote{We allow a small loss in length between the recovered substring from $y$ and the adversary's output $\zeta$.}.
\end{itemize}
A key benefit of recoverability is that it reduces ambiguity in defining \emph{closeness} using a boolean predicate $\Phi : \{0, 1\}^* \times \{0, 1\}^* \rightarrow \{0, 1\}$. In an unforgeable watermarking scheme, adversarially generated text verifies only if it is close, according to $\Phi$, to one of the adversary's observed model-generated responses. However, text may be close under \emph{that} predicate yet still not warrant attribution to the watermarker, as it seems unlikely that any single predicate can capture diverse attribution decisions. Recoverability addresses this by enabling reconstruction of the original output, allowing the verifier to make a more informed judgement about whether to attribute content to the watermarker or to a user who modified that content.

Recoverability can be viewed as a strengthening of robustness, protecting against false negatives by ensuring that we \emph{recover}, rather than merely verify, when text is close to a watermarked output. It does \emph{not}, however, protect against false positives in the recovered list. Nonetheless, some applications of recoverability (e.g., aiding in attribution decisions) may require the guarantee that the recovered substrings genuinely originate from the source model. We formalize this with \emph{unforgeable recoverability}: if an efficient adversary produces text $\zeta$ such that the recovery algorithm outputs a non-empty list ($\Rec(\vk, \zeta) \neq \perp$), then each recovered $\xi \in \Rec(\vk, \zeta)$ must be a substring of some response $y \leftarrow \Wat_{\wk}(x)$ that the adversary approximately copied.
\paragraph{Applications of unforgeability and recoverability.}
Unforgeability enables reliably tracing content back to a specific \emph{watermarking key}. This supports the following applications:

\begin{itemize}
\item \textit{Strengthened claims of ownership and accountability.} Unforgeable watermarking guarantees that content attributed to a particular watermarker (e.g., a model owner or a prompter) genuinely originates from them. This enables reliable authorship claims (e.g., for AI-generated art) and supports accountability (e.g., determining whether harmful content was actually produced by a certain watermarker).
\item \textit{Improved soundness for honest users.} A user may interact with a watermarked model honestly---for instance, by observing and substantially modifying watermarked outputs to avoid direct copying. Soundness offers no protection against false positives in such cases, since the resulting text may still be correlated with the watermarking key. Moreover, as language models and watermarked outputs become increasingly widespread, even fully human-written text may inadvertently be correlated with watermarking keys \cite{yakura2024empirical}, undermining the independence assumption that soundness relies on. Our threat model protects such honest users from false positives.
\item \textit{Tracking memorization or distillation.} Consider a scenario in which a new language model, trained on content scraped from the internet, may have memorized outputs from a different, watermarked model. Unforgeability guarantees that if the new language model's output verifies as watermarked under that watermarking key, then it is genuinely close to an output produced by the watermarked model. This provides strong evidence for claims of model distillation (as alleged in DeepSeek's use of GPT-4).
\end{itemize}
Recoverability, in turn, allows for tracking the modifications made to generated content produced by a certain watermarking key. This allows:
\begin{itemize}
    \item \textit{Finer-grained attribution.} For example, in the public verification key setting, an instructor may recover AI-generated text that a student approximately copied, in order to determine whether that student used AI in a permissible manner. Unforgeable recoverability ensures that the student truly observed edit the recovered AI-generated text. 
    \item \textit{Tracing edit histories.} A verifier can reconstruct content that has been modified or built upon by multiple watermarkers by following the recovery chain using their respective verification keys. 
\end{itemize}

\subsection{Our results}
Our contributions fall under two categories: conceptual and technical. Conceptually, we introduce new public and secret key notions of unforgeability and recoverability for watermarking schemes (Section \ref{sec:watermarking-primitive}). Towards attaining these properties, we define new cryptographic tools---robust and recoverable digital signatures (Section \ref{sec:rds})---which may be of independent interest. We show these can be constructed from property-preserving hashes (PPHs) \cite{boyle2019adversarially}, a primitive that, although previously studied, has largely been treated as a standalone tool. Our work demonstrates a novel use of PPH as a building block in cryptographic constructions, uncovering an unexplored connection to AI content provenance.

On the technical side, we develop a general framework for constructing robust and recoverable digital signatures, as well as recoverable, robust, and unforgeable watermarking schemes for language model. Our framework accomodates any general predicate defining closeness. We demonstrate its applicability by obtaining both public and secret key watermarking schemes for the Hamming predicate (Section~\ref{sec:watermark-instantiation}). Along the way, we identify and formalize a useful property of PPHs, which we call \emph{difference recovery} (Section~\ref{sec:difference-recovery}), and show that every PPH for the Hamming predicate must satisfy this property.

\paragraph{Defining robust, unforgeable, and recoverable watermarking.} In Section \ref{sec:watermarking-primitive}, we present our new notions of security for watermarking schemes. We model adversaries as probabilistic polynomial-time (PPT) algorithms that may interact with oracles. For an adversary $\cA$ with oracle access to $\M$, 
the \emph{transcript} of their interaction, denoted $\Trans(\cA^\M)$, is the set 
of all input--output pairs observed by $\cA$. For example, if $\M$ is a watermarking oracle $\Wat_{\sk}$, the transcript is the set of all the prompt-response pairs $(x, y)$ such that $\cA$ queried $\Wat_{\sk}(x)$ and received response $y$.

A robust and unforgeable watermarking scheme ensures that adversarially generated text verifies if and only if it is close to one of the adversary's observed responses. This can be decomposed into two conditions: the ``if'' direction corresponds to robustness, while the ``only if'' direction corresponds to unforgeability. The meaning of closeness is determined by a two-input predicate $\Phi$ that outputs 1 iff its inputs are deemed close.
\begin{definition}[Robust and unforgeable watermarking, informal version of Definitions \ref{def:robustness-general} and \ref{def:unforgeability-general}]
    For any closeness predicate $\Phi$, a watermarking scheme $(\Gen, \Wat, \Ver)$ for a language model is public key $\Phi$-robust and unforgeable if for any PPT adversary $\mathcal{A}$, 
    \[\Big(\Ver(\vk, \zeta)  = 1\Big) \quad \iff \quad \Big(\exists (x, y) \in \Trans(\cA^{\Wat_{\sk}}) \text{ such that } \Phi(\zeta, y) = 1\Big)\;,\]
    with probability at least $1-\negl(\lambda)$ over $(\sk, \vk) \leftarrow \Gen(1^\lambda)$ and $\zeta \leftarrow \mathcal{A}^{\Wat_{\sk}}(\vk)$. 
    
    Secret key robustness and unforgeability are the same, except $\cA$ gets oracle access to $\Ver_{\vk}$, rather than full access to $\vk$.
\end{definition}
Note that the adversary's output $\zeta \in \{0,1\}^*$ and watermarked outputs $y \in \{0,1\}^*$ may have different lengths. With this in mind, one can think of $\Phi(\zeta, y) = 1$ semantically as 
\[\text{``$\zeta$ is approximately contained in $y$,''}\]
meaning that there exists a substring of $y$ to which $\zeta$ is similar to. In other words, $\Phi(\zeta, y) = 1$ indicates that the adversary approximately copied a portion of the response $y$.

We also define \emph{recoverable} watermarking. Robust recovery states that a verifier can recover an original substring from \emph{every} watermarked response that an adversary approximately copies from. 
\begin{definition}[Robust recovery, informal version of Definition \ref{def:recoverability-general}]
     For any closeness predicate $\Phi$, a \emph{recoverable} watermarking scheme $(\Gen, \Wat, \Rec)$ for a language model is public key $\Phi$-robust if for any PPT adversary $\cA$, 
    \begin{align*}
&\forall (x, y) \in \Trans(\mathcal{A}^{\Wat_{\sk}}) \text{ such that } \Phi(\zeta, y) = 1, \\
& \Rec(\vk, \zeta) \text{ contains a substring of } y \text{ that is the same size as $\zeta$, up to some loss,}
\end{align*}
with probability at least $1-\negl(\lambda)$ over $(\sk, \vk) \leftarrow \Gen(1^\lambda)$ and $\zeta \leftarrow \mathcal{A}^{\Wat_{\sk}}(\vk)$.
\end{definition}
We require that any recovered substring be sufficiently long relative to the adversary’s output, meaning that $\Rec(\vk,\zeta)$ contains a substring of size at least $(|\zeta| - \mathrm{loss})$ for a prescribed loss parameter. 

Robust recoverability does not protect against false positives in the recovered list; this restriction is enforced separately by the following unforgeability requirement, which ensures that every recovered string corresponds to a genuine model output that was approximately copied by the adversary.
\begin{definition}[Unforgeable recovery, informal version of Definition \ref{def:unforgeable-recoverability-general}]
     For any closeness predicate $\Phi$, a \emph{recoverable} watermarking scheme $(\Gen, \Wat, \Rec)$ for a language model is public key $\Phi$-unforgeable if for any PPT adversary $\cA$, 
    \begin{align*}
    &\text{for every recovery } \xi \in \Rec(\vk, \zeta)\\
   & \exists (x, y) \in \Trans(\cA^\Wat) \text{ such that } \xi \text{ is a substring of } y \text{ and }\Phi(\zeta, y) = 1,
\end{align*}
with probability at least $1-\negl(\lambda)$ over $(\sk, \vk) \leftarrow \Gen(1^\lambda)$ and $\zeta \leftarrow \mathcal{A}^{\Wat_{\sk}}(\vk)$.
\end{definition}
Recoverability can thus be viewed as a \emph{witness} variant of robustness: rather than merely deciding whether a given text $\zeta$ is watermarked, $\Rec$ outputs a certificate justifying that decision---namely a list of model-generated texts whose similarity to $\zeta$ can be verified. Any recoverable watermarking scheme that satisfies robustness and unforgeability immediately yields a standard robust and unforgeable watermarking scheme: one can simply verify content whenever the recovery algorithm recovers any text, $\Ver(\vk, \zeta) := \big(\Rec(\vk, \zeta) \neq \bot \big).$

\paragraph{\textbf{Realizing our definitions.}} 
Our main technical result is a general framework for constructing watermarking schemes that are recoverable, robust, and unforgeable. This framework relies on two building blocks. The first is \emph{robust steganography}, an abstraction which we observe is implicitly achieved by existing watermarking schemes. The second is \emph{robust and recoverable digital signatures}, a new notion that we introduce and construct.

\itulineparagraph{Robust block steganography.} 

Robust block steganography schemes (Definition~\ref{def:block-steg}) are closely related to watermarking schemes, but additionally allow embedding messages into generated content while preserving output quality. The term \emph{block} refers to their fixed-size interface: for a given security parameter, the message size $k$ and output size $n$ are fixed, and the scheme embeds messages $m \in \{0,1\}^k$ into \emph{blocks} $y \in \{0,1\}^n$.

More formally, for message and block sizes $k,n : \mathbb{N} \to \mathbb{N}$, a block steganography scheme for a language model consists of algorithms $(\Gen, \Embed, \Dec)$. The key-generation algorithm samples secret and decoding keys jointly as $(\sk,\dk) \leftarrow \Gen(1^\lambda)$. The embedding algorithm $\Embed(\sk,\pi,m) \in \{0,1\}^{n(\lambda)}$ takes a context $\pi \in \{0,1\}^*$ and a message $m \in \{0,1\}^{k(\lambda)}$ and outputs a block $y \in \{0, 1\}^{n(\lambda)}$. The decoding algorithm $\Dec(\dk,\zeta) \in \{0,1\}^{k(\lambda)} \cup \{\perp\}$ outputs either a decoded message or a failure symbol.

Robustness with respect to a closeness predicate $\Phi$ requires that if a block $\zeta \in \{0, 1\}^{n(\lambda)}$ (chosen by an efficient adversary) is close to a valid embedding $y$ of a message $m$, then $\zeta$ will decode to that message,
\[\Big(\exists \big((\pi, m), y\big) \in \Trans(\cA^{\Embed_{\sk}}) \text{ such that } \Phi(\zeta, y) = 1\Big) \implies \Big(\Dec(\dk, \zeta) = m\Big)\;,\]
Block steganography should also preserve the quality of generated text. We enforce this by requiring they satisfy undetectability \cite{christ2024undetectable} with respect to the underlying language model\footnote{Informally, this ensures that no efficient adversary, without the watermarking and verification keys, can distinguish an oracle producing embedded outputs $\Embed_{\sk}$ from one producing unembedded outputs according to the underlying language model.}. 

We emphasize that robustness and undetectability are the only properties we require. In particular, block steganography schemes need not prevent false positives nor support recovery, and therefore do not provide unforgeability or recoverability. This matches existing constructions~\cite{christ2024pseudorandom,alrabiah2024ideal,fairoze2023publicly}, which achieve neither\footnote{These constructions do satisfy soundness, but our results hold even when using an unsound block steganography scheme.}. Nonetheless, we show that block steganography can be generically composed with robust and recoverable digital signatures to obtain watermarking schemes that are simultaneously robust, recoverable, and unforgeable.

\itulineparagraph{Robust signatures.} We introduce new notions of robust and recoverable signatures (Section \ref{sec:rds}). A robust signature scheme verifies on an adversarially chosen message-signature pair $(\zeta, \sigma)$ if and only if $\zeta$ is \emph{close} to a message $y$ that was previously signed with $\sigma$. A signature scheme is further said to be \emph{recoverable} if, from any adversarially chosen message-signature pair $(\zeta, \sigma)$, the recovery algorithm recovers a message $y$ if and only if (1) $\zeta$ is close to $y$ and (2) $y$ is a message that was previously signed with $\sigma$.
\begin{definition}[Robust/Recoverable signatures, informal version of Definitions \ref{def:rds-formal} and \ref{def:rds-recoverability-formal}]
    For any closeness predicate $\Phi$, a digital signature scheme $(\Gen, \Sign, \Ver)$ is $\Phi$-robust if for any PPT adversary $\mathcal{A}$, 
    \[\Big(\Ver(\pk, \zeta, \sigma) = 1\Big)  \iff  \Big(\exists y \text{ such that }(y, \sigma) \in \Trans(\cA^{\Sign_{\sk}}) \text{ and } \Phi(\zeta, y) = 1\Big) \;.\]
 A digital signature scheme $(\Gen, \Sign, \Rec)$ is further said to be $\Phi$-recoverable if
 \[\Big(\Rec(\pk, \zeta, \sigma) \text{ outputs }y\Big)   \iff \Big((y, \sigma) \in \Trans(\cA^{\Sign_{\sk}}) \text{ and } \Phi(\zeta, y) = 1\Big)\;.\]
  Here, $(\sk, \pk) \gets \Gen(1^\lambda)$ and $(\zeta, \sigma) \gets \mathcal{A}^{\Sign_{\sk}}(\pk)$,
 \end{definition}
We refer to the reverse direction of the above statements—--namely, that  $(\zeta,\sigma)$ verifies (or recovers) only if $\zeta$ is close to a message $y$ signed with $\sigma$—--as \emph{strong} unforgeability. Strongly unforgeable robust signatures generalize standard strong UF-CMA digital signatures \cite{bellare2007two}, which are recovered as the special case where $\Phi$ is the equality predicate. The term \emph{strong} reflects the fact that this notion guarantees not only that signatures cannot be forged on unsigned messages, but also that no \emph{new} signatures can be produced even for messages that were previously signed under a different signature.
\itulineparagraph{Informal guarantees of our framework.} Using these two tools, we design a general approach for constructing robust and unforgeable watermarking schemes. To understand its guarantees, it will be helpful to think of a string $y \in \{0,1\}^*$ as being broken into consecutive, non-overlapping $n$-bit \emph{blocks}
$(y_1, \ldots, y_{\lfloor |y|/n \rfloor})$, where $y_t$ denotes the $t$-th contiguous, non-overlapping $n$-bit substring of $y$. We make this notation precise in Section~\ref{sec:notation}.

At a high level, for a security parameter $\lambda$, our construction produces watermarked text $y \leftarrow \Wat_{\sk}(x)$ by repeatedly generating blocks $y_t$ of size $n = n(\lambda)$ using a block steganography scheme. Each block embeds the robust signature $\sigma_{t-1}$ of the preceding block $y_{t-1}$. Verification on candidate text $\zeta = (\zeta_1, \ldots, \zeta_{\floor{|\zeta|/n}})$ proceeds by attempting to decode a signature from each block $\zeta_t$ and verifying it against the preceding block $\zeta_{t-1}$\footnote{Section~\ref{sec:wm-overview} contains a more detailed overview.}. Intuitively, this means that $\zeta$ will verify if and only if its blocks are close to a \emph{sequence} of back-to-back blocks from a watermarked response $y$.

We capture this intuition using the \emph{every-block-close} lifting of a predicate $\Phi$, $\EBC{\Phi}{n}$. This predicate deems that ``$\zeta$ is approximately contained in $y$'' if it is blockwise $\Phi$-close to a consecutive sequence of blocks in $y$,
 \[
\EBC{\Phi}{n}(\zeta, y) :\;
\left(
\begin{aligned}
&\text{Parse } \zeta \text{ and } y \text{ into } n\text{-bit blocks } (\zeta_1, \ldots, \zeta_{\floor{|\zeta|/n}}) \text{ and } (y_1, \ldots, y_{\floor{|y|/n}}).\\
&\text{Output 1 iff } \exists t \in \NN \text{ such that } \forall j \in [\floor{|\zeta|/n}], \;\Phi(\zeta_j, y_{t+j-1}) = 1.
\end{aligned}
\right).
\]
Figure \ref{fig:ebcfig} captures the predicate in action. 
\begin{figure}[htbp]
    \centering
    \includegraphics[width=\textwidth]{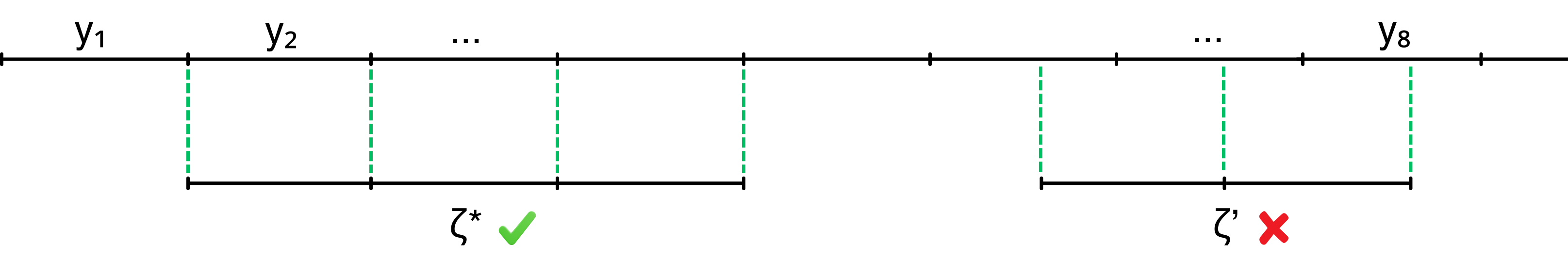} 
    \caption{Illustration of $\EBC{\Phi}{n}$-closeness. Green dotted lines indicate $\Phi$-closeness. The string $\zeta^*$ is $\EBC{\Phi}{n}$-close to $y$ because its blocks are $\Phi$-close to three consecutive blocks of $y$, whereas $\zeta'$ may be far: although $\zeta'$ is close to a substring of $y$ of length $2n$, this substring is not aligned with two \emph{blocks} of $y$.
    \label{fig:ebcfig}}
\end{figure}

By combining a $\Phi$-robust block watermarking scheme, and a $\Phi$-robust and recoverable digital signature scheme, we obtain a watermarking scheme that is recoverable, robust, and unforgeable with respect to the every-block-close lifting of  $\Phi$. Whether the watermark is public or secret key depends on the underlying block steganography scheme.
\begin{theorem}[General watermarking framework, informal version of Theorem \ref{thm:watermark-main}]
\label{thm:watermark-informal}
Let $Q$ be any language model, $\Phi$ be any closeness predicate, and $n : \mathbb{N} \rightarrow \mathbb{N}$ be any polynomial block size. Assume there exists both
\begin{itemize}
\item A (secret or public key) steganography scheme for $Q$ that is robust to $\Phi$ and has block size $n$.
\item A digital signature scheme that is robust to $\Phi$.
\end{itemize}
Then there exists a watermarking scheme for $Q$ that is (secret or public key) robust and unforgeable\footnote{Against adversaries outputting sufficiently long strings $\zeta$.} with respect to the closeness predicate $\EBC{\Phi}{n}$. If the digital signature scheme is additionally recoverable, then there exists a recoverable watermarking scheme for the same predicate.
\end{theorem}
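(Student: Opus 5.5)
The plan is to build the chained scheme sketched above and reduce each property to a property of a building block. Let $(\Gen_S,\Embed,\Dec)$ be the $\Phi$-robust block steganography scheme with block size $n$ and message size $k$. Let $(\Gen_\Sigma,\Sign,\Ver_\Sigma,\Rec_\Sigma)$ be the $\Phi$-robust (and possibly recoverable) signature scheme on $n$-bit messages, with signature length at most $k$; I assume this can be arranged by choosing parameters.

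\textbf{Construction.}
\begin{itemize}
\item $\Gen$ samples both key pairs. It sets $\wk=(\sk_S,\sk_\Sigma)$ and $\vk=(\dk,\pk_\Sigma)$.
\item $\Wat(\wk,x)$ generates blocks $y_1,y_2,\dots$ in order. The block $y_t$ is produced as $\Embed(\sk_S,\pi_t,\sigma_{t-1})$, where $\pi_t=x\|y_1\cdots y_{t-1}$ and $\sigma_{t-1}\gets\Sign(\sk_\Sigma,y_{t-1})$. For $t=1$ it embeds $\sigma_0$, a signature on a fresh random nonce or a fixed dummy.
\item $\Ver(\vk,\zeta)$ parses $\zeta$ into blocks. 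For every $t\ge 2$ it decodes $\sigma'_{t}=\Dec(\dk,\zeta_{t})$ and checks $\Ver_\Sigma(\pk_\Sigma,\zeta_{t-1},\sigma'_{t})=1$. It accepts iff all checks pass, so $\zeta$ must have at least two blocks; this is the ``sufficiently long'' caveat.
\item $\Rec$ runs $\Rec_\Sigma(\pk_\Sigma,\zeta_{t-1},\sigma'_t)$ on each consecutive pair. If every pair succeeds, it outputs the concatenation of the recovered blocks. This loses at most the last block and the unaligned tail, which gives the loss parameter.
\end{itemize}
Undetectability follows by a hybrid. We replace the $\Embed_{\sk_S}$ oracle with honest sampling from $Q$; the signatures are then simulated with a locally generated $\sk_\Sigma$, since the signing key is independent of the steganography keys.

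\textbf{Robustness.} Suppose $\EBC{\Phi}{n}(\zeta,y)=1$ for some transcript response $y$. Then each block $\zeta_j$ is $\Phi$-close to $y_{t+j-1}$.
\begin{itemize}
\item Block-steganography robustness gives $\Dec(\dk,\zeta_j)=\sigma_{t+j-2}$. The reduction simulates the watermarking oracle using its own $\Embed$ oracle and signing key, and guesses the offending block index, losing a polynomial factor.
\item Signature robustness (the ``if'' direction), applied to the pair $(\zeta_{j-1},\sigma_{t+j-2})$, gives verification, because $\sigma_{t+j-2}$ was issued on $y_{t+j-2}$ and $\Phi(\zeta_{j-1},y_{t+j-2})=1$.
\end{itemize}
The same argument with recoverability yields $\Rec_\Sigma=y_{t+j-2}$ for each $j$. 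Concatenating these recovers a substring of $y$.

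\textbf{Unforgeability.} Suppose $\Ver(\vk,\zeta)=1$. Strong unforgeability of the signature scheme says that for each $j$ there is a signed pair $(y^{(j)},\sigma'_{j+1})$ in the signing transcript with $\Phi(\zeta_j,y^{(j)})=1$. The reduction embeds the challenge $\pk_\Sigma$, simulates all of $\Wat$ with its own steganography keys, and, in the secret-key setting, answers $\Ver$ queries with $\dk$ plus the public $\pk_\Sigma$. We must then show that the blocks $y^{(j)}$ are \emph{consecutive blocks of a single response}.

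\textbf{Main obstacle.} Signatures alone do not bind a block to its position, so an adversary could splice closely-copied blocks from different responses or positions. My fix is to sign not just $y_{t-1}$ but the pair $(y_{t-1},\sigma_{t-2})$, or equivalently to include a hash of the previous signature inside the signed message. If we keep only $\Phi$ on the text part and require exact match on the appended tag, a verified chain forces $\sigma'_{j+1}$ to be the unique signature issued after $\sigma'_j$ in the same response. Strong unforgeability is what rules out fresh signatures on old tags. This requires the predicate to be extended to $\Phi'((a,s),(b,s'))=\Phi(a,b)\wedge s=s'$ and the robust signature to be invoked for $\Phi'$; alternatively, I would check whether the paper's signature scheme is built to tolerate appended exact-match fields.

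Given the chaining, every verified $\zeta$ is every-block-close to consecutive blocks of one response $y$. Unforgeable recovery follows the same way, since each output of $\Rec_\Sigma$ is exactly such a signed block.
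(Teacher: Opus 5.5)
Your proposal has a genuine gap in the unforgeability step. You handle your ``main obstacle'' by changing what is signed to $(y_{t-1},\sigma_{t-2})$ and invoking a scheme that is robust for the augmented predicate $\Phi'$. The hypothesis only supplies a $\Phi$-robust signature on $n$-bit blocks, and you leave open how to get $\Phi'$-robustness from it. Your splice-freeness claim also tacitly needs that no signature value is ever issued on two different messages.

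More importantly, the obstacle is misdiagnosed: the block steganography already enforces position binding. In the construction, $y_s$ embeds $\sigma_{s-1}=\Sign(y_{s-1})$. The first block $y_1$ embeds a uniformly random string, and verifying against that string would itself be a forgery. Now suppose $\zeta_{j+1}$ is $\Phi$-close to a response block $y_s$. Block robustness forces $\Dec(\zeta_{j+1})=\sigma_{s-1}$, the signature of the predecessor of $y_s$ in the \emph{same} response. If $\zeta_j$ verifies against this signature, strong unforgeability gives a signed message carrying $\sigma_{s-1}$ that is $\Phi$-close to $\zeta_j$. The unique-signatures property (Definition~\ref{def:unique-signatures}, which the PPH-based scheme has automatically) rules out that this message differs from $y_{s-1}$ while $\zeta_j$ is far from $y_{s-1}$. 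Starting from the last certified block and walking backward then yields consecutive blocks of a single response; this is the chain-break argument of Lemma~\ref{lemma:bad-uf}. A splice across responses fails because the signature decoded from the later block signs the wrong predecessor.

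For Theorem~\ref{thm:watermark-main} itself, of which the statement is the informal version, no chaining is needed at all. Unforgeability there is only $(\EBC{\Phi}{n},n)$, and it follows immediately from strong unforgeability because the only messages ever signed are response blocks.

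Three further points fail as written.
\begin{enumerate}
\item The last block of $\zeta$ is never certified. It only has to decode to a valid signature, and block steganography gives no soundness. So your concluding claim that every verified $\zeta$ is every-block-close to a single response is false for $\zeta_\ell$. Only the first $\ell-1$ blocks can be certified; see Remark~\ref{rem:gap} and the $(r-1)n$ versus $rn$ in Theorem~\ref{thm:r-block-watermark-full}.
\item Embedding a signature on a fixed dummy $d$ in the first block puts a non-response message into the signing transcript. Then $(d,y_1)$ verifies and $\Rec$ outputs $d$, which violates unforgeable recoverability (Definition~\ref{def:unforgeable-recoverability-general}). The paper's uniformly random first message avoids this.
\item Your verifier parses $\zeta$ at a fixed alignment and requires every consecutive pair to pass. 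Surrounding the copied portion with unrelated text, or offsetting it from the block alignment, makes it reject. That meets only the whole-string informal notion, not the substring-based robustness of Definition~\ref{def:robustness-general}. The paper handles this by sliding a $2n$-bit (or $rn$-bit) window over every offset and accepting if any window verifies.
\end{enumerate}
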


\begin{remark}[Subtlety between robustness and unforgeability]
\label{rem:gap}
While the informal theorem suggests that an input $\zeta$ verifies if and only if it is $\EBC{\Phi}{n}$-close to a watermarked response, there is a subtle distinction in the formal guarantees:
\begin{itemize}
\item For \emph{robustness}, verification succeeds if the \emph{entire} string 
\((\zeta_1, \ldots, \zeta_{\lfloor |\zeta|/n \rfloor})\) is $\EBC{\Phi}{n}$-close to a watermarked response $y$.
\item For \emph{unforgeability}, verification succeeds only if the substring consisting of all but the last block, 
\((\zeta_1, \ldots, \zeta_{\lfloor |\zeta|/n \rfloor - 1})\), is $\EBC{\Phi}{n}$-close to a watermarked response $y$.
\end{itemize}
This gap is made explicit in our full Theorems (Theorems~\ref{thm:watermark-main} and~\ref{thm:r-block-watermark-full}).
\end{remark}
\itulineparagraph{Instantiating the framework.} We instantiate our framework to obtain a Hamming robust and unforgeable watermarking scheme. Prior work has already constructed both secret and public key robust block steganography schemes for any  language model with sufficiently high entropy, when the predicate $\Phi$ is the Hamming predicate, which indicates if its inputs are close in Hamming distance. In particular, assuming the existence of ideal pseudorandom codes that are robust to a constant fraction of Hamming errors---shown by Alrabiah et al.~\cite{alrabiah2024ideal} to follow from the subexponential hardness of exact learning parity with noise~\cite{jain2012commitments}---the watermarking construction of Christ and Gunn \cite{christ2024pseudorandom} yields a \emph{secret key} steganography scheme that is robust to a constant fraction of Hamming errors (Theorem \ref{thm:secret-key-block-wm}). Separately, in the random oracle model, Fairoze et al.~\cite{fairoze2023publicly} construct a watermarking scheme for high-entropy language models, which can be viewed as a \emph{public-key} steganography scheme that is robust to a subconstant fraction of Hamming errors (with a caveat, see the predicate $\Phi_\delta$ defined in Theorem~\ref{thm:public-key-block-wm}). 

Finally, in this work we show that robust and recoverable digital signatures for a constant fraction of Hamming errors can be constructed under standard assumptions, namely the existence of collision-resistant hash functions and strong UF-CMA digital signatures (Corollary~\ref{cor:recoverable-signature-concrete}).

Using these results, we obtain both secret and public key recoverable, robust, and unforgeable watermarking schemes with respect to the every-block-close lifting of the Hamming predicate (Corollaries \ref{cor:sk-watermark-concrete} and \ref{cor:pk-watermark-concrete}).
\paragraph{Constructing robust and recoverable signatures.} 
In \refSc{sec:rds-construction}, we construct robust and recoverable digital signatures with respect to a constant fraction of Hamming errors. We do so by instantiating the following general framework, which leverages property-preserving hashing, introduced by Boyle et al. \cite{boyle2019adversarially}.

\itulineparagraph{Property-preserving hashing} For any predicate $\Phi$ and hash size $k : \mathbb{N} \rightarrow \mathbb{N}$, a $\Phi$-property preserving hash family (PPH) is a hashing scheme $(\Sample, \Eval)$, that allows computing the predicate value of two inputs using only their hashes of length $k(\lambda)$. That is,
\[\Eval(h, h(y), h(y')) = \Phi(y, y')\]
with probability at least $1 - \negl(\lambda)$ over $h \leftarrow \Sample(1^\lambda)$ and $(y, y') \leftarrow \cA(h)$. Importantly, $\Eval$ is adaptively secure.

We also observe that, although not explicitly stated, current PPH constructions for the Hamming predicate $\Ham_\delta$ \cite{fleischhacker2022property,holmgren2022nearly} also allow recovering the bitwise difference $\Delta = x \,\oplus\, x'$ from a hash $h(x)$ and an input $x'$ such that $\Phi(x, x') = 1$\footnote{In fact, these constructions satisfy a stronger property, allowing recovery of the bitwise difference from just the hashes $h(x)$ and $h(x')$. However, we show that the weaker property, where the recovery algorithm gets $x'$ in the clear, suffices for watermarking.}. We formalize this as \emph{difference recovery} and show that this is no coincidence: every PPH for the Hamming predicate must satisfy it (Proposition~\ref{prop:pph-recover-app}).

\begin{definition}[Difference recovery, informal version of Definition \ref{def:PPH-recoverability}]
A PPH $(\Sample, \Eval)$ for a predicate $\Phi$ is said to satisfy \emph{difference recovery} if there exists a PPT recovery algorithm $\Rec$ such that
\begin{align*}
    \Big(\Phi(x, x') = 1\Big) \quad \implies \quad \Big(\Rec(h, h(x), x') = x' \oplus x\Big)\;,
\end{align*}
holds with probability at least $1 - \negl(\lambda)$ over $h \leftarrow \Sample(1^\lambda)$ and $(x, x') \leftarrow \cA(h)$.  
\end{definition}
Difference recovery is an unexplored aspect of PPHs, and our work illustrates its utility through a compelling application: recoverable watermarking. 
\itulineparagraph{General framework for robust signatures.} We show that any standard strongly unforgeable digital signature \cite{bellare2007two} can be supplemented with a $\Phi$ property preserving hash (with difference recovery) in order to build a $\Phi$ robust (recoverable) digital signature.
\begin{theorem}[General robust signature framework, informal statement of Theorem \ref{thm:rds-appdx-main}]
Let $\Phi$ be any closeness predicate and let $k, \ell : \mathbb{N} \rightarrow \mathbb{N}$ be any polynomials.
Assume there exists 
\begin{itemize}
    \item A $\Phi$-PPH with hash size $k$.
    \item A strongly unforgeable digital signature scheme with signature size $\ell$.
\end{itemize}
Then there exists a $\Phi$-robust digital signature scheme with signature size $k + \ell$. Moreover, if the PPH is difference recovering, then the signature scheme is $\Phi$-recoverable.
\end{theorem}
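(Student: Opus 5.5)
The plan is to build the robust signature as a standard signature on a property-preserving hash. Key generation samples $h \leftarrow \Sample(1^\lambda)$ for the $\Phi$-PPH together with $(\sk', \pk') \leftarrow \Gen'(1^\lambda)$ for the strongly unforgeable scheme. It outputs $\sk = (\sk', h)$ and $\pk = (\pk', h)$. To sign $y$, compute $d = h(y) \in \{0,1\}^{k}$ and $\tau \leftarrow \Sign'(\sk', d)$, and output $\sigma = (d, \tau)$; its length is $k + \ell$. Verification of $(\zeta, \sigma = (d, \tau))$ accepts iff $\Ver'(\pk', d, \tau) = 1$ and $\Eval(h, d, h(\zeta)) = 1$. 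For recovery, after the same signature check we run the difference-recovery algorithm to get $\Delta = \Rec_{\mathsf{PPH}}(h, d, \zeta)$, set $y' = \zeta \oplus \Delta$, and output $y'$ iff $h(y') = d$ and $\Eval(h, d, h(\zeta)) = 1$. If the predicate is asymmetric in its arguments, the argument order of $\Eval$ is chosen to match $\Phi(\zeta, y)$.

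\emph{Robustness (forward direction).} Suppose the adversary outputs $(\zeta, \sigma)$ with $(y, \sigma) \in \Trans(\cA^{\Sign_{\sk}})$ and $\Phi(\zeta, y) = 1$. Then $\sigma = (h(y), \tau)$ is an honestly generated signature, so $\Ver'$ accepts by correctness. The remaining concern is $\Eval(h, h(y), h(\zeta)) = \Phi(y, \zeta) = 1$. Here we build a PPH adversary $\cB(h)$: it samples the signature keys itself, simulates $\cA$, and outputs the pair $(y, \zeta)$. Because $\cB$ can simulate $\Sign_{\sk}$ given $h$, adaptive PPH security gives correctness with all but negligible probability. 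One technicality is that $\cB$ must pick which transcript entry $y$ to output; it guesses uniformly among the polynomially many queries, losing only a polynomial factor. For recovery, the same reduction applied to the difference-recovery game yields $\Delta = \zeta \oplus y$, hence $y' = y$, and the output check $h(y') = d$ passes.

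\emph{Strong unforgeability (reverse direction).} Suppose $(\zeta, (d, \tau))$ verifies. First, $(d, \tau)$ must equal a pair $(h(y), \tau)$ returned by some signing query $y$. Otherwise $(d, \tau)$ is a new message–signature pair for $\Sign'$, and a reduction that samples $h$ itself and forwards the hashes $h(y)$ to its own signing oracle breaks strong UF-CMA. So $d = h(y)$ for some queried $y$ with $\sigma$ in the transcript. It remains to show $\Phi(\zeta, y) = 1$. We again apply PPH adaptive security via a reduction that outputs $(y, \zeta)$, guessing the index of $y$. This gives $\Phi(y, \zeta) = \Eval(h, h(y), h(\zeta)) = 1$.

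For recovery, the output $y'$ satisfies $h(y') = d = h(y)$. Collision resistance, which is implied by PPH security on adversarially chosen pairs for any nontrivial $\Phi$ (or else we assume it directly), gives $y' = y$. Thus every recovered message was signed with $\sigma$ and is $\Phi$-close to $\zeta$.

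The main obstacle is the collision step for recovery. Deriving collision resistance from PPH security requires a pair distinguishable by $\Phi$: if $h(y) = h(y')$ and some $z$ has $\Phi(y, z) \ne \Phi(y', z)$, then $\Eval$ must err on one of the two pairs. That $z$ may not be efficiently findable, so the first thing to try is adding an explicit collision-resistant hash $H(y)$ inside the signed message. This slightly enlarges the signature beyond $k + \ell$, which can be absorbed into $k$, consistent with the paper's concrete reliance on CRHFs.
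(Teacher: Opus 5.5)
Your construction, your robustness argument, and your strong-unforgeability argument match the paper's. In the last of these, either $(d,\tau)$ is a fresh pair for $\DSS$, or $d=h(y)$ for a queried $y$ and PPH security forces $\Phi(\zeta,y)=1$. The gap is in unforgeable recoverability. There you try to conclude $y'=y$ from $h(y')=h(y)$, which needs collision resistance. As you yourself note, collision resistance does not follow from PPH security in general. Your patch of signing an extra CRHF value adds an assumption that is not in the statement. It also makes the signature size $k+\ell+|H(y)|$ rather than $k+\ell$. So as written, the proposal does not prove the theorem as stated.

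The missing observation is that collision resistance is not needed, because difference recovery already pins down the output.

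\begin{itemize}
\item Recovery outputs something only after $\RVrfy$ accepts. By the strong unforgeability you just proved, except with negligible probability there is a queried $y$ with $(y,\sigma)\in\Pi$, $d=h(y)$, and $\Phi(\zeta,y)=1$.
\item Consider a reduction that samples the $\DSS$ keys itself, simulates $\cA$ given $h$, and outputs the pair $(y,\zeta)$. It can locate $y$ because $\Phi$ is efficiently computable, or it can guess it.
\item This reduction is a valid adversary in the difference-recovery game. Hence $\PPH.\Recover(h,h(y),\zeta)=y\oplus\zeta$ with overwhelming probability, i.e.\ $y'=y$. The argument-order caveat you already flagged applies here, and it is irrelevant for symmetric predicates such as $\Ham_\delta$.
\end{itemize}

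This is exactly the step you already used for robust recovery, and it is how the paper argues in Corollary~\ref{cor:ham-pph-to-recoverable}. With it, your extra check $h(y')=d$ is harmless but unnecessary. The scheme then keeps signature size $k+\ell$ under only the PPH and $\DSS$ assumptions.
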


Instantiating this framework with the difference-recovering Hamming PPH of Holmgren et al.~\cite{holmgren2022nearly}, which relies on a special homomorphic hash function, yields a signature scheme that is robust and recoverable against a constant fraction of Hamming errors, with signature size essentially a constant fraction of the message length (Corollary~\ref{cor:recoverable-signature-concrete}). For our purposes, however, we observe that it suffices to use a standard collision resistance hash function rather than the homomorphic one, giving us a Hamming recoverable digital signature scheme from CRHFs and strong UF-CMA signatures.

\subsection{Related work}
\label{sec:related-work}

\paragraph{Watermarking schemes for generative models.} Watermarking text has a history that predates the current era of large language models, with early work~\cite{topkara2005natural,atallah2001natural,atallah2002natural} focused on embedding signals into \emph{fixed} objects in a way that is hard to remove, while preserving their perceived quality. More relevant to our setting, which views language models as \emph{randomized sampling algorithms}, is the recent line of work~\cite{aaronson2022my,kirchenbauer2023watermark,kirchenbauer2023reliability,kuditipudi2023robust,christ2024undetectable,zhao2023provable} that develops watermarking techniques from this perspective, typically by embedding correlations with a watermarking key at the token level during the sampling process. 

While many existing methods are distortion-free at the level of a single response~\cite{aaronson2022my,kuditipudi2023robust}, they may introduce noticeable distortions when the watermarked model is queried repeatedly---across multiple rounds of interaction with arbitrary seed prompts. The strongest notion of model quality preservation that guards against such cumulative distortions, known as watermark \emph{undetectability}, was introduced by Christ et al.~\cite{christ2024undetectable}. 

Among undetectable watermarking schemes, only a select few guarantee unforgeability~\cite{fairoze2023publicly,christ2024pseudorandom}. These methods embed digital signatures into text in a computationally undetectable manner, thereby achieving both undetectability and unforgeability. However, these watermarks lack robustness. Our robust digital signatures can be integrated into either of these schemes. Recently, undetectable watermarking schemes for \emph{image} models have been proposed by Gunn et al.~\cite{gunn2024undetectable}, representing a notable step toward achieving undetectability in generative models beyond discrete autoregressive sampling. For a broader overview of the field, we refer the reader to the surveys~\cite{liu2024survey,zhao2024sok}.

\paragraph{Steganography.} Steganography, the practice of hiding messages within content drawn from some distribution (referred to as a channel), has a long history of scientific study. Early work by Cachin \cite{cachin1998information} considered a noiseless setting, where a message only needed to be hidden within unaltered content sampled from the channel. This study was motivated by Simmons’ Prisoners’ Problem \cite{simmons1984prisoners}, in which two prisoners in separate cells wish to communicate hidden messages by embedding them into letters visible to the warden. Hopper et al. \cite{hopper2002provably} further studied this problem and introduced robust steganography. However, in these initial works, the decoder was often assumed to know the distribution from which the content was drawn.

In contrast, hiding messages in AI-generated content presents additional challenges. The decoder, who sees content later down the line, does not know the prompt used to generate it and therefore cannot infer the distribution it came from. Nonetheless, a recent line of work has shown that (robust) steganography is still possible in this more demanding setting. Zamir \cite{zamir2024excuse} demonstrated that, even without knowledge of the content distribution, messages can be hidden in unaltered content. Follow-up works \cite{christ2024pseudorandom,fairoze2023publicly} extend this result to the noisy setting, constructing schemes in which the message can be decoded even in the presence of adversarial errors.

\paragraph{Pseudorandom error-correcting codes.} Pseudorandom codes (PRC), introduced by Christ and Gunn~\cite{christ2024pseudorandom}, are error-correcting codes whose codewords are computationally indistinguishable from uniformly random strings to any observer lacking the decoding key. As objects that satisfy both pseudorandomness and error correctability, PRCs provide a natural foundation for constructing watermarking schemes that achieve both undetectability and robustness~\cite{christ2024pseudorandom,gunn2024undetectable,cohen2024watermarking,golowich2024edit}.

PRCs with varying robustness properties have been constructed under standard cryptographic assumptions, such as the subexponential hardness of learning parity with noise (LPN)~\cite{christ2024pseudorandom,ghentiyala2024new}. In particular, Golowich and Moitra~\cite{golowich2024edit} construct PRCs that are robust to a constant fraction of edits, including substitutions, insertions, and deletions, at the cost of requiring a larger alphabet. This yields watermarking schemes that are robust under edit distance.

However, the notion of robustness satisfied by previous PRC constructions is \emph{weak} in a different sense. The decoder's robustness guarantee holds only against corruptions that are generated \emph{independently} of the PRC keys. In particular, this requirement assumes that the corruption adversary has not observed any prior codewords. This lack of \emph{adaptive} robustness was highlighted by Cohen et al.~\cite{cohen2024watermarking}, who ask whether PRC messages remain robustly decodable even after the perturbation adversary (or error channel) has observed multiple codewords. Ideal PRCs, introduced by Alrabiah et al.~\cite{alrabiah2024ideal}, address this limitation by constructing PRCs that remain robust even against computationally bounded adversaries with black-box access to both encoding and decoding oracles. In this work, we use them to instantiate secret key robust block steganography schemes.  

\paragraph{Property-preserving hash functions.} We build our watermarking scheme on top of property-preserving hash functions (PPH), introduced by Boyle et al.~\cite{boyle2019adversarially}. Several PPH constructions have been proposed for the exact Hamming predicate under standard cryptographic assumptions~\cite{fleischhacker2022property,holmgren2022nearly,zhao2025privacy}. Particularly relevant to our work is the elegant construction by Holmgren et al.~\cite{holmgren2022nearly}, which achieves nearly optimal hash size. Their key idea of combining syndrome list-decoding of linear error-correcting codes with collision-resistant hash functions has prior roots in the study of error-correcting codes against computationally bounded noise~\cite{micali2010optimal,dodis2008fuzzy}.

Recent work by Bogdanov et al.~\cite{bogdanov2025adaptive} proposed and examined a new computational assumption that implies a PPH for the Gap-$\ell_2$ predicate over Euclidean space. This enables evaluation of whether the $\ell_2$ distance between a pair of real vectors is small or large based solely on their compressed digests. As such, these PPHs serve as \emph{adversarially} robust locality-sensitive hash functions for the Euclidean distance: compressive hash functions for which it is computationally hard to find input pairs whose true pairwise distances differ substantially from those inferred from their hashes.

PPHs are also related to the cryptographic notion of \emph{secure sketches}~\cite{dodis2008fuzzy}, which are practically relevant in biometric applications where the secret key (e.g., one’s biometric data) is inherently noisy and varies slightly with each use. Secure sketches enable recovery of a value $x$ given a \emph{sketch} $g(x)$ and a nearby input $x'$. This can be viewed as a relaxation of PPH, as it assumes access to one noisy input in the clear.

%% file: sections/technical-overview.tex
\textbf{}\section{Technical Overview}
We highlight the techniques we use to construct robust signatures (\refSc{sec:rds-overview}) and robust, unforgeable, and recoverable watermarks (\refSc{sec:wm-overview}).
\subsection{Robust and recoverable signatures from PPH} 
\label{sec:rds-overview}

Our construction produces a $\Phi$-recoverable signature scheme $(\Gen, \Sign, \Recover)$ from a $\Phi$-difference-recovering PPH, $\PPH[\Sample, \Eval, \Recover]$, and a strongly unforgeable digital signature scheme, $\DSS[\Gen, \Sign, \Ver]$~\cite{bellare2007two}. An analogous construction yields a robust (but not recoverable) digital signature from a standard PPH without difference recovery. 

To sign a message $y \in \{0,1\}^n$, we compute its property-preserving hash $h(y)$, then generate a signature of that hash $\tau \leftarrow \DSS.\Sign(\DSS.\sk, h(y))$. The robust signature is defined as $\sigma = (h(y), \tau)$. The public and secret keys are given by $\pk = (h, \DSS.\pk)$ and $\sk = \DSS.\sk$.

To verify this signature $\sigma = (h(y), \tau)$ against a candidate message $\zeta \in \{0,1\}^n$, we compute
\begin{align*}
    \Ver(\pk, \zeta, \sigma) &= \DSS.\Ver(\DSS.\pk, h(y), \tau) \wedge \PPH.\Eval(h,h(y),h(\zeta))\;.
\end{align*}
We formally describe our construction in Figure~\ref{alg:rds-construct}. We give the intuition behind our security proofs, which can be found in \refSc{sec:rds-construction}.
\paragraph{Robustness.} Let $\sigma = (z, \tau)$ be the robust signature of some message $y \in \{0, 1\}^n$. If an adversarially chosen message $\zeta$ is $\Phi$-close to $y$, then with overwhelming probability $\PPH.\Eval(h,z,h(\zeta)) = 1$, since $z = h(y)$, from PPH correctness, and $\DSS.\Ver(z, \tau) = 1$ from $\DSS$ correctness.
\paragraph{Unforgeability.} Suppose an adversary $\mathcal{A}$ forges a pair $(\zeta, \sigma')$ with inverse-polynomial probability, meaning $\Ver(\pk, \zeta, \sigma') = 1$, but $\zeta$ is $\Phi$-far from all messages $y \in \{0, 1\}^n$ associated with the signature $\sigma'$. We consider two cases and show that each leads to a contradiction.
\begin{enumerate}
    \item The forged $\sigma'$ is not one of the signatures observed by $\mathcal{A}$ through its queries to the robust signature oracle $\Sign_{\sk}$. Notice that responses from $\Sign_{\sk}$ are precisely the message-signature pairs $(z, \DSS.\Sign(\DSS.\sk, z))$ that $\cA$ observes from the underlying standard strongly unforgeable digital signature $\DSS$. If $\mathcal{A}$ outputs $\sigma' = (z', \tau')$ that verifies despite not being one of these pairs, then $(z', \sigma')$ breaks the strong unforgeability of $\DSS$.
    \item The forged signature $\sigma' = (h(y), \tau)$ was observed by $\cA$, meaning it is the signature of some message $y$ that is $\Phi$-far from the candidate message $\zeta$. In this case, $\PPH.\Eval(h, h(y), h(\zeta))$ detects, with overwhelming probability, that $y$ and $\zeta$ are $\Phi$-far. If it did not, the pair $(y, \zeta)$ would violate the security of the PPH. Consequently, the verifier rejects, $\Ver(\pk, \zeta, \sigma') = 0$, except with negligible probability, leading to a contradiction.
\end{enumerate}
\paragraph{Recoverability.}
For recoverability, we first (robustly) check $\Ver(\pk, \zeta, \sigma')$ to ensure $\sigma'$ is a signature of some $y$ close to $\zeta$. If the signature verifies, then output
\begin{align*}
    \Recover(\pk, \zeta, \sigma') &= \PPH.\Recover(h, h(y), \zeta) \xor \zeta\;.
\end{align*}
Otherwise, $\Recover(\pk, \zeta, \sigma') = \bot$. Notice that when an adversarially chosen $\zeta$ is $\Phi$-close to a previously signed message $y \in \{0, 1\}^n$, difference recovery of $\PPH$ ensures that $\PPH.\Recover(h, h(y),  \zeta)$ outputs the bitwise difference $y \oplus \zeta$ with overwhelming probability, so $\Recover(\pk, \zeta, \sigma')$ correctly outputs $y$.
\subsection{Watermarking using robust signatures and block steganography}
\label{sec:wm-overview}
\paragraph{Language models and blocks.} Following prior work \cite{gunn2024undetectable,christ2024pseudorandom,fairoze2023publicly}, we model a language model $Q : \{0, 1\}^* \rightarrow [0, 1]$ as a mapping that takes in a token sequence and outputs the probability that the next token is $1$\footnote{Without loss of generality, we consider a binary token alphabet. This assumption is standard in the watermarking literature; see \cite[Section~7.1]{christ2024pseudorandom} for a detailed discussion.}. Given a prompt $x \in \{0,1\}^*$, we obtain an $n$-bit \emph{response} $u = (u[1],u[2],\ldots, u[n]) \leftarrow \Response_n(x)$ by autoregressive sampling initialized with $x$. 

Throughout this section, we refer to the $n$-bit \emph{blocks} $(y_1, \ldots, y_{\floor{|y|/n}})$ of a string $y \in \{0,1\}^*$. Informally, $y_t$ denotes the $t$-th non-overlapping, $n$-bit substring of $y$.

\paragraph{A first attempt at unforgeable watermarking.} 
A natural approach to constructing unforgeable watermarks is to sign text as it is generated using a standard digital signature scheme and then embed that signature into subsequently generated text via a block steganography scheme. Later, the signature can be decoded and verified along with the text. This is the approach taken by existing constructions that consider unforgeability \cite{christ2024pseudorandom,fairoze2023publicly}, but it comes at the cost of robustness.

Specifically, consider a language model $Q$. Prior work \cite{christ2024pseudorandom,alrabiah2024ideal,fairoze2023publicly} demonstrates that block steganography schemes can be constructed for any model with sufficiently high entropy. For the purposes of this overview, we therefore assume that $Q$ has high entropy. We will attempt to watermark $Q$ using the approach outlined above.

Let $\lambda$ be the security parameter and $n = \poly(\lambda)$ be a block size. Let $\BWM[\Gen, \Embed, \Dec]$ be a $\Phi$-robust block steganography scheme for $Q$ with block size $n$. $\BWM$ can be either public or secret key. For the purposes of this overview, we will assume it's public key. Let $\DSS[\Gen, \Sign, \Ver]$ be a \emph{standard} digital signature scheme with message size $n$.

To generate verification and watermarking keys $(\vk, \sk)$, sample 
\[\DSS.\pk,\DSS.\sk) \gets\DSS.\Gen(1^\lambda), \quad  (\BWM.\dk, \BWM.\sk) \gets \BWM.\Gen(1^\lambda)\]
and let $\vk =$ $(\DSS.\pk,$ $\BWM.\dk)$, $\sk = (\DSS.\sk, \BWM.\sk)$. We can generate $2n$ bits of watermarked text as follows. On input prompt $x \in \{0,1\}^*$, the watermarking algorithm $\Wat(\sk, x)$ first samples $n$ bits normally from the underlying language model, $y_1 \gets \Response_n(x)$. Next, it signs this block and embeds the signature into the following block\footnote{Because robust block steganography schemes must be expanding in the message, the signature length must be smaller than the resulting block size, i.e., $|\sigma| \ll n$.}:
\[
\sigma \gets \DSS.\Sign(\DSS.\sk, y_1), \quad y_2 \gets \BWM.\Embed(\BWM.\sk, \pi = (x, y_1), \sigma)\;,
\]
and outputs the concatenation $(y_1, y_2)$. Undetectability of $\Wat$ follows from the undetectability of $\BWM$.

To verify a candidate text $\zeta \in \{0,1\}^{2n}$, the verification algorithm $\Ver(\vk, \zeta)$ first parses it into two blocks $\zeta_1, \zeta_2 \in \{0, 1\}^n$. It then tries to decode a signature from the second block and check if it verifies with the first,
\[\Ver(\vk, \zeta) = \DSS.\Ver(\DSS.\pk, \zeta_1, \sigma'), \text{ where } \sigma' \leftarrow \BWM.\Dec(\BWM.\dk, \zeta_2)\;.\]
Notice that this approach is unforgeable in that it only verifies $\zeta = (\zeta_1, \zeta_2) \leftarrow \cA^{\Wat_{\sk}}(\vk)$ when the first block $\zeta_1$ \emph{exactly} matches one of the blocks $y_1$ for which $\cA$ saw a signature for; otherwise, $\cA$ could break the unforgeability of $\DSS$ with the message $\zeta_1$. However, the scheme provides no robustness. Concretely, suppose an adversary $\cA^{\Wat_{\sk}}(\vk)$ queries its watermarking oracle and receives a response $(y_1, y_2)$. If $\cA$ modifies even a single bit of $y_1$ to produce $\zeta_1$, the string $(\zeta_1, y_2)$ can no longer verify, despite being nearly identical to an observed response. 
\paragraph{Robust signatures for both robustness and unforgeability.} To overcome the limitation outlined above, we instead embed a \emph{robust} signature into the generated text. Now, even if $\cA$ modifies $y_1$ to produce $\zeta_1$, so long as $\Phi(\zeta_1, y_1) = 1$, the signature---and therefore the watermark---can still verify.

Formally, let $\RDS[\Gen, \Sign, \Ver]$ be a \emph{robust} signature scheme for the same predicate $\Phi$ that $\BWM$ is robust to, and generate keys $(\vk, \sk)$ as in the standard digital signature approach, but using $\RDS.\Gen(1^\lambda)$ instead of $\DSS.\Gen(1^\lambda)$. We generalize the approach above to generate $rn$ bits for any $r \ge 2$. For the first block, $t = 1$, the algorithm $\Wat(\sk, x)$ invokes $y_1 \leftarrow \Response_n(x)$. For subsequent blocks $t \ge 2$, it updates the context to be the prompt and the text generated up to this point, $\pi = (x, y_1, \ldots, y_{t-1})$, and embeds the robust signature of the previous block:
\[
\sigma_{t-1} \gets \RDS.\Sign(\RDS.\sk, y_{t-1}), \quad y_t \gets \BWM.\Embed(\BWM.\sk, \pi, \sigma_{t-1})\;,
\]
The final output is the concatenation of blocks $(y_1, \ldots, y_r)$.
\paragraph{Watermark verification.}
We define watermark verification to operate on $rn$-bit strings for some $r \ge 2$. Given a candidate string $\zeta^* = (\zeta_1^*, \ldots, \zeta_r^*)$ composed of $r$ blocks of $n$-bits, the verification algorithm $\Ver_r(\vk, \zeta^*)$ checks whether the \emph{entire} chain of message-signature pairs verifies. Concretely, this means verifying that
\[\forall i \in [r-1], \;\RDS.\Ver(\RDS.\pk, \zeta_i^*, \sigma_{i}') = 1, \text{ where } \sigma_i' \leftarrow \BWM.\Dec(\BWM.\dk, \zeta_{i+1}^*)\;.\]
If $\RDS$ is additionally recoverable, we can define a corresponding recovery procedure $\Rec_r(\vk, \zeta^*)$ as follows.
\[\forall i \in [r-1],\;\xi_i\leftarrow \;\RDS.\Rec(\RDS.\pk, \zeta_i^*, \sigma_{i}'), \text{ where } \sigma_i' \leftarrow \BWM.\Dec(\BWM.\dk, \zeta_{i+1}^*)\;.\]
The procedure outputs a single string $(\xi_1, \ldots, \xi_{r-1})$ provided that none of the recovery attempts return $\bot$, else it outputs $\bot$.

\paragraph{Guarantees of our construction when $r=2$.}
Consider the simple case $r = 2$. To see that verification is robust, let $\zeta^* = (\zeta_1^*, \zeta_2^*)$ be a candidate text produced by an efficient adversary $\cA^{\Wat_{\sk}}(\vk)$. Say that $\zeta^*$ is approximately contained in some response $y \leftarrow \Wat_{\sk}(x)$ observed by $\cA$, meaning that $y$ contains two consecutive $n$-bit blocks $(y_t, y_{t+1})$ such that both $\Phi(\zeta_1^*, y_t) = 1$ and $\Phi(\zeta_2^*, y_{t+1}) = 1$. By the $\Phi$-robustness of $\BWM$, we can decode the signature $\sigma_t$ of $y_t$ from $\zeta_2^*$. Moreover, by the $\Phi$-robustness of $\RDS$, the pair $(\zeta_1^*, \sigma_t)$ will pass verification. Similarly, in this setting, the recovery algorithm outputs the first response block $y_t$.

Moreover, to see the watermark is unforgeable, notice that unforgeability of the digital signature says that if $\Ver_2(\vk, (\zeta_1^*, \zeta_2^*)) = 1 \text{ (or } \Rec_2(\vk, (\zeta_1^*, \zeta_2^*))= y_t)$ 
then $\zeta_1^*$ must be $\Phi$-close to one of the signed blocks $y_t$ observed by $\cA$. We formalize our robustness and unforgeability guarantees in Theorem \ref{thm:watermark-main}, with recoverability stated in Corollary \ref{cor:watermark-recoverable}.

Our construction does, however, exhibit a subtle gap. We can guarantee that verification succeeds when \textit{both blocks} $(\zeta_1^*, \zeta_2^*)$ are close to back-to-back response blocks $(y_t, y_{t+1})$, whereas we can only ensure that successful verification implies the \textit{first block} $\zeta_1^*$ is close to a single response block $y_t$. This asymmetry arises because we embed signatures directly into subsequent blocks: to decode the signature and ensure verification succeeds, both blocks must be close. Successful RDS verification, on the other hand, only certifies the closeness of the underlying message (the first block). We are optimistic that future strategies for embedding messages into AI-generated content could help narrow this gap.
\paragraph{Guarantees when $r > 2$.} In Theorem \ref{thm:r-block-watermark-full}, we show that $\Ver_r$ satisfies a stronger unforgeability guarantee when $r > 2$. Specifically, we show that if $\Ver_r\big(\vk, (\zeta_1^*, \ldots, \zeta_r^*)\big) = 1,$ then not only is each block $\zeta_i^*$ for $i \in [r-1]$ $\Phi$-close to \emph{some} response block, but moreover the blocks $(\zeta_1^*, \zeta_2^*, \ldots, \zeta_{r-1}^*)$ are each $\Phi$-close to the aligned blocks of a \emph{consecutive chain} $(y_t,y_{t+1}, \ldots, y_{t+r-2})$ present in a \emph{single response} $y$. 

This notion of closeness is captured by the every-block-close lifting of $\Phi$ (Definition~\ref{def:every-block-close}). In particular, we obtain an unforgeability guarantee of the following form: $(\zeta_1^*, \ldots, \zeta_r^*)$ verifies only if for there is some watermarked response $y$ such that $\EBC{\Phi}{n}\big((\zeta_1^*, \ldots, \zeta_{r-1}^*), y\big)=1$. We obtain a matching robustness guarantee: verification succeeds whenever there exists a response $y$ such that $\EBC{\Phi}{n}\big((\zeta_1^*, \ldots, \zeta_r^*), y\big)=1$.
\paragraph{Generalizing verification to arbitrary-sized strings.} In the discussion above, the algorithms $\Ver_r$ and $\Rec_r$ operate on inputs of fixed length $\zeta^* \in \{0, 1\}^{rn}$. Semantically, however, it is more natural for watermark verification to operate on inputs of arbitrary length. In particular, given $\zeta \in \{0,1\}^*$, verification should succeed if and only if some sufficiently long \emph{substring} $\zeta^*$ of $\zeta$ is approximately contained in a watermarked response. For example, if a student copies only a portion of an essay from a watermarked response, the verification algorithm should still verify the watermark when given the full essay.

To capture this more semantically meaningful notion, our formal security definitions (Definitions \ref{def:robustness-general} and \ref{def:unforgeability-general}) differ slightly from the informal ones given in this overview. As a concrete example, for a closeness predicate $\Phi$ and a substring-length $\ell : \mathbb{N} \to \mathbb{N}$, we define public-key $(\Phi,\ell)$-unforgeability as follows:

\[
\Big(\Ver(\vk, \zeta) = 1\Big) 
\;\; \implies \;\;
\Biggl(
\begin{array}{l}
\exists \zeta^* \preceq \zeta \text{ of size } \ell(\lambda), \\
\exists (x, y) \in \Trans(\cA^{\Wat_{\sk}}) \text{ such that } \Phi(\zeta^*, y) = 1
\end{array}
\Biggr)\;,
\]
 where $\zeta^* \preceq \zeta$ denotes that $\zeta^*$ is a substring of $\zeta$. 

Defining robustness and unforgeability in this way also has the technical advantage of addressing the gap noted in Remark~\ref{rem:gap}, as it allows each security property to be stated over substrings of different lengths. In particular, the construction outlined above can be easily adapted to achieve $(\EBC{\Phi}{n}, rn)$-robustness and $(\EBC{\Phi}{n}, (r-1)n)$-unforgeability. This can be done by defining a verifier $\Ver(\vk, \zeta)$ that takes as input an arbitrary-length string $\zeta \in \{0,1\}^*$ and performs $r$-block verification $\Ver_r(\vk, \zeta^*)$ on \emph{every} substring $\zeta^* \preceq \zeta$ of length $rn$, accepting if \emph{any} such substring verifies. We formalize this algorithm in Figure \ref{alg:chain-ver}.

\subsection{Discussion}
\label{sec:discussion}
We view our work as an initial step toward a broader line of investigation. Our framework treats robustness and undetectability as the traditional goals of watermarking, while focusing on introducing and achieving the new properties of unforgeability and recoverability without sacrificing the former. We demonstrate its effectiveness by instantiating language watermarking schemes for the every-block-close lifting of the Hamming predicate.

The resulting notion of robust signatures is versatile. Our general construction paradigm suggests avenues for improvement and indicates that our ideas can be extended to other modalities of generative AI. We address some concerns, as well as potential applications of our framework to other settings.

\paragraph{Why Ideal PRCs do not offer unforgeability?}
Ideal pseudorandom codes (PRCs) \cite{alrabiah2024ideal} do provide sharp decoding guarantees: any (adversarially chosen) $\zeta$ that is sufficiently far from all codewords decodes to $\perp$. At first glance, this suggests that ideal PRCs alone might yield secret key unforgeable watermarks. Indeed, a PRC can be seen as a \emph{secret key} robust and unforgeable watermarking scheme for the \emph{uniform} language model, where each bit is equally likely, as text generated by the model is indistinguishable from PRC codewords.

For general language models, however, embedding PRC codewords into text introduces errors that break these guarantees. When the embedding error is large, a string $\zeta$ that is close to some watermarked response $y$ may be far from the PRC codeword embedded in $y$, and a string $\zeta$ that is far from all valid responses $y$ may still be close to a codeword. Therefore, even if $\zeta$ is far from all responses $y$, if the embedding errors are large it can still verify as watermarked. Typically, this error is bounded by making an assumption on the entropy of the language models: a lower bound on entropy yields an upper bound on the errors incurred when embedding PRC codewords into text.

In contrast, our unforgeability guarantees are independent of the errors introduced by embedding, enabling protection against false positives without imposing entropy assumptions. Moreover, our framework is compatible with \emph{any} block steganography scheme, regardless of its guarantees against false positives, which allows us to achieve a watermarking scheme that is both public key robust and public key unforgeable. Finally, our approach supports recoverability without requiring access to the underlying language model or the original generation prompt.

\paragraph{About our Hamming guarantees.} Our final guarantee is a watermarking scheme for the every-block-close lifting of the Hamming predicate. We acknowledge that Hamming distance between binary token strings may not capture a wide class of watermark removal attacks. However, our general construction paradigm means any new improvement to block steganography and PPHs, e.g. improved error rates or support for different types of errors, will translate to improved unforgeable and recoverable watermarking. For example, we believe that robust signatures for a predicate we call \textit{GapEdit}, which outputs 1 when the edit distance between two strings is small, 0 when it is very large, and is undefined for intermediate distances, are within reach. We expect such signatures to be attainable by instantiating our construction with a secure sketch for edit-distance \cite{dodis2008fuzzy}.

\paragraph{Post-hoc vs.\ ad-hoc steganography.} 
We leverage ad-hoc steganography, where the message is embedded into the text during generation, to encode robust signatures directly into model outputs. In contrast, post-hoc steganography schemes, such as Google's SynthID image watermark \cite{gowal2025synthid}, embed the message after generation. That is, the embedder first generates $y \gets \Response(x)$ normally and then post-processes it to obtain $y' \approx y$, where a given message is embedded into $y'$. Post-hoc steganography schemes cannot guarantee undetectability, but often maintain, at least empirically, that the modified content $y'$ is ``perceptually'' similar to the original response $y$.

Robust signatures are a natural choice for the message to be embedded using a post-hoc steganography scheme. One can generate $y \gets \Response(x)$, compute its robust signature $\sigma \gets \Sign(\sk, y)$, and embed $\sigma$ into $y' \approx y$ using a post-hoc watermarking scheme. Later, the signature can be decoded from $\zeta \approx y'$ and verified via robust signature verification $\Ver(\pk, \zeta, \sigma)$. Note, however, that some robustness may be lost due to the modifications introduced during post-hoc steganography, since $\sigma$ is the robust signature of the original response $y$, and the message $\zeta$ is close to the post-processed output $y' \approx y$. In this paper, we instead focus on ad-hoc steganography due to its stronger theoretical guarantees. 

\paragraph{Robust and unforgeable image watermarking. } 
Fairoze et al. \cite{fairoze2025difficulty} argue that constructing post-hoc watermarking schemes for \emph{image models} that are both robust and unforgeable may be difficult. While our work focuses on language models, our results in fact suggest the opposite conclusion under the right abstraction.

In their work, they assume the existence of an embedding function $g : \mathbb{R}^n \rightarrow \mathbb{R}^m$ with $m < n$ that compresses images and preserves visual similarity: namely, that two images $y, y' \in \mathbb{R}^n$ are visually similar if and only if $g(y)$ and $g(y')$ are close under a chosen distance metric (they consider cosine distance). They show that by combining such an embedding with standard digital signatures and a post-hoc steganography scheme for images, they can construct a robust and unforgeable post-hoc watermarking scheme for image models. At a high level, their unforgeability guarantee is as follows: if a watermark is detected in an adversarially-chosen image $\zeta$, then the embedding $g(\zeta)$ must be close to the embedding of some watermarked image $g(y)$ observed by the adversary. 

However, they subsequently show that the assumed embedding function~$g$ may be overly idealized. In particular, they evaluate several candidate embedding functions (e.g., the forward pass of a trained deep network) and observe that these embeddings fail to satisfy the desired property. Specifically, it is easy to efficiently find images $y$ and $y'$ that are visually dissimilar, yet whose embeddings $g(y)$ and $g(y')$ are close in cosine distance. As a consequence, the unforgeability guarantee becomes vacuous: even if the underlying watermarking scheme is unforgeable, an adversary can produce an image that is visually far from all watermarked images, yet still verifies as watermarked because its embedding is close to the embedding of a previously observed image.

In this work, we overcome this issue for \emph{language models} by leveraging PPHs. Rather than compressing text via a heuristic, similarity-preserving embedding~$g$, we hash it using a PPH, which enables evaluating the similarity of the original texts directly (even if the hashes themselves may not be close under any distance metric). The adaptive security of PPH rules out attacks of the form outlined above. 

In principle, this approach also extends to post-hoc \emph{image} watermarking by using a PPH instead of the embedding function~$g$ in the construction of~\cite[Section~5.3]{fairoze2025difficulty}. However, Hamming PPHs are not especially compelling in this setting: one could hash the binary representations of pixels of the images, but this would effectively deem two images visually similar if and only if the Hamming distance between the binary representation of their pixels is small. PPHs for $\ell_2$-distance do exist under the recently proposed contracting hypergrid vector assumption~\cite{bogdanov2025adaptive}, when the input space consists of bounded integer vectors. If this bound is set to $256$, then one could hash raw (unnormalized) pixel values so that two images are deemed similar iff their pixels are close in $\ell_2$-distance. Likewise, one could hash the latent vector used to generate an image during diffusion, which is often interpreted as capturing the image’s semantic meaning. Unfortunately, such PPH constructions may not be practical, particularly when the coordinate bound is as large as~$256$, highlighting the need for further improvements.

%% file: sections/watermarking-primitive.tex
\section{Watermarking definitions}
\label{sec:notation}
\paragraph{Notation.}  
For strings $x, y \in \{0, 1\}^*$, we write $x \preceq y$ to denote that $x$ is a \emph{substring} of $y$.  
We write $(x, y)$ for their concatenation and $|y|$ for the length of $y$. We write $y[i]$ for the $i$th bit of $y$, and $y[i:j]$ for $i < j$ to denote the substring $(y[i], y[i+1], \ldots, y[j-1])$ (excluding the $j$th bit). \\\\
For $q \in \NN$, we denote $[q] = \{1, 2, \ldots, q\}$. For a set $X$, we write $x \leftarrow X$ to denote that $x$ is a random variable sampled uniformly from $X$.  For a (probabilistic) algorithm $\cA$ and a (possibly random) input $Y$, we write $x \leftarrow \cA(Y)$ to denote that $x$ is a random variable distributed according to the output of $\cA$ when run on input $Y$.\\\\
Let $\lambda \in \NN$ be a security parameter. We write $n = \poly(\lambda)$ if $n : \NN \to \NN$ is a polynomial function. When $\lambda$ is clear from context, we simply write $n$ for $n(\lambda)$. For a function $f : \NN \to \RR$, we write $f(\lambda) \leq \negl(\lambda)$ to indicate that $f$ is bounded above by some negligible function.\\\\
For $p \in [0,1]$, we write $\Ber(p)$ for the Bernoulli distribution on $\{0,1\}$ with expectation $p$. 
\paragraph{Language models.} A language model $Q : \{0, 1\}^* \rightarrow [0, 1]$ maps a token sequence to the probability that the next token is $1$. Given a prompt $x \in \{0,1\}^*$, we obtain a \emph{response} $u = (u[1],u[2],\ldots) \leftarrow \Response(x)$ by autoregressive sampling initialized with $x$. Formally, each bit $u[j]$ of the response is sampled as
\begin{align*}
    u[j] \leftarrow \Ber\big(Q(x, u[1:j])\big)\;.
\end{align*}
$(x,u)$ is a single prompt-response pair. For $\ell \in \NN$ we write $\Response_\ell(x)$ for the distribution obtained from autoregressively sampling the next $\ell$ bits.
\paragraph{Transcripts.} We consider adversaries that interact with oracles $\M$ by submitting inputs $x \in \{0, 1\}^*$ and receiving responses $y \in \{0, 1\}^*$. For oracles that accept multiple inputs, $x$ may be viewed as a concatenation of these inputs. For a (randomized) adversary $\mathcal{A}$, we denote by  $\Trans(\mathcal{A}^\M)$ the \emph{transcript} of this interaction, i.e., the random variable describing the set of all input-output pairs observed by $\mathcal{A}$.
\paragraph{Blocks.}
For a string $y \in \{0,1\}^*$ and a block size $n \in \NN$, we define the $n$-bit block parsing of $y$ as
\[
  \Blocks(y) \;=\; (y_1, y_2, \ldots, y_{\lfloor |y|/n \rfloor})\,,
\]
where each $y_t \in \{0,1\}^n$ is the substring $y[(t-1)n + 1:tn+1]$. Any trailing bits that do not form a full block are discarded. For convenience, we sometimes write ``an $n$-sized block $y_t$ of $y$'', or just $y_t$ when $n$ is clear from context.  In such cases, we mean: let $y_t$ denote the $t$-th entry in the sequence given by
$\Blocks(y)$.

\paragraph{Predicates.} We use predicates to formalize ``closeness'' between strings. A \emph{predicate} is a Boolean-valued function $\Phi : \cX \to \{0,1\}$ defined on some domain $\cX$. Throughout this paper, when we quantify over predicates, we implicitly restrict to predicates that are efficiently computable. A canonical example is the Hamming predicate, which checks whether two strings differ in at most a $\delta$-fraction of positions. 
\begin{definition}[Hamming]
\label{def:hamming-predicate}
    For any $\delta \in [0, 1]$, we define the Hamming predicate $\Ham_\delta$ as follows. For any strings $y,y' \in \{0,1\}^*$,
    \begin{align*}
        \Ham_\delta(y,y') = (|y| = |y'| \; \wedge \;\|y-y'\|_0 \le \delta |y|)\;.
    \end{align*}
\end{definition}
For a predicate $\Phi$ with input domain $\cX = \{0, 1\}^* \times \{0, 1\}^*$ and strings $y, y' \in \{0, 1\}^*$, we say that $y$ is {$\Phi$-close} to $y'$ if $\Phi(y, y') = 1$, and {$\Phi$-far} from $y'$ if $\Phi(y, y') = 0$.

\paragraph{Every block close predicate.}
To state the guarantees of our watermarking construction, it will be useful to have in mind a notion of \emph{every-block-closeness} for a predicate $\Phi$.  
Intuitively, this notion \emph{lifts} $\Phi$, which originally may only compare two equal-sized strings of size $n$, to a predicate on sequences of blocks: it checks whether the first input is composed of consecutive $n$-bit blocks that are $\Phi$-close to a sequence of consecutive $n$-bit blocks contained in the second input. 
\begin{definition}[Every-block close]
\label{def:every-block-close}
    For a predicate $\Phi$ and block size $n \in \NN$, we define the predicate $\EBC{\Phi}{n}$ (every-block-close) as
    \[
\EBC{\Phi}{n}(\zeta^*, y) :\;
\left(
\begin{aligned}
&\text{Check that } \exists r \in \NN \text{ such that } |\zeta^*| = rn \text{. Output 0 if not.}\\
&\text{Parse }  (\zeta^*_1, \ldots, \zeta_r^*) =\Blocks(\zeta^*),\\
&(y_1, \ldots, y_{\floor{|y|/n}}) = \Blocks(y)\text{ into }  n\text{-bit blocks.}\\
&\text{Output 1 iff } \exists t \in \NN \text{ such that } \forall j \in [r], \;\Phi(\zeta^*_j, y_{t+j-1}) = 1.
\end{aligned}
\right)
\]
\end{definition}
 Recall Figure \ref{fig:ebcfig}, which illustrates $\EBC{\Phi}{n}$. The predicate $\EBC{\Phi}{n}$ is generally {asymmetric}: the first input $\zeta^*$ is treated as a sequence of $r$ consecutive $n$-bit blocks, while the second input $y$ may be longer. The predicate searches for a sequence of $r$ consecutive blocks \emph{within} $y$ that are blockwise $\Phi$-close to $\zeta^*$.  When $\zeta^*$ and $y$ have the same length, the predicate becomes symmetric (assuming $\Phi$ is symmetric) and reduces to a simple blockwise comparison, checking that their aligned blocks are $\Phi$-close.

\subsection{Watermarking}
\label{sec:watermarking-primitive}
We start by introducing the syntax of a \textit{watermarking scheme} for a language model $Q$, a tuple of algorithms that act as the interface for generating and verifying watermarked outputs. We require that watermarking schemes satisfy a strong notion of quality preservation: undetectability \cite{gunn2024undetectable}. This guarantees that watermarked outputs from $\Wat$ are indistinguishable from standard autoregressive samples from $Q$ to any efficient adversary that can choose input prompts adaptively.
\begin{definition}[Watermarking]
\label{def:watermarking-syntax}
    Let $\lambda \in \mathbb{N}$ be a security parameter. A watermarking scheme for a language model $Q$ and token set $\{0, 1\}$ is a tuple of polynomial time algorithms $(\Gen, \Wat, \Ver)$ satisfying,
\begin{itemize}
    \item $\Gen(1^\lambda):$ takes in a security parameter $\lambda$ and outputs verification and secret keys $(\vk, \sk)$.
    \item $\Wat(\sk, x):$ takes in the secret key $\sk$ and a prompt $x \in \{0, 1\}^*$ and outputs a response $y \in \{0, 1\}^*$.
    \item $\Ver(\vk, \zeta):$ takes in the verification key $\vk$ and input text $\zeta \in \{0, 1\}^*$  and outputs a verification bit $b \in \{0, 1\}$.
\end{itemize}
A recoverable watermarking scheme additional requires the following PPT algorithm $\Rec$,
\begin{itemize}
    \item $\Rec(\vk,\zeta):$ takes in the verification key $\vk$ and input text $\zeta \in \{0, 1\}^*$ and outputs a list $\mathcal{L} \subseteq \{0,1\}^*$ of recoveries or a failure symbol $\bot$.
\end{itemize}
\textbf{Undetectability} For every PPT distinguisher $\mathcal{A}$,
\[\left|\arraycolsep=1.4pt
  \Pr_{(\vk, \sk) \leftarrow \Gen(1^\lambda)}\left[
      \begin{array}{rl}
\mathcal{A}^{Q, \Wat_{\sk}}(1^\lambda) = 1
     \end{array}
  \right] - 
  \arraycolsep=1.4pt
  \Pr\left[
      \begin{array}{rl}
\mathcal{A}^{Q, \Response}(1^\lambda) = 1      \end{array}
  \right]\right|\leq \negl(\lambda)\;.\]
\end{definition}

We now define our security notions for watermarking schemes.  
Throughout this section, we present the \emph{secret key} variants, where the adversary has oracle access to a verification oracle $\Ver_{\vk}$.  
In Remark \ref{remark:public-key}, we describe the corresponding public key variants, obtained by simply giving the adversary explicit access to~$\vk$.

Robustness requires that it is hard for an efficient adversary---even with access to both the watermarking oracle and the verification oracle---to produce a string $\zeta$ that contains a $\ellrb$-sized substring $\zeta^* \preceq \zeta$ which is $\Phi$-\emph{close} to one of the watermarked responses $y \leftarrow \Wat(\sk, x)$ it observed, yet for which $\zeta$ fails to verify. 
\begin{definition}[Robustness]
\label{def:robustness-general}
For any closeness predicate $\Phi$ and any substring size $\ellrb = \poly(\lambda)$, a watermarking scheme $(\Gen, \Wat, \Ver)$ is secret key \emph{$(\Phi, \ellrb)$-robust} if for any PPT algorithm $\cA$ outputting $\zeta \in \{0, 1\}^*$,
\begin{align*}
    \Pr\left[
    \begin{array}{rl}
        (\vk, \sk) &\leftarrow \Gen(1^\lambda)\\
        \zeta &\leftarrow\cA^{\Wat_{\sk}, \Ver_{\vk}}(1^\lambda)
    \end{array} 
    :
    \begin{array}{rl}
    \Big(\exists \zeta^* \preceq \zeta, (x, y) \in \Pi \text{
    such that } 
    \\\Phi(\zeta^*, y) = 1
    \wedge \; |\zeta^*| = \ellrb \Big)\\
    \wedge \; \Ver(\vk, \zeta) = 0
    \end{array}
    \right] \le \negl(\lambda)\;,
\end{align*}
where $\Pi$ denotes the transcript $\Trans(\cA^{\Wat_{\sk}})$. 
\end{definition}

We include the substring requirement, $|\zeta^*| = \ellrb$, for two reasons. First, we cannot guarantee that the watermark is verifiable in arbitrarily small $\zeta$, so this requirement imposes a lower bound on the size of $\zeta$. Second, it allows us to capture adversaries who approximately copy a portion of their output, $\zeta^*$, from the watermarked model, then insert arbitrary text around it. The substring size specifies exactly how much the text must be copied for the watermark to verify, regardless of the surrounding content.

Unforgeability complements robustness by requiring that it is hard for an efficient adversary to produce a string $\zeta$ that verifies as watermarked, but for which $\zeta$ contains \emph{no} $\elluf$-sized substring $\zeta^* \preceq \zeta$ that is $\Phi$-close to any watermarked response $y$ observed by the adversary. Another interpretation of unforgeability is that if all $\elluf$-sized substrings $\zeta^* \preceq \zeta$ are \emph{far} from all of the watermarked responses $y$ seen by the adversary, then verification must fail.
\begin{definition}[Unforgeability]
\label{def:unforgeability-general}
For any closeness predicate $\Phi$ and any substring size $\elluf = \poly(\lambda)$, a watermarking scheme $(\Gen, \Wat, \Ver)$ is secret key \emph{$(\Phi, \elluf)$-unforgeable} if for any PPT algorithm $\cA$ outputting $\zeta \in \{0, 1\}^*$,
\begin{align*}
    \Pr\left[
    \begin{array}{rl}
        (\vk, \sk) &\leftarrow \Gen(1^\lambda)\\
        \zeta &\leftarrow\cA^{\Wat_{\sk}, \Ver_{\vk}}(1^\lambda)
    \end{array} 
    :
    \begin{array}{rl}
    \Ver(\vk, \zeta) = 1 \\
    \wedge \; \neg \Big(\exists \zeta^* \preceq \zeta, (x, y) \in \Pi \text{ such that } \\
    \Phi(\zeta^*, y) = 1\;
    \wedge \; |\zeta^*| = \elluf \Big)
    \end{array}
    \right] \le \negl(\lambda)\;,
\end{align*}
where $\Pi$ denotes the transcript $\Trans(\cA^{\Wat_{\sk}})$. 
\end{definition}

These definitions provide security for binary watermark verification. However, we may wish to go further and track the edits applied to content by recovering the original model outputs. \emph{Recoverable watermarking schemes}
provide exactly this capability. Robustness for a recoverable watermarking scheme requires that no efficient adversary can produce a string
$\zeta$ containing a sufficiently long substring $\zeta^* \preceq \zeta$ that is
$\Phi$-close to some watermarked response $y$, yet for which the recovery algorithm does not recover a substring of that response $\xi \preceq y$. This recovered substring $\xi$ must also be similar to the copied $\zeta^*$ in size (up to an allowed loss).
\begin{definition}[Robust recoverability]
\label{def:recoverability-general}
For any closeness predicate $\Phi$, substring size $\ellrec = \poly(\lambda)$, and loss size $\ellloss = \poly(\lambda)$, a \emph{recoverable} watermarking scheme $(\Gen, \Wat, \Rec)$ is secret key \emph{$(\Phi, \ellrec, \ellloss)$-robust} if for any PPT algorithm $\cA$ outputting $\zeta \in \{0, 1\}^*$,
\begin{align*}
    \Pr\left[
    \begin{array}{rl}
        (\vk, \sk) &\leftarrow \Gen(1^\lambda)\\
        \zeta &\leftarrow\cA^{\Wat_{\sk}, \Rec_{\vk}}(1^\lambda)\\
        \mathcal{L} &\leftarrow \Rec(\vk, \zeta)
    \end{array} 
    :
    \begin{array}{rl}
   \Big(\exists \zeta^* \preceq \zeta, (x, y) \in \Pi\text{ such that }\\
    \Phi(\zeta^*, y) = 1 \; \wedge \;  |\zeta^*| \ge \ellrec \Big) \; \wedge\\
    \mathcal{L} \text{ does not contain a} \\ (|\zeta^*| - \ellloss)\text{-sized substring of } y
    \end{array}
    \right] \le \negl(\lambda)\;,
\end{align*}
where $\Pi$ denotes the transcript $\Trans(\cA^{\Wat_\sk})$. 
\end{definition}

To ensure that recoveries are reliable, it is natural to additionally require that any recovered text truly originates from the model. This requirement is especially important in applications where recovered substrings are used for fine-grained attribution. One could rely on standard unforgeability, which guarantees that whenever the recovered list $\Rec(\vk, \zeta)$ contains a string $\xi$, there exists \emph{some} observed response $y$ that is close to a substring $\zeta^* \preceq \zeta$. However, this does not ensure that the recovery $\xi$ itself was genuinely observed by the adversary. Instead, we require \emph{unforgeable recoverability}: no efficient adversary can produce text $\zeta$ 
such that $\Rec(\vk, \zeta)$ recovers a string $\xi$ which is either too small, or is \emph{not} a substring of any response $y$ that was approximately copied by the adversary.
\begin{definition}[Unforgeable recoverability]
\label{def:unforgeable-recoverability-general}
For any closeness predicate $\Phi$ and any substring size $\ellrec = \poly(\lambda)$, a \emph{recoverable} watermarking scheme $(\Gen, \Wat, \Rec)$ is secret key \emph{$(\Phi, \ellrec)$-unforgeable} if for any PPT algorithm $\cA$ outputting $\zeta \in \{0, 1\}^*$,
\begin{align*}
    \Pr\left[
    \begin{array}{rl}
        (\vk, \sk) &\leftarrow \Gen(1^\lambda)\\
        \zeta &\leftarrow\cA^{\Wat_{\sk}, \Rec_{\vk}}(1^\lambda)
    \end{array} 
    :
    \begin{array}{rl}
    \exists \xi \in \Rec(\vk, \zeta) \text{ such that } \Big(|\xi| < \ellrec\big)\\
   \vee \; \neg \Big(\exists \zeta^* \preceq \zeta, (x, y) \in \Pi \text{ with} \\ \Phi(\zeta^*, y) = 1 \wedge \; |\zeta^*| = |\xi| \; \wedge \; \xi \preceq y\Big)\\
    \end{array}
    \right] \le \negl(\lambda)\;,
\end{align*}
where $\Pi$ denotes the transcript $\Trans(\cA^{\Wat_{\sk}})$.
\end{definition}
\begin{remark}
    \label{remark:public-key}
    We presented the secret key variants of all our definitions, where the attacker receives oracle access to both $\Wat_{\sk}$ and $\Ver_{\vk}$. We also consider the corresponding public key variants, which are identical except that the attacker additionally receives the full verification key $\vk$ (rather than only black-box access to the verification oracle).
\end{remark}

\begin{remark}
    We do not include a loss term $\ellloss$ in the definition of unforgeable recoverability, because our construction guarantees the existence of a substring $\zeta^* \preceq \zeta$ that has \emph{exactly} the same length as the recovered string.  
\end{remark}

\begin{remark}
We define standard robustness to consider fixed-size substrings $|\zeta^*| = \ellrb$, but allow variable-size substrings $|\zeta^*| \geq \ellrec$ for
recoverable robustness. One could additionally define a dynamic version of standard robustness, in which verification succeeds whenever there exists a close substring $\zeta^*$ with $|\zeta^*| \ge \ellrb$. Our
construction in fact satisfies this property. However, we include it
explicitly only in the definition of recoverability for two reasons: 
\begin{enumerate}
    \item It is meaningful there, as it allows us to capture that the size of the recovered output $\xi \in \mathcal{L}$ scales with the size of the approximate copy $\zeta^* \preceq \zeta$.  
    \item Standard robustness typically implies this behavior implicitly. In particular, consider
$(\Phi,\ell_{\mathrm{rb}})$-robustness for a predicate $\Phi$ satisfying the following: for any $\zeta \in \{0, 1\}^*$ with $|\zeta| \geq \ellrb$,
\[\Phi(\zeta, y) =1 \implies \exists \zeta^* \preceq \zeta \text{ with } | \zeta^*| = \ellrb \text{ and } \Phi(\zeta^*, y) = 1\;,\]
Under this condition, standard robustness implies the dynamic notion of robustness. If $\zeta$ contains a substring $\zeta'$ with $|\zeta'|>\ellrb$ that is $\Phi$-close to a response $y$, then by the above property there exists a further substring
$\zeta^* \preceq \zeta' \preceq \zeta$ of size exactly $\ellrb$ that is also $\Phi$-close to $y$.
$(\Phi, \ellrb)$-robustness then guarantees that verification succeeds on $\zeta$.
\end{enumerate}
\end{remark}

%% file: sections/preliminaries.tex
\section{Preliminaries}
\paragraph{Digital signatures.} We recall the definition of (non-robust) digital signature schemes. Digital signature schemes allow a signer to associate a message with their publicly-verifiable signature. Parties without the signing key cannot forge valid signatures on messages that the signer has not signed. Here, we define digital signatures with \textit{strong unforgeability} \cite{bellare2007two}. While standard unforgeability only prevents a forger from creating a valid signature for a new message, strong unforgeability goes further: it prevents the forger from producing a \textit{new} valid signature, even for a message whose signature it has already seen.
\begin{definition}[Digital signature scheme] Let $\lambda \in \NN$ be the security parameter, $n = \poly(\lambda)$ be the message size, and $k = \poly(\lambda)$ be the signature size. A digital signature scheme is a triple $(\Gen, \Sign, \Vrfy)$ of PPT algorithms satisfying,
\begin{itemize}
    \item $\Gen(1^\lambda):$ outputs a pair of public and secret keys $(\pk, \sk)$.
    \item $\Sign(\sk, y):$ takes in the secret key $\sk$ and a message $y \in \{0, 1\}^n$ and outputs a signature $\sigma \in \{0, 1\}^k$.
    \item $\Vrfy(\pk, \zeta, \sigma):$ takes in the public key $\pk$, a string $\zeta \in \{0, 1\}^n$, and a signature $\sigma \in \{0, 1\}^k$ and outputs a binary decision $b \in \{0, 1\}$.
\end{itemize}
\emph{\textbf{Correctness:}} For any $y \in \{0, 1\}^n$,
\begin{align*}
    \Pr\left[
    \begin{array}{rl}
        (\pk, \sk) &\leftarrow \Gen(1^\lambda)  \\
        \sigma &\leftarrow \Sign(\sk, y)
    \end{array}
    :
    \begin{array}{rl}
        \Vrfy(\pk, y, \sigma) = 1
    \end{array}
    \right] = 1\;,
\end{align*}
\noindent\emph{\textbf{Strong unforgeability:}} For any PPT adversary $\mathcal{A}$,
\begin{align*}
    \Pr\left[
    \begin{array}{rl}
        (\pk, \sk) &\leftarrow \Gen(1^\lambda) \\ 
        (\zeta, \sigma) &\leftarrow \cA^{\Sign_{\sk}}(\pk)
    \end{array} 
    :
    \begin{array}{rl}
       \Big(\Vrfy(\pk, \zeta, \sigma) = 1\Big) \; \wedge \; \Big((\zeta, \sigma) \notin \Pi\Big)
    \end{array}
    \right] \leq \negl(\lambda)\;,
\end{align*}
where $\Pi$ is the transcript $\Trans(\cA^{\Sign_{\sk}})$.
\end{definition}
\paragraph{Collision resistant hash.} The sharp sketch construction of Dodis et al. \cite{dodis2008fuzzy} and property-preserving hash construction of Holmgren et al. \cite{holmgren2022nearly} rely on the existence of collision resistant hash functions (CRHFs). We recall their definition here.
\begin{definition}[Collision resistant hash]
    \label{def:crhf}
    Let $\lambda \in \NN$ be the security parameter and let $n=\poly(\lambda)$ and $k=\poly(\lambda)$ be such that $k \le n$. A collision resistant hash family $(\Sample, \{\mathcal{H}_\lambda\})$ consists of an efficient algorithm $h \leftarrow \Sample(1^\lambda)$ that samples a hash function $h \in \mathcal{H}_\lambda$ with $h : \{0, 1\}^n \rightarrow \{0, 1\}^k$. We require,

\medskip
\noindent
\emph{\textbf{Efficiency:}} Given $x \in \{0, 1\}^n$ and $h \leftarrow \Sample(1^\lambda)$, the function $h(x)$ can be computed in time $\poly(\lambda)$.

\medskip
\noindent
\emph{\textbf{Collision-resistance:}} For any PPT adversary $\cA$,
\begin{align*}
    \Pr\left[
        \begin{array}{rl}
        h &\gets \Sample(1^\lambda) \\ 
        (x, x') &\gets \cA(1^\lambda, h)
        \end{array}
        :
         \begin{array}{rl}
       h(x) = h(x') \\ \wedge \; (x \neq x')
       \end{array}\right] \leq \negl(\lambda)\;.
\end{align*}
\end{definition}

\paragraph{Pseudorandom codes.} Pseudorandom codes (PRCs) are error-correcting codes whose codewords are indistinguishable from random binary vectors~\cite{christ2024pseudorandom}. We use ideal security (defined by Alrabiah et al.~\cite{alrabiah2024ideal}), which means that even an adversary with \emph{oracle} access to the encoder and decoder cannot distinguish between a PRC scheme and an ``ideal'' random code scheme, whose codewords are uniformly random and decoder simply responds with exact proximity queries to the underlying transcript.  Ideal security thus captures both the pseudorandomness and robustness properties that PRCs are designed to achieve. 

\itulineparagraph{PRC encoding and decoding keys.} In the PRC literature, the PRC encoding and decoding keys are often referred to as ``public'' and ``secret'' keys, respectively. However, in the watermarking setting, these roles are reversed: encoding is performed by the model provider, who holds the \emph{secret} watermark key, while decoding is performed by users, who hold the \emph{public} key and wish to verify the watermark. To avoid confusion, we refer to the PRC keys simply as the encoding and decoding keys, denoted $\ek$ and $\dk$, respectively.

\begin{definition}[Pseudorandom codes] Let $\lambda \in \mathbb{N}$ be a security parameter, $k=poly(\lambda)$ be the message size, and $n=\poly(\lambda)$ be the block size. A PRC is a triple $(\Gen, \Enc, \Dec)$ of PPT algorithms satisfying,
\begin{itemize}
    \item $\Gen(1^\lambda):$ outputs encoding and decoding keys $(\ek, \dk)$.
    \item $\Encode(\ek, \sigma):$ takes in the encoding key $\ek$ and a message $\sigma \in \{0, 1\}^{k}$ and outputs a codeword $y \in \{0, 1\}^{n}$.
    \item $\Decode(\dk, \xi):$ takes in the decoding key $\dk$ and a string $\xi \in \{0, 1\}^{n}$ and outputs message or decoding failure $\sigma \in \{0, 1\}^{k} \cup \{\perp\}$.
\end{itemize}
The information rate of the PRC is $\frac{k}{n}$.
\end{definition}
Ideal security captures the core properties that PRCs are designed to satisfy. It asserts that no PPT adversary can distinguish between a real-world interaction with the PRC and an ideal-world interaction. In the real world, the adversary interacts with the PRC encoder and decoder via oracle access. In the ideal world, the adversary receives a fresh uniformly random string for each encoding query, and the decoder maintains a complete transcript of all past encodings, decoding only those strings that are sufficiently close to previously generated ones. This notion thus captures both pseudorandomness and robustness within a single definition. It is formalized through two security games, $\cG^\real$ and $\cG^\ideal$.\\\\ \underline{$\cG^{\real}_{\lambda}(\cA)$}:
\begin{enumerate}
    \label{game:ideal}
    \item The challenger samples $(\ek, \dk) \leftarrow \PRC.\Gen(1^\lambda)$.
    \item The adversary makes encoding and decoding queries.
    \begin{itemize}
        \item For each encoding query $\sigma$, the challenger responds with $y \leftarrow$  $\PRC.\Enc(\ek,$ $\sigma)$.
        \item For each decoding query $\xi$, the challenger responds with $\sigma' \leftarrow$ $\PRC.\Decode(\dk,$ $\xi)$.
    \end{itemize}
    \item The adversary returns a decision bit $b \in \{0, 1\}$.
\end{enumerate}
\underline{$\cG^{\ideal}_{\delta, \lambda}(\cA)$}:
\begin{enumerate}
    \item The challenger initializes a transcript $\Pi = \emptyset$.
    \item The adversary makes encoding and decoding queries.
    \begin{itemize}
        \item For each encoding query $\sigma$, the challenger responds with a fresh random codeword $y \leftarrow \{0, 1\}^n$ and adds the entry $(\sigma, y)$ to the transcript $\Pi$.
        \item For each decoding query $\xi$, the challenger responds with a random message $\sigma$ satisfying $(\sigma, y) \in \Pi$ for some $y$ such that $\Ham_\delta(y, \xi) = 1$. If no such $\sigma$ exists, then the challenger responds with $\perp$.
    \end{itemize}
    \item The adversary returns a decision bit $b \in \{0, 1\}$.
\end{enumerate}
\begin{definition}[Ideal security]
\label{def:ideal-security}
Let $\lambda \in \mathbb{N}$ be the security parameter and $\delta \in (0, \frac{1}{2})$ be the error-rate. 
A pseudorandom code $(\Gen, \Enc, \Dec)$ is \emph{ideal} with error-rate $\delta$ if, for any PPT adversary $\mathcal{A}$, 
\[
    \left|\Pr\big[\cG^{\real}_{\lambda}(\cA) = 1\big] - \Pr\big[\cG^{\ideal}_{\delta, \lambda}(\cA) = 1\big]\right| \leq \negl(\lambda)\;.
\]
\end{definition}
Alrabiah et al.~\cite{alrabiah2024ideal} construct ideal pseudorandom codes with constant error rate $\delta \in (0, \frac{1}{4})$ while achieving a constant information rate $\rho \in (0, 1)$, assuming the subexponential hardness of the exact learning parity with noise problem (LPN)~\cite{jain2012commitments}.
\begin{theorem}[{\cite[Corollary 5]{alrabiah2024ideal}}]
    Assume the subexponential hardness of exact LPN \cite[Assumption 1]{alrabiah2024ideal}. For any constant $\delta \in (0, \frac{1}{4})$, there exists a constant $\rho \in (0, 1)$ such that there exists an ideal PRC with error-rate $\delta$ and information rate $\rho + o(1)$.
\end{theorem}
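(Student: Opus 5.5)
This statement is imported from \cite[Corollary 5]{alrabiah2024ideal}, and we would rely on it as a black box. If we had to reprove it, the plan would run as follows. We would prove ideal security (Definition~\ref{def:ideal-security}) by a hybrid argument from $\cG^{\real}_\lambda$ to $\cG^{\ideal}_{\delta,\lambda}$. The construction is assembled in three layers.

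\begin{enumerate}
\item \textbf{Zero-bit PRC.} Take the LPN-based zero-bit pseudorandom code of \cite{christ2024pseudorandom}: a sparse parity-check (LDPC-style) secret structure is hidden behind a random-looking generator. Codewords are then pseudorandom under subexponential exact LPN, and a constant fraction $\delta<\tfrac14$ of substitutions can be detected or corrected by counting satisfied sparse checks.
\item \textbf{Constant rate.} Encode the message with a constant-rate binary code that decodes up to error rate $\delta$. Mask the result with a pseudorandom pad derived from the zero-bit codeword or its seed, so the information rate is $\rho+o(1)$ for a suitable constant $\rho$.
\item \textbf{Authentication.} Include a secret-key authentication tag, a MAC over the pad seed and message, keyed inside $\ek$. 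The decoder outputs a message only when the recovered tag verifies. Otherwise it outputs $\perp$.
\end{enumerate}

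The hybrids proceed in this order.
\begin{itemize}
\item \textbf{Encoding answers.} Replace the real encoding answers with uniformly random strings, while the decoder is still implemented with the real key. This step needs pseudorandomness of codewords even against an adversary holding a decoding oracle. We would first argue that decoding queries can be simulated from the transcript alone, since a query far from all issued codewords returns $\perp$, and then invoke LPN.
\item \textbf{Decoding answers.} Replace the real decoder by the ideal one: return the message of a transcript codeword within Hamming distance $\delta n$, and $\perp$ otherwise.
\end{itemize}
The second hybrid splits into two cases. For a query close to an issued codeword, correctness of the error-correcting layer gives equality of the answers. For a query far from every issued codeword, we must show the real decoder outputs $\perp$ except with negligible probability. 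This is exactly where the MAC enters: a non-$\perp$ answer would yield a valid tag on a fresh string, contradicting MAC unforgeability under the available oracles.

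The main obstacle is adaptivity in the ``far'' case and its interaction with the first hybrid. The adversary can probe the decoder near the boundary of the $\delta n$-balls to learn the secret sparse checks, a key-recovery attack on the decoder. The first fix we would try is to make the decoding threshold sharp, with a gap between acceptance and rejection distances. We would then use a union bound over polynomially many queries, together with subexponential LPN hardness, so that each boundary query reveals only a negligible amount of information about the secret structure. If that fails, we would randomize the decoder's threshold per query so that answers near the boundary are simulatable from the transcript. The remaining checks are the rate calculation and the requirement $\delta<\tfrac14$, which lets unique decoding and the far/close dichotomy coexist; we expect both to be routine.
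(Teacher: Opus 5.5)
Citing \cite[Corollary 5]{alrabiah2024ideal} as a black box is exactly what the paper does, since it gives no proof of its own, so that part of your proposal is fine. The reproof sketch, however, would not go through. It breaks down precisely where the cited work's main technical contribution lies.

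\textbf{Hybrid order.} Your first hybrid is distinguishable as stated. If the encoder returns uniform strings but the decoder still uses the real key, the adversary can encode $\sigma$, submit the returned string for decoding, and receive $\perp$ instead of $\sigma$. The decoder must therefore first be shown to be simulatable from the transcript while the adversary holds \emph{real} codewords and a \emph{real} decoding oracle. Only after that can codewords be replaced by random strings. Your justification of the first hybrid (``decoding queries can be simulated from the transcript'') already presupposes the conclusion of your second hybrid.

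\textbf{The close case.} In that decoder-simulation step, the close case does not follow from ``correctness of the error-correcting layer.'' The sparse-parity-check detector of \cite{christ2024pseudorandom} is guaranteed only against corruptions chosen independently of the secret checks. Here the corruption is chosen after adaptive decoding queries, which may have leaked information about those checks. Ruling out that such an adversary can push a $\delta$-close string into rejection is adaptive robustness, the central technical result of \cite{alrabiah2024ideal}. Your sketch assumes it. The remedies you float do not show that the real decoder's answers are predictable from the transcript, which is what is actually needed. These were a union bound with subexponential LPN so that each query ``reveals negligible information,'' and a per-query randomized threshold.

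\textbf{The far case.} The MAC does not handle it. The tag only certifies that the decoded (seed, message) pair was previously issued, not that the query lies within $\delta n$ of the corresponding codeword. Nothing in your decoder measures that distance, and the parity-count test is not sharp. So a query at distance just over $\delta n$ from a genuine codeword, but still inside the inner decoders' actual decoding radius, decodes to that codeword's genuine message and tag. That is a replay, not a forgery, and the real decoder returns $\sigma$ where the ideal decoder returns $\perp$. For the same reason, a decoder with ``a gap between acceptance and rejection distances,'' as you suggest, is distinguishable from the sharp ideal decoder on queries in the gap. Matching the sharp threshold requires reconstructing the candidate codeword exactly and checking its Hamming distance to the query. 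One way to do this is to derive all encoding randomness from the recovered seed.

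\textbf{The rate layer.} Your rate layer needs more care as well. For rate $\rho+o(1)$, a seed carried in a short header can simply be erased with the $\delta n$ error budget. The seed-carrying part must therefore be hidden or spread across the codeword, which again calls for an adaptive argument.
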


\subsection{Property-preserving hash} 
Property preserving hash functions (PPHs) \cite{boyle2019adversarially} allow efficient computation of a predicate $\Phi$ on two strings using only their compressed hashes, even when those inputs are chosen adversarially with full knowledge of the hash. 
\begin{definition}[Property-preserving hash]
    \label{def:pph}
    Let $\lambda \in \NN$ be the security parameter, let $\Phi$ be any predicate, and let $n=\poly(\lambda)$ and $k=\poly(\lambda)$ with $k \le n$ be the input and hash size respectively. A PPH scheme for $\Phi$ is a triple of PPT algorithms $(\Sample, \Hash, \Eval)$ satisfying
\begin{itemize}
    \item $\Sample(1^\lambda):$ samples a hash function $h$ from the family of functions $\cH_\lambda$.
    \item $\Hash(h, x):$ takes in the hash description $h$ and a string $x \in \{0, 1\}^n$ and outputs a hash $z \in \{0, 1\}^k$.
    \item $\Eval(h, z, z'):$ takes in the hash description $h$ and strings $z, z' \in \{0, 1\}^k$ and outputs an evaluation $b \in \{0, 1\}$.
\end{itemize}
\emph{\textbf{Direct-access robustness:}} For any PPT adversary $\cA$,
\begin{align*}
    \Pr\left[
        \begin{array}{rl}
        h &\gets \Sample(1^\lambda) \\ 
        (x, x') &\gets \cA(1^\lambda, h)
        \end{array}
        :
        \Eval(h, h(x), h(x')) \neq \pred(x, x') \right] \leq \negl(\lambda)\;.
\end{align*}
We use the notation $h(x) := \Hash(h, x)$.
\end{definition}
Prior works \cite{fleischhacker2021robust,fleischhacker2022property,holmgren2022nearly} have constructed PPHs for the Hamming predicate $\Ham_\delta$. In particular, assuming the existence of a special type of homomorphic CRHF, the construction of Holmgren et al. \cite{holmgren2022nearly} achieves constant-factor compression of the input when the error rate is constant.\begin{theorem}[{\cite[Corollary 6.7]{holmgren2022nearly}}]
\label{thm:pph-hamming}
Let $\lambda \in \NN$ be the security parameter, and let $n = \poly(\lambda)$ denote the input size. Suppose there exists a homomorphic CRHF with output size $\ell = \ell(\lambda)$ that is collision-resistant over $\{-1,0,1\}^n$~\cite[Definition 6.1]{holmgren2022nearly}. Then, for any constants $\delta \in (0,1/2)$ and $\rho > H_3^{\mathsf{BZ}}(\delta) \cdot \log_2 3$, where $H_3^{\mathsf{BZ}}(\cdot)$ refers to the Blokh-Zyablov bound~\cite[Fact 2.12]{holmgren2022nearly}, there exists a PPH for $\Ham_{\delta}$ with hash size $\rho \cdot n + \ell$.
\end{theorem}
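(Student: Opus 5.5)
The plan is to reproduce the syndrome-plus-homomorphic-hash paradigm of Holmgren et al.~\cite{holmgren2022nearly}, working over the ternary alphabet. For $x, x' \in \{0,1\}^n$, the integer difference $e = x - x'$ lies in $\{-1,0,1\}^n$, and $\Ham_\delta(x,x') = 1$ exactly when $\|e\|_0 \le \delta n$.

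\textbf{Construction.} Fix an efficiently list-decodable linear code $C \subseteq \mathbb{F}_3^n$ with parity-check matrix $H$. Its rate is chosen so that (a) the syndrome $He \bmod 3$ takes $(1-R)n \cdot \log_2 3 \le \rho n$ bits, and (b) $C$ admits a polynomial-time \emph{syndrome} list decoder up to radius $\delta n$ with polynomial list size. A concatenated code achieving the Blokh--Zyablov bound \cite[Fact 2.12]{holmgren2022nearly} gives exactly this whenever $\rho > H_3^{\mathsf{BZ}}(\delta)\log_2 3$. Then:
\begin{itemize}
\item $\Sample$ outputs $H$ together with a homomorphic CRHF $g$ that is collision-resistant over $\{-1,0,1\}^n$.
\item The hash is $h(x) = (Hx \bmod 3,\, g(x))$, of total length $\rho n + \ell$.
\item $\Eval(h,(s,u),(s',u'))$ list-decodes the syndrome $s - s'$. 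This produces every $e \in \{-1,0,1\}^n$ of weight at most $\delta n$ with $He = s - s'$. $\Eval$ outputs $1$ iff some listed $e$ satisfies $g(e) = u - u'$, using the homomorphism $g(x) - g(x') = g(x - x')$.
\end{itemize}

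\textbf{Correctness on close pairs.} If $\Ham_\delta(x,x') = 1$, then $e^* = x - x'$ has weight at most $\delta n$ and satisfies $He^* = s - s'$. It therefore appears in the list, and the homomorphism gives $g(e^*) = u - u'$, so $\Eval$ outputs $1$. This direction involves no adversary at all.

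\textbf{Direct-access robustness on far pairs.} Suppose an adversary outputs $(x,x')$ with $\Ham_\delta(x,x') = 0$ but $\Eval = 1$. Then the list contains some $e$ with $\|e\|_0 \le \delta n$ and $g(e) = g(x - x')$. Since $x - x'$ has weight greater than $\delta n$, we have $e \neq x - x'$, and both vectors lie in $\{-1,0,1\}^n$. So the reduction runs $\Eval$ itself to find $e$ and outputs the collision $(e,\, x - x')$ for $g$, contradicting collision resistance. The inputs of unequal length are handled trivially, since they are rejected syntactically.

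\textbf{Main obstacle.} The delicate step is the coding-theoretic one: exhibiting a ternary code family whose syndromes can be list-decoded efficiently out to radius $\delta n$ with polynomial list size, at redundancy $\rho n$ bits for any $\rho$ above $H_3^{\mathsf{BZ}}(\delta)\log_2 3$. I would obtain this from concatenated (Blokh--Zyablov) codes with their known list decoders. Syndrome decoding then reduces to ordinary decoding by first computing any preimage $y$ with $Hy = s - s'$ and list-decoding $y$ to nearby codewords $c$, so that the candidates are $e = y - c$. One further point to check is that these candidates must be restricted to those with entries in $\{-1,0,1\}$, since only such $e$ are valid inputs to $g$.
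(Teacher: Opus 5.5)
Your proposal is correct and matches the construction behind this result. The paper gives no proof of its own but cites \cite[Corollary 6.7]{holmgren2022nearly}, whose PPH is exactly your ternary syndrome $Hx \bmod 3$ plus a homomorphic CRHF, evaluated by syndrome list-decoding $H(x-x')$ and checking low-weight candidates against $g(x)-g(x')$, with any false positive yielding a collision over $\{-1,0,1\}^n$. Your final caveat is unnecessary, since every vector in $\mathbb{F}_3^n$ lifts uniquely to $\{-1,0,1\}^n$ with the same Hamming weight, so no candidates are ever discarded.
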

\begin{remark}[Blokh-Zyablov]
\label{rem:blokh-zyablov}
    The Blokh-Zyablov bound~\cite{blokh1982linear} is what best known constructions of efficiently list-decodable codes achieve~\cite{guruswami2008better}. It is a longstanding open problem in coding theory to match the Hamming bound $H(\delta)$ with efficient codes. Note that 
    $\log \binom{n}{\delta n} = n H(\delta) +o(n)$.
\end{remark}
\begin{remark}
Theorem~\ref{thm:pph-hamming} relies on a specialized form of homomorphic CRHF, which can be constructed under either the shortest integer solution (SIS) assumption or the discrete logarithm assumption. Later, we instantiate a slightly weaker hash family that relies on standard CRHFs, but still suffices for our purposes (Theorem \ref{thm:sharp-sketch-hamming}).
\end{remark}

\subsection{Block steganography} 
We use \emph{steganography} as a tool in our watermarking constructions. Steganography is similar to watermarking but differs in three key respects. First, a steganography scheme enables messages to be robustly embedded in and decoded from generated text, rather than allowing only binary verification.
Second, we only require correctness and robustness from steganography schemes, with no guarantees on soundness, unforgeability, or recoverability (i.e., no bound on false positives). Third, we consider \emph{block steganography}, which produces a single, size $n$ block per invocation, whereas a full watermarking scheme generates an entire response, which could be arbitrarily long. This distinction is not significant: as shown below, a block-steganography scheme can be used to autoregressively generate a full response. The block interface is simply more convenient for later constructions.
\begin{definition}[Block steganography]
    \label{def:block-steg-syntax}
    Let $\lambda \in \mathbb{N}$ be the security parameter, $k=poly(\lambda)$ be the message size, and $n=\poly(\lambda)$ be the block size. A block steganography  scheme for a language model $Q$ and token set $\{0, 1\}$ is a tuple of polynomial time algorithms $(\Gen, \Embed, \Dec)$ satisfying,
\begin{itemize}
    \item $\Gen(1^\lambda):$ takes in a security parameter $\lambda$ and outputs verification and secret keys $(\vk, \sk)$.
    \item $\Embed(\sk, \pi, m):$ takes in the secret key $\sk$, a context $\pi \in \{0, 1\}^*$, and a message $m \in \{0, 1\}^k$ and outputs a block $y \in \{0, 1\}^n$.
    \item $\Dec(\vk, \zeta):$ takes in the verification key $\vk$ and input text $\zeta \in \{0, 1\}^n$  and outputs a decoded message or a failure symbol $m \in \{0, 1\}^k \cup \{\perp\}$.
\end{itemize}
\textbf{Undetectability} For every PPT distinguisher $\mathcal{A}$,
\[\left|\arraycolsep=1.4pt
  \Pr_{(\vk, \sk) \leftarrow \Gen(1^\lambda)}\left[
      \begin{array}{rl}
\mathcal{A}^{Q, \Embed_{\sk}}(1^\lambda) = 1
     \end{array}
  \right] - 
  \arraycolsep=1.4pt
  \Pr\left[
      \begin{array}{rl}
\mathcal{A}^{Q, \Response_n
}(1^\lambda) = 1      \end{array}
  \right]\right|\leq \negl(\lambda)\;,\]
where both $\Embed_{\sk}$ and $\Response_n$ take in contexts $\pi \in \{0, 1\}^*$ and message $m \in \{0, 1\}^k$ as input, but $\Response_n$ ignores the message: $\Response_n(\pi, m) := \Response_n(\pi)$.
\end{definition}
Robustness for block steganography is similar to robustness for watermarking, but allows decoding a message $m$ from an input $\zeta$ that is close to an embedding $y$ of that message.
\begin{definition}[Block Robustness]
\label{def:block-steg}
For any closeness predicate $\Phi$, a block steganography scheme $(\Gen,\Embed,\Dec)$ with block size $n = \poly(\lambda)$ is \emph{secret-key $\Phi$-robust} if for any PPT algorithm $\cA$ outputting $\zeta  \in \{0,1\}^{n}$,
\begin{align*}
    \Pr\left[
    \begin{array}{rl}
        (\vk, \sk) &\leftarrow \Gen(1^\lambda)\\
        \zeta &\leftarrow \cA^{\Embed_\sk, \Dec_\vk}(1^\lambda)
    \end{array} 
    :
    \begin{array}{rl}
    \Big(\exists (\pi, m, y) \in \Pi \text{ such that}\\ \Phi(\zeta, y) = 1 
    \wedge \; \Dec(\vk, \zeta) \neq m\Big)
    \end{array}
    \right] \le \negl(\lambda)\;,
\end{align*}
where $\Pi$ denotes the transcript $\Trans(\cA^{\Embed_{\sk}})$.
\end{definition}
\begin{remark}
Public-key $\Phi$-robustness for block steganography is defined identically to the secret key variant, except the adversary receives the verification key $\vk$.
\end{remark}
\paragraph{Robust watermarking from block steganography.}
Many existing watermarking schemes are explicitly or implicitly based on block steganography~\cite{christ2024pseudorandom,golowich2024edit,fairoze2023publicly}. Below, we describe a simple watermarking scheme $(\Gen, \Watfull, \Verfull)$ that uses a steganography scheme $(\Gen, \Embed, \Dec)$ in an autoregressive manner, embedding arbitrary messages (e.g., the message $1$) until a special $\done$ token signals that the model should output the string. While this scheme guarantees robustness, it does not provide unforgeability or recoverability, which are the focus of our construction in Section~\ref{sec:watermark-main}.
\begin{algorithm}[H]
\caption{$\Watfull$}
\label{alg:watfull}
    \begin{algorithmic}[1]
        \STATE  \textbf{input}: $(1^\lambda, \sk, x)$
        \STATE $t \leftarrow 1$
        \STATE $y_1 \leftarrow \Embed(\sk, x, 1)$
        \WHILE{$\done \not \preceq y_t$}
            \STATE $t \leftarrow t + 1$
            \STATE $y_t \leftarrow \Embed(\sk, (x, y_1, \ldots, y_{t-1}), 1)$
        \ENDWHILE
        \STATE return $y = (y_1, \ldots, y_{t - 1}, y_t[< \done])$
        \end{algorithmic}
\end{algorithm}
\noindent Here, we use the shorthand $y_t[< \done]$ to denote the substring of $y_t$ preceding the $\done$ token. The verification algorithm then examines every substring of size $n$ in its input $\zeta$, and accepts if any of them decode to a message other than $\perp$.
\begin{algorithm}[H]
\caption{$\Verfull$}
\label{alg:verfull}
    \begin{algorithmic}[1]
        \STATE  \textbf{input}: $(1^\lambda, \vk, \zeta)$
        \STATE $n \leftarrow n(\lambda)$
        \FOR{$i \in [|\zeta| - n + 1]$} 
            \STATE $m \leftarrow \Dec(\vk, \zeta[i:i+n])$
            \IF{$m \neq \perp$}
                \STATE return $1$
            \ENDIF 
        \ENDFOR
        \item return $0$
    \end{algorithmic}
\end{algorithm}
In this construction, each block $y_t$ of a response
$y \leftarrow \Watfull(\sk, x)$ is generated as an embedding $y_t \leftarrow \Embed(\sk, \pi, 1)$.  
Consequently, in $\Verfull(\vk, \zeta)$, as long as the input $\zeta$ contains an $n$-bit substring $\zeta^* \preceq \zeta$ that is $\Phi$-close to some block $y_t$ of a response $y$, the decoder $\Dec(\vk, \zeta^*)$ outputs $1$ except with negligible probability.  
This, in turn, ensures that $\Verfull(\vk, \zeta)$ also outputs $1$.  Recall that this condition is captured exactly by the predicate $\EBC{\Phi}{n}(\zeta^*, y)$ (Definition~\ref{def:every-block-close}), which, given $\zeta^*$ of size $n$, outputs 1 iff it is $\Phi$-close to a block of $y$. We formalize this guarantee in Fact~\ref{fact:full-wm-from-block}.
\begin{fact}
\label{fact:full-wm-from-block}
    Let $\lambda \in \NN$ be the security parameter, $n = \poly(\lambda)$ be any block size, $\Phi$ be any closeness predicate, and $Q$ be any language model. If $(\Gen,\Embed,\Dec)$ is a block size $n$ steganography scheme for $Q$ with (secret or public key) $\Phi$-robustness, then the watermarking scheme $(\Gen, \Watfull, \Verfull)$ for $Q$ satisfies (secret or public key) $(\EBC{\Phi}{n}, n)$-robustness.
\end{fact}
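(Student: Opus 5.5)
The plan is a direct reduction: any adversary that breaks $(\EBC{\Phi}{n}, n)$-robustness of $(\Gen, \Watfull, \Verfull)$ yields an adversary $\cB$ that breaks block $\Phi$-robustness of $(\Gen, \Embed, \Dec)$ with the same probability. The reduction does not rely on undetectability at all. Fix a PPT watermarking adversary $\cA$ with access to $\Wat_{\sk}$ and $\Ver_{\vk}$, or to $\vk$ in the public key case. The adversary $\cB$ gets oracles $\Embed_\sk$ and $\Dec_\vk$, or the key $\vk$. It runs $\cA$ internally and simulates the two oracles as follows:
\begin{itemize}
\item On a watermarking query $x$, $\cB$ executes Algorithm~\ref{alg:watfull} literally, forwarding each call $\Embed(\sk, (x, y_1, \ldots, y_{t-1}), 1)$ to its own embedding oracle.
\item On a verification query $\zeta$, $\cB$ executes Algorithm~\ref{alg:verfull}, forwarding each $\Dec(\vk, \zeta[i:i+n])$ to its decoding oracle. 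In the public key case it simply hands $\vk$ to $\cA$.
\end{itemize}
Since both algorithms only use $\sk$ and $\vk$ through $\Embed$ and $\Dec$, this simulation is perfect. Consequently the joint distribution of $\cA$'s view, its transcript $\Pi = \Trans(\cA^{\Wat_\sk})$, and its output $\zeta$ is identical to the real experiment. The number of oracle calls stays polynomial because $\Watfull$ is a polynomial time algorithm.

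When $\cA$ outputs $\zeta$, the reduction needs to output a single $n$-bit string. It uses the following fact about the bad event. Suppose there exist $\zeta^* \preceq \zeta$ with $|\zeta^*| = n$ and $(x,y) \in \Pi$ such that $\EBC{\Phi}{n}(\zeta^*, y) = 1$. Then, by Definition~\ref{def:every-block-close} with $r = 1$, there is a block index $t$ with $\Phi(\zeta^*, y_t) = 1$. By construction, $y_t$ was the output of an $\Embed$ query with message $1$, so $(\pi, 1, y_t)$ lies in $\cB$'s transcript. There is one edge case: the last block is truncated at $\done$, so any full block $y_t$ of $y$ must come from a complete $\Embed$ output. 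I must check that the blocks of $y$ under $\Blocks$ align with the $\Embed$ outputs. They do, because every block except the last is exactly $n$ bits, and $\Blocks$ discards the partial trailing block.

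If additionally $\Verfull(\vk,\zeta) = 0$, then every length-$n$ window of $\zeta$ decodes to $\perp$; in particular $\Dec(\vk, \zeta^*) = \perp \neq 1$. So $\cB$ outputs a uniformly random length-$n$ window of $\zeta$. Each such window is a candidate witness, and the witness is among them, so $\cB$ succeeds with probability at least $\varepsilon/|\zeta|$, where $\varepsilon$ is $\cA$'s advantage. Since $|\zeta| \le \poly(\lambda)$, a non-negligible $\varepsilon$ contradicts block robustness.

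I expect no real obstacle beyond two bookkeeping points: block alignment in the presence of the $\done$ truncation, and the polynomial loss from guessing the window. Guessing can be avoided entirely, since $\cB$ may test each window $w$ against its transcript by checking $\Phi(w, y')$ for every recorded $y'$; $\Phi$ is efficiently computable.
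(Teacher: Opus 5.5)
Your proposal is correct and uses the same reduction as the paper. The bad event yields an $n$-bit window of $\zeta$ that is $\Phi$-close to some embedded block $y_t$ (embedded with message $1$) yet decodes to $\perp$, and $\mathcal{B}$ simulates $\mathcal{A}$ by forwarding to its $\Embed$/$\Dec$ oracles and outputs that window. Your version is more careful than the paper's, which never explains how $\mathcal{B}$ locates $\zeta^*$ (your transcript-testing or random-guess step) or checks block alignment under the $\done$ truncation; the paper only adds a one-line remark that undetectability is inherited from the steganography scheme.
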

\begin{proof}
Undetectability follows immediately from the undetectability of the underlying
steganography scheme. We now prove robustness for the secret key case; the
public key version is analogous.

Fix a PPT adversary $\mathcal{A}$ and a security parameter $\lambda \in \NN$.
Let $(\vk, \sk) \gets \Gen(1^\lambda)$, let 
$\zeta \gets \mathcal{A}^{\Watfull_{\sk}, \Verfull_{\vk}}(1^\lambda)$, and let $\Pi$ denote the transcript $\Trans(\cA^{\Watfull_{\sk}})$.
Assume, toward contradiction, that there exists a polynomial 
$q : \NN \rightarrow \NN$ such that
\begin{align*}
    \Pr\left[
    \begin{array}{rl}
    \big(\exists \zeta^* \preceq \zeta, (x, y) \in \Pi \text{
    such that } 
    \EBC{\Phi}{n}(\zeta^*, y) = 1
    \wedge \; |\zeta^*| = n \big)\\
    \wedge \; \Verfull(\vk, \zeta) = 0
    \end{array}
    \right] \geq 1/q(\lambda)\;.
\end{align*}
Whenever the above event occurs, the predicate 
$\EBC{\Phi}{n}(\zeta^*, y)=1$, with $|\zeta^*| = n$, guarantees the existence of a block index 
$t \in \NN$ for which
\[
\Phi(\zeta^*, y_t) = 1\;.
\]
Because each block $y_t$ is produced by $\Embed_{\sk}$ during $\mathcal{A}$'s
interaction with $\Watfull_{\sk}$, we have
\[
(x, y) \in \Trans(\mathcal{A}^{\Watfull_{\sk}})
\;\Rightarrow\;
\Big((\pi = (x, y_1, \ldots, y_{t-1}),\, m = 1),\, y_t\Big)
\in \Trans(\mathcal{A}^{\Embed_{\sk}})\;.
\]
Yet in the same event, we also have $\Verfull(\vk, \zeta)=0$, which implies 
$\Dec(\vk, \zeta^*) = \perp$, since failed verification guarantees that 
\emph{every} $n$-bit substring of $\zeta$ decodes to $\perp$. 
Consequently, there exists an adversary $\mathcal{B}$ that simulates $\mathcal{A}$
and outputs the substring $\zeta^*$, which is $\Phi$-close to an embedding block 
$y_t$ but fails to decode with probability at least $1/q(\lambda)$. This violates the 
$\Phi$-block robustness of the underlying steganography scheme, yielding a contradiction.
\end{proof}

\paragraph{Block steganography instantiations.}
We present existing constructions for both secret and public key robust block steganography schemes for language models with \emph{sufficiently high entropy}. 
\itulineparagraph{Secret key block steganography from PRCs.}
For the secret key setting, recall PRCs, which closely resemble block steganography schemes for the uniform language model (e.g., $Q(\pi) = 1/2$ for all $\pi \in \{0, 1\}^*$). Christ and Gunn~\cite{christ2024pseudorandom} showed that PRCs can be used to build a robust watermarking scheme. However, their construction leverages standard PRCs, which achieve a weaker form of robustness, where the robustness adversary is modeled as an error channel acting independently of the encoding and decoding keys. By plugging \emph{ideal} PRCs \cite{alrabiah2024ideal} (as in Definition~\ref{def:ideal-security}) into their construction, we essentially obtain the adaptive guarantee that we require.

A subtlety is that ideal PRCs guarantee that an input 
$\xi$ decodes to a \emph{random} message $m$ whose encoding $y$ is close to $\xi$, rather than the \emph{unique} message $m$ as required in block steganography. We show that this distinction is harmless: due to pseudorandomness, no efficient adversary can find any $\xi$ that is simultaneously close to the encodings of two distinct messages.

\begin{theorem}[Secret-key block steganography from ideal PRC \cite{christ2024pseudorandom,alrabiah2024ideal}]
    \label{thm:secret-key-block-wm}
    Let $\lambda \in \NN$ be the security parameter, let $n = \poly(\lambda)$ with $n \geq \lambda$ be any block size, and let $Q$ be any language model with min-entropy rate at least a constant. That is, there exists a constant $\epsilon \in (0, \frac{1}{4})$ such that 
    \[\forall \pi\in \{0, 1\}^*, y \leftarrow \Response_\ell(\pi) \text{ satisfies } H_\infty(y) \geq 4\Big(\sqrt{\frac{1}{2}-\epsilon}\Big)n\]
    If there exists an ideal pseudorandom code with constant information rate $\rho \in (0, 1)$ and constant error rate $\delta \in (0, \frac{1}{4})$, then there exists a block size $n$ \emph{secret key} $\Ham_{\delta-\epsilon}$-robust steganography scheme for $Q$ with message size $k = \rho n$.
\end{theorem}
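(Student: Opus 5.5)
We instantiate the Christ--Gunn PRC-based watermarking construction~\cite{christ2024pseudorandom} with an ideal PRC $(\PRC.\Gen,\PRC.\Enc,\PRC.\Dec)$ of rate $\rho$ and error rate $\delta$. Key generation sets $(\vk,\sk)=(\PRC.\dk,\PRC.\ek)$.

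To embed $m\in\{0,1\}^{\rho n}$ in context $\pi$, $\Embed(\sk,\pi,m)$ first computes a codeword $c\leftarrow\PRC.\Enc(\ek,m)\in\{0,1\}^n$. It then samples the block bit by bit. At position $j$ with model probability $p_j=Q(\pi,y[1:j])$, it sets $y[j]=c[j]$ with the probability prescribed by the Christ--Gunn rounding. Concretely, if $c[j]$ were uniform, $y[j]$ would be exactly $\Ber(p_j)$. If $c[j]$ is fixed, then $y[j]$ agrees with $c[j]$ with probability $\tfrac12+\tfrac12|p_j-\tfrac12|$-type bias. Decoding is simply $\Dec(\vk,\zeta)=\PRC.\Dec(\dk,\zeta)$.

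The proof has three steps.

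\emph{Step 1: undetectability.} We pass to the ideal game $\cG^{\ideal}$ of Definition~\ref{def:ideal-security}. There every codeword is fresh and uniform, so each $y[j]$ is distributed exactly as $\Ber(p_j)$. The $\Embed_{\sk}$ oracle is then identical to $\Response_n$. Ideal security gives a negligible distinguishing gap, and the reduction simulates $Q$ itself.

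\emph{Step 2: the embedding error.} We show that with overwhelming probability $\Ham_{\epsilon}(y,c)=1$ for every embedded block. The number of disagreements between $y$ and $c$ is a sum of independent-given-the-past indicators. Its expected fraction is at most $\tfrac12-\Theta(\sum_j|p_j-\tfrac12|/n)$. The min-entropy hypothesis $H_\infty(y)\ge 4\sqrt{1/2-\epsilon}\,n$ is exactly what the Christ--Gunn entropy-to-bias lemma converts into an expected disagreement fraction below $\epsilon$. I expect this conversion to be routine if quoted from~\cite{christ2024pseudorandom}. A martingale (Azuma) bound then gives concentration, using $n\ge\lambda$ to make the failure probability $e^{-\Omega(\lambda)}$. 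Consequently, if $\Ham_{\delta-\epsilon}(\zeta,y)=1$, the triangle inequality gives $\Ham_\delta(\zeta,c)=1$.

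\emph{Step 3: robustness against an adaptive PPT adversary.} Suppose $\cA$ outputs $\zeta$ that is $\Ham_{\delta-\epsilon}$-close to an embedding $y$ of $m$ with $\Dec(\vk,\zeta)\neq m$, with non-negligible probability. We build a distinguisher $\cB$ against ideal security. $\cB$ simulates $\Embed$ and $\Dec$ via its PRC oracles and records the codewords $c$. It outputs $1$ if $\zeta$ is $\delta$-close to some recorded $c$ for $m$ but $\Dec(\zeta)\neq m$, checking this with one extra decode query.

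In the ideal world this event requires one of two things. Either $\zeta$ is $\delta$-close to codewords of two distinct messages, or the challenger returns a random message other than $m$. Either way, two fresh uniform $n$-bit strings must lie within distance $2\delta n<n/2$ of each other. Codewords are sampled independently of $\cA$'s view before they are generated, so a union bound over polynomially many pairs and a Chernoff bound show this happens with probability $2^{-\Omega(n)}$.

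The main obstacle is this uniqueness issue: ideal decoding returns a random close message rather than the unique one, so the reduction must carefully handle decode queries issued before collisions could occur. If needed, I would replace the pairwise argument with a hybrid in which the ideal decoder returns $\perp$ on ambiguous inputs, and show that this hybrid is indistinguishable from the ideal game by the same distance bound.
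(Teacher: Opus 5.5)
Your proposal follows the paper's proof essentially step for step. It uses the same Christ--Gunn embedding with decoding via the PRC decoder, undetectability via the ideal game, the Christ--Gunn entropy lemma plus the triangle inequality to make $\zeta$ $\Ham_\delta$-close to the codeword $c$, and uniqueness of the decoded message because fresh uniform codewords are pairwise $2\delta$-far (Lemma~\ref{lem:hamming-far} with a union bound).

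There are two small corrections. First, under that rounding a bit agrees with a uniform $c[j]$ with probability $\tfrac12+\min\{p_j,1-p_j\}$. The expected disagreement fraction is therefore $\tfrac1n\sum_j|p_j-\tfrac12|$, not $\tfrac12-\Theta(\tfrac1n\sum_j|p_j-\tfrac12|)$; your heuristic runs the wrong way, making high entropy look bad. Second, your martingale argument needs $c$ uniform, so Step~2 should be done in the ideal hybrid, as the paper does, and $\mathcal{B}$ should also flag any $y$ that is $\epsilon$-far from its $c$.
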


\begin{proof}
Let $\PRC = (\Gen, \Enc, \Dec)$ be a block size $n$ ideal PRC, assumed to exist in the theorem. We define $\BWM = (\Gen, \Embed, \Dec)$, which mirrors the watermarking scheme of \cite{christ2024pseudorandom} but allows embedding messages and only outputs $n$-sized blocks.
\begin{itemize}
    \item $\BWM.\Gen(1^\lambda): $ samples encoding and decoding keys $(\ek, \dk) \leftarrow \PRC.\Gen(1^\lambda)$ and outputs $(\vk = \dk, \sk = \ek)$.
\item $\BWM.\Embed(\sk, \pi, m)$: encodes the message 
    $c \gets \PRC.\Encode(\sk, m)$ and then samples each bit of the response block 
    $y \in \{0,1\}^n$ as
    \[
    y[i] \gets \Ber\Big(p_i - (-1)^{c[i]} \cdot \min\{p_i, 1-p_i\}\Big), \quad \forall i \in [n],
    \]
    where $p_i  = Q(\pi, y[1:i])$ is the probability that the $i$-th bit should be $1$ under $Q$. Then outputs $y$.
        \item $\BWM.\Dec(\vk, \zeta \in \{0, 1\}^n)$ outputs $m' \leftarrow \PRC.\Decode(\vk, \zeta)$.
\end{itemize}

Undetectability follows from pseudorandomness of $\PRC$ \cite[Lemma 19]{christ2024pseudorandom}. For robustness fix $\lambda \in \NN$ and a PPT adversary $\cA$, and let $(\PRC.\ek, \PRC.\dk) \leftarrow \PRC.\Gen(1^\lambda)$ and $\zeta \leftarrow \mathcal{A}^{\Embed_{\sk},\Dec_{\vk}}(1^\lambda)$. We will show that $\cA$ cannot break the $\Phi$-block robustness of $\BWM$.

We move to a hybrid in which the oracles $\Embed_{\sk}$ and $\Dec_{\vk}$ are implemented using
$\PRC.\Enc^{\ideal}$ and $\PRC.\Dec^{\ideal}$ respectively (the oracles given in the ideal PRC game \ref{game:ideal}, $\mathcal{G}^{\ideal}_{\lambda, \delta}(\cA)$).
By the ideal security of $\PRC$, this hybrid is computationally 
indistinguishable to~$\mathcal{A}$.\\\\
Now consider the event that $\zeta$ is $\Ham_{\delta - \epsilon}$-close to some block $y \in \{0, 1\}^n$ for which 
\[
(\pi,m,y) \in \Trans(\mathcal{A}^{\BWM.\Embed_{\sk}})\;.
\]
We will show that $\zeta$ decodes to $m$ with overwhelming probability. Since $Q$ has entropy rate at least a constant, every response $y$ has high empirical
entropy as in \cite[Definition 7]{christ2024pseudorandom}. From~\cite[Lemma 21]{christ2024pseudorandom}, $y$ must therefore be $\Ham_\epsilon$-close to the PRC 
codeword $c$ used to sample it with all but negligible probability. That is, with probability at least $1 - \negl(\lambda)$, there exists a codeword $c \in \{0, 1\}^n$ such that
\begin{equation}
\label{event:close-codeword}
(m,c) \in \Trans(\mathcal{A}^{\PRC.\Enc^{\ideal}}) 
\text{ and }
\Ham_\epsilon(y,c)=1\;.
\end{equation}
By the triangle inequality, $\zeta$ is 
$\Ham_\delta$-close to $c$. We want to argue that the ideal 
decoder will therefore decode $\zeta$ to the corresponding message $m$. 
However, the ideal decoder only guarantees decoding to \emph{a random} message 
satisfying~\ref{event:close-codeword}. We now show that $m$ is, with overwhelming probability, the unique such 
message, which will conclude the argument.

Since $\PRC.\Enc^{\ideal}$ generates a fresh random codeword for each input, all 
codewords in the transcript are, with overwhelming probability, pairwise at a 
Hamming distance greater than $2\delta n$. Formally, let
\[
C \;=\; \{\, c : \exists m' \text{ with } (m',c) \in \Trans(\mathcal{A}^{\Enc^{\ideal}}) \,\}.
\]
By Lemma~\ref{lem:hamming-far} and the fact that codewords from the ideal encoder are uniformly random strings, the probability that any two distinct 
$c, c' \in C$ satisfy
\[
\Ham_{2\delta}(c,c') = 1
\]
is negligible in $n \geq \lambda$. Since $|C|$ is polynomial in $\lambda$, a union bound over all pairs implies that, with negligible probability, no pair of distinct codewords in $C$ are $\Ham_{2\delta}$-close. Consequently, if $\zeta$ is $\Ham_\delta$-close to some codeword 
$c \in C$, then it cannot be $\Ham_\delta$-close to any other codeword in $C$.

Combining this fact with~\ref{event:close-codeword}, we have that whenever 
$\zeta$ is $\Ham_{\delta - \epsilon}$-close to some block $y$ encoding a 
message $m$, the ideal decoder outputs $m$ with all but negligible probability. 
Since the ideal decoder is indistinguishable from 
$\PRC.\Dec_{\vk}$ to $\mathcal{A}$, it follows that 
$\BWM$ satisfies $\Ham_{\delta-\epsilon}$-block robustness.
\end{proof}

The proof above relied on a simple fact, that random strings are far in Hamming distance with overwhelming probability.
\begin{lemma}
\label{lem:hamming-far}
Let $n \in \mathbb{N}$ be any string length and let $\delta \in (0,\tfrac12)$ be a fixed constant. Then
\[
\Pr_{c,c' \leftarrow \{0,1\}^n}\big[\Ham_\delta(c,c') = 1\big] \le \negl(n).
\]
\end{lemma}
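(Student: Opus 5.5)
The event $\Ham_\delta(c,c')=1$ says exactly that the Hamming distance between $c$ and $c'$ is at most $\delta n$; the length condition in Definition~\ref{def:hamming-predicate} holds automatically. The plan is to reduce this to a tail bound for a binomial random variable and apply a Chernoff/Hoeffding bound. Since $c$ and $c'$ are independent and uniform on $\{0,1\}^n$, the XOR $c \oplus c'$ is also uniform on $\{0,1\}^n$. Hence the distance $D = \|c - c'\|_0$ is distributed as $\mathrm{Bin}(n, 1/2)$, a sum of $n$ independent $\Ber(1/2)$ indicators with mean $n/2$.

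Next, we bound $\Pr[D \le \delta n]$. Since $\delta < 1/2$ is a fixed constant, the gap $\gamma = \tfrac12 - \delta$ is a positive constant. Hoeffding's inequality then gives
\[
\Pr\big[D \le \delta n\big] = \Pr\big[D - \tfrac{n}{2} \le -\gamma n\big] \le \exp(-2\gamma^2 n).
\]
The right-hand side is $2^{-\Omega(n)}$, so it is negligible in $n$. Alternatively, a direct counting argument works: the bound is $2^{-n}\sum_{i \le \delta n}\binom{n}{i} \le 2^{-n(1 - H(\delta))}$, and $H(\delta) < 1$ for $\delta < 1/2$.

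There is no real obstacle. The only point needing care is that $\delta$ is constant, so the exponent is linear in $n$; this is what makes the bound negligible. As a consequence, in the proof of Theorem~\ref{thm:secret-key-block-wm} (where $n \ge \lambda$, and the lemma is applied with $2\delta < 1/2$), the bound is also negligible in $\lambda$.
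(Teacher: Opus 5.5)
Your proof is correct and follows essentially the same route as the paper: both reduce the event to the lower tail of the $\mathrm{Bin}(n,1/2)$ distance and apply a concentration bound. The paper uses the multiplicative Chernoff bound with $t = 1-2\delta$, getting $\exp(-n(1-2\delta)^2/4)$, while your Hoeffding bound gives the slightly better exponent $(1-2\delta)^2 n/2$; either way the exponent is linear in $n$ because $\delta$ is constant, which is all that is needed.
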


\begin{proof}
For independently and uniformly random $c,c' \leftarrow \{0,1\}^n$, each coordinate differs with probability $1/2$, so
\[
\mathbb{E}\big[\|c - c'\|_0\big] = \frac{n}{2}.
\]
Applying a Chernoff bound with $t = 1 - 2\delta \in (0, 1)$, we get
\[
\Pr\big[\|c - c'\|_0 \le \delta n\big]
\leq \exp\!\left(-\frac{n (1 - 2\delta)^2}{4}\right).
\]
Since $\delta$ is a constant strictly less than $1/2$, the exponent is a negative constant times $n$, and therefore the probability is negligible in $n$.
\end{proof}
\itulineparagraph{Public key block steganography in random oracle model.}
Fairoze et al. \cite{fairoze2023publicly} leverage random oracles to build a \emph{pubic key} watermarking scheme. We observe that their scheme is implicitly built on top of a public key block steganography scheme, which we formalize in the following theorem.
\begin{theorem}[Public key block steganography from random oracles \cite{fairoze2023publicly}]
\label{thm:public-key-block-wm}
    Let $\lambda \in \NN$ be the security parameter, $n = \poly(\lambda)$ be any block size, and $Q$ be any language model satisfying the following entropy condition: 
    for some $\ell = \poly(\lambda)$, size $\ell$ outputs from $Q$ have high min-entropy. That is,
    \[\forall \pi \in \{0, 1\}^*, y\leftarrow \Response_\ell(\pi) \text{ satisfies } H_\infty(y) \geq \omega (\log \lambda) \;.\]
    For simplicity, assume $\ell$ divides $n$, so that $n/\ell$ is an integer.  

    Then there exists constants $\delta \in (0, \frac{1}{2})$, $\rho \in (0, 1)$ where in the random oracle model, there exists a block size $n$ \emph{public key} $\Phi_{\delta}$-robust block steganography scheme with message size $k = \rho(n/\ell - 1)$, 
     where $\Phi_\delta : \{0, 1\}^n \times \{0, 1\}^n \rightarrow \{0, 1\}$ is the predicate
    \[
    \Phi_\delta(y, y') :\;
    \left(
    \begin{aligned}
    &\text{Parse } (y_1, \ldots, y_{n/\ell}) = \Blocks(y),\;
      (y_1', \ldots, y_{n/\ell}') = \Blocks(y') \text{ into } \ell\text{-bit blocks},\\
    &\text{Output } 1 \text{ iff } y_1 = y_1' \text{ and }\text{for a } (1-\delta)\text{-fraction of } i \in \{2, \ldots, n/\ell\},\; y_i = y_i'
    \end{aligned}
    \right).
    \]
\end{theorem}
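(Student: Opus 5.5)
We reconstruct the scheme of Fairoze et al.\ as a block steganography scheme and check undetectability and $\Phi_\delta$-robustness directly. The construction is as follows. Fix a public-key encoding: let $\RO$ be a random oracle with $\{0,1\}$ outputs, and let $(\Enc,\Dec)$ be an efficiently list- or unique-decodable error-correcting code of rate $\rho$. This code maps $k=\rho(n/\ell-1)$ bits to $n/\ell-1$ symbols and corrects a $\delta$-fraction of erasures/errors; both constants are fixed by a standard explicit family.

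$\Embed(\sk,\pi,m)$ works in three steps.
\begin{itemize}
\item It first samples $y_1\gets\Response_\ell(\pi)$ honestly.
\item It computes $c=\Enc(m)$; if the message must be hidden, it first masks $m$ with a pseudorandom pad derived from $\RO(y_1)$, or encrypts it under a public-key scheme as in \cite{fairoze2023publicly}.
\item For $i=2,\ldots,n/\ell$, it rejection-samples $y_i\gets\Response_\ell(\pi,y_1,\ldots,y_{i-1})$ until $\RO(y_1,y_i,i)=c[i-1]$.
\end{itemize}
$\Dec(\vk,\zeta)$ parses $\zeta$ into $\ell$-bit blocks, computes $c'[i-1]=\RO(\zeta_1,\zeta_i,i)$, and outputs $\Dec(c')$, or $\perp$ if decoding fails. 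Rejection sampling terminates in expected $O(1)$ tries. We truncate after $\lambda$ tries and output the last sample; this makes a given symbol wrong with probability $2^{-\lambda}$.

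For undetectability, the key point is that as long as the adversary never queries $\RO$ on a point $(y_1,y_i,i)$ used in embedding, each $\RO$ value is a fresh uniform bit. Rejection sampling against a fresh uniform bit then yields $y_i$ distributed exactly as $\Response_\ell$, since the accepted sample is independent of the coin. This is the standard argument: conditioned on never repeating a query point, one hybrid replaces $\RO$ by fresh coins and shows the output distribution equals honest sampling. The risk comes from repeated query points $(y_1,y_i,i)$, either across the scheme's own queries or from the adversary guessing them. Because $H_\infty(y_1)\ge\omega(\log\lambda)$, each fresh $y_1$ collides with any of polynomially many prior strings only with negligible probability. A union bound over all oracle calls gives $\negl(\lambda)$ statistical distance, and the message is hidden by the pad/encryption. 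I expect this collision/freshness bookkeeping, together with the interaction with the message masking, to be the main obstacle. It requires care because $y_i$ for $i\ge 2$ need not be high-entropy jointly with $y_1$ fixed; freshness must be argued through $y_1$ alone, which appears in every query.

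For robustness, suppose $\Phi_\delta(\zeta,y)=1$ for a transcript entry $(\pi,m,y)$. Then $\zeta_1=y_1$, and $\zeta_i=y_i$ for all but a $\delta$-fraction of $i$. On the agreeing positions, $\RO(\zeta_1,\zeta_i,i)=\RO(y_1,y_i,i)=c[i-1]$ except where truncation failed, which happens with negligible probability. The recovered word $c'$ therefore differs from $c$ in at most a $\delta$-fraction of coordinates, and the code decodes it to $m$ (after unmasking with $\RO(y_1)$). The argument is information-theoretic given the oracle, so it holds even when the adversary knows $\vk$ and has $\RO$ access. Finally, we choose $\delta$ below the code's unique-decoding radius, which yields the claimed constants $\delta,\rho$.
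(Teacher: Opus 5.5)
Your robustness argument is fine and matches the paper's: $\zeta_1 = y_1$ fixes the pad, agreeing blocks reproduce the embedded bits, and the code corrects the remaining $\delta$-fraction. The gap is in undetectability, and it comes from the construction itself: none of your oracle queries involves a secret. In the random oracle model the distinguisher can also query $\mathsf{RO}$. It also sees $y_1$ and $y_i$ in the output of $\Embed_{\sk}$, so it never has to guess an embedding point. Your freshness argument assumes the adversary never queries a point $(y_1,y_i,i)$ used in embedding, and that assumption is simply false here.

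Here is a concrete attack. The distinguisher queries its block oracle on a message $m$ of its choice and receives $y$. It computes $c'[i-1] = \mathsf{RO}(y_1,y_i,i)$ and the pad $\mathsf{RO}(y_1)$, then checks whether $c'$ is the (masked) encoding of $m$. Against $\Embed_{\sk}$ the check succeeds with overwhelming probability. Against $\Response_n$ it succeeds only with probability about $2^{-(n/\ell-1)}$, since there the bits $c'$ are uniform.

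Encrypting $m$ under a public-key scheme does not rescue this, and neither does masking $m$ before encoding. In both cases $c'$ is still a codeword of a rate-$\rho<1$ code. The distinguisher just tests whether $c'$ is a codeword at all, which a uniform string is only with probability $2^{-(1-\rho)(n/\ell-1)}$.

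The missing idea is a secret key inside every oracle query, with the mask applied to the codeword rather than the message. The paper samples $r \gets \{0,1\}^\lambda$ and sets $\vk = \sk = r$. It rejection-samples $y_i$ against $H_1((r, y_i))$ and embeds $\Enc(m) \oplus H_2((r, y_1))$. The undetectability game never hands the distinguisher $\vk$, so it hits a point with prefix $r$ only with probability $\poly(\lambda)\cdot 2^{-\lambda}$. All oracle values relevant to embedding are therefore fresh from its view, and your rejection-sampling argument then goes through (the paper cites Fairoze et al.'s Lemma 3.2 for this step). Robustness is unaffected, because it is a deterministic decoding argument that holds even when the adversary knows $r$. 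That is why the scheme remains public key even though $\vk = \sk$.

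As a side remark, your worry that $y_i$ may lack entropy once $y_1$ is fixed is unnecessary. The hypothesis lower-bounds the min-entropy of $\Response_\ell(\pi)$ for every context $\pi$, including $(\pi, y_1, \ldots, y_{i-1})$.
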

\begin{proof}
We adapt the construction of \cite{fairoze2023publicly}. Let $(\Enc, \Dec)$ be an information-theoretic
error-correcting code with constant rate $\rho\in (0,1)$, that can correct up to a  $\delta \in (0, \frac{1}{2})$ fraction of errors\footnote{Such codes exists, for example \cite{alon1992construction,naor1990small,ta2017explicit}.}.
 Let $H_1$ and $H_2$ be independent random oracles. We define a block steganography scheme $\BWM = (\Gen, \Embed, \Dec)$:
\begin{itemize}
    \item $\BWM.\Gen(1^\lambda): $ samples $r \leftarrow\{0, 1\}^\lambda$ and outputs $\vk=\sk=r$.
\item $\BWM.\Embed(r, \pi, m):$ first samples a block of $\ell$ bits  $y_1 \leftarrow \Response_\ell(\pi)$, then computes the masked codeword
\[c = \Big(\Enc(m) \oplus H_2\big((r, y_1)\big)\Big) \in \{0, 1\}^{n/\ell - 1} \;.\]
The algorithm then embeds $c$ into the next $n/\ell - 1$ blocks of size $\ell$ by rejection-sampling. That is,
\[y_i \leftarrow \Response_\ell\big((\pi, y_1, \ldots, y_{i-1})\big) \text{  conditioned on }H_1\big((r, y_i)\big) = c[i-1] \;,
\]
for $i \in \{2, \ldots, n/\ell\}$. Then outputs  $y = (y_1, \ldots, y_{n/\ell}) \in \{0, 1\}^n$.

\item $\BWM.\Ver(r, \zeta \in \{0, 1\}^n):$ parses $\zeta = (\zeta_1, \ldots, \zeta_{n/\ell})$ into blocks of size $\ell$, then computes
\[
\pad = H_2\big((r,\zeta_1)\big), \text{ and }
c'[i-1] = H_1\big((r,\zeta_{i})\big) \text{ for } i \in \{2,\ldots,n/\ell\},
\]
and outputs $m = \Dec(c' \oplus \pad)$.
\end{itemize}
Undetectability and the fact that the rejection sampling succeeds for every block with all but negligible probability follows directly from \cite[Lemma 3.2]{fairoze2023publicly}. Note that using their proof requires masking the message with the random oracle value of the first block $H_2\big((r,y_1)\big)$.\\\\
To see robustness, let $\zeta = (\zeta_1, \ldots, \zeta_{n/\ell}) \gets \mathcal{A}^{\Embed_{r}}(r)$, and suppose there exists
$(\pi,m,y) \in \Trans(\cA^{\Embed_{r}})$ with
\[
\Phi_\delta(\zeta,y)=1.
\]
Denote $(y_1,\ldots,y_{n/\ell})$ as the $\ell$-bit blocks of $y$. By definition of $\Phi_\delta$, we have:
\begin{enumerate}
    \item  $y_1 = \zeta_1$, so the verifier $\Ver(r, \zeta)$ recovers the pad, $\pad = H_2\big((r,y_1)\big)$, exactly.
   \item Among the blocks $(\zeta_2, \ldots, \zeta_{n/\ell})$, at most a $\delta$-fraction differ from the corresponding $(y_2,\ldots,y_{n/\ell})$. Thus at most a $\delta$-fraction of the bits of $c'$ computed by $\BWM.\Ver$ differ from the initial encoding \(c = \Enc(m) \oplus \pad\).
\end{enumerate}
Because the error-correcting code can correct up to a $\delta$-fraction of errors, $\Dec(c'\oplus \pad) = m$.
\end{proof}

%% file: sections/rds-formal.tex
\section{Robust and Recoverable Digital Signatures}
\label{sec:rds}
We present our formal definition of robust digital signatures, which allow verification even in the presence of small, adversarial perturbations to the message.
\begin{definition}[Robust digital signatures with strong unforgeability]
\label{def:rds-formal}
Let $\lambda \in \NN$ be the security parameter and let $\Phi$ be any predicate. A digital signature scheme $(\Gen, \Sign, \Ver)$ is robust and strongly unforgeable with respect to $\Phi$ if for any PPT adversary $\mathcal{A}$, the scheme satisfies robustness,
\begin{align*}
    \Pr\left[ 
    \begin{array}{rl}
        (\pk,\sk) &\leftarrow \Gen(1^\lambda) \\
        (\zeta, \sigma) &\leftarrow \cA^{\Sign_{\sk}}(\pk)
    \end{array}
    :\begin{array}{rl}
    \;
    \Big(\exists (y, \sigma) \in \Pi\footnotemark \text{ such that } \Phi(\zeta, y) = 1\Big)
    \\ \wedge \; \Ver(\pk, \zeta,\sigma) = 0
    \end{array}
    \right] \le \negl(\lambda)\;,
\end{align*}
\footnotetext{We slightly overload notation by quantifying over tuples when some components are fixed by context. For example here, the statement $\exists (y, \sigma) \in \Pi$, where $\sigma$ is the signature output by the adversary, means $\exists y$ such that $(y, \sigma) \in \Pi$.}
and strong unforgeability,
\begin{align*}
    \Pr\left[
    \begin{array}{rl}
        (\pk,\sk) &\leftarrow \Gen(1^\lambda) \\
        (\zeta, \sigma) &\leftarrow \cA^{\Sign_{\sk}}(\pk)
    \end{array}
    :\begin{array}{rl}
    \;
    \Ver(\pk, \zeta,\sigma) = 1 \\
    \wedge \; \Big(\forall (y, \sigma) \in \Pi, \; \Phi(\zeta, y) = 0\Big)
    \end{array}\right] \le \negl(\lambda)\;,
\end{align*}
where $\Pi$ is the transcript $\Trans(\cA^{\Sign_{\sk}})$.
\end{definition}
A recoverable digital signature scheme, in turn, requires a PPT recovery algorithm $\Rec$ that can recover the original message from its signature and a nearby, adversarially chosen input. For a digital signature scheme with message size $n=n(\lambda)$, signature size $k = k(\lambda)$, $\Rec$ has the syntax
\begin{itemize}
    \item $\Rec(\pk,\zeta, \sigma):$ takes in the public key $\pk$, a string $\zeta \in \{0, 1\}^n$, and a signature $\sigma \in \{0, 1\}^k$ and outputs a recovered message or a failure symbol $y_{\mathsf{rec}} \in \{0, 1\}^n \cup \{\bot\}$.
\end{itemize}
For $\Rec$ to satisfy strong unforgeability, any recovered message $y_{\mathsf{rec}}$ must indeed be close to the input $\zeta$ and must have been originally signed with the input signature $\sigma$.
\begin{definition}[Recoverable digital signatures]
\label{def:rds-recoverability-formal}
Let $\lambda \in \NN$ be the security parameter and let $\Phi$ be any predicate. A digital signature scheme $(\Gen, \Sign, \Recover)$ is recoverable with respect to $\Phi$ if for any PPT adversary $\mathcal{A}$, the scheme is robustly recoverable, meaning
\begin{align*}
    \Pr\left[ 
    \begin{array}{rl}
        (\pk,\sk) &\leftarrow \Gen(1^\lambda) \\
        (\zeta, \sigma) &\leftarrow \cA^{\Sign_{\sk}}(\pk)\\
        y_{\rec} &\leftarrow \Recover(\pk, \zeta, \sigma)
    \end{array}
    :
    \begin{array}{rl}
    \;
    \Big(\exists (y, \sigma) \in \Pi \text{ such that } \Phi(\zeta, y) = 1\Big)\\
    \wedge \; y_{\rec} \neq y
     \end{array}
     \right] \le \negl(\lambda)\;,
\end{align*}
and unforgeably recoverable, meaning
\begin{align*}
    \Pr\left[
    \begin{array}{rl}
        (\pk,\sk) &\leftarrow \Gen(1^\lambda) \\
        (\zeta, \sigma) &\leftarrow \cA^{\Sign_{\sk}}(\pk)\\
        y_{\rec} &\leftarrow \Recover(\pk, \zeta, \sigma)
    \end{array}
    :
    \begin{array}{rl}
    \;
    y_{\rec} \neq \perp \\ \;
    \wedge \; \Big((y_{\rec}, \sigma) \notin \Pi \; \vee \; \Phi(\zeta, y_{\rec}) = 0\Big)
     \end{array}\right] \le \negl(\lambda)\;,
\end{align*}
where $\Pi$ is the transcript $\Trans(\cA^{\Sign_{\sk}})$.
\end{definition}

Another property we introduce is a notion of \emph{uniqueness} for signatures. While this property is not required for our main result, it will be useful to guarantee a certain chaining property that we present in Section \ref{sec:watermark-chain}. A digital signature has unique signatures if no efficient adversary can produce an output $\zeta$ and two messages $(y, y')$ with the same signature such that $\zeta$ is close to one message and far from another. Intuitively, this prevents an adversary from generating duplicate signatures on messages for which it can produce an intermediate output. Note that this property can be easily achieved by appending $\omega(\log \lambda)$ random bits to each signature, to ensure \emph{all} signatures are almost surely unique. However, we show that our construction inherently satisfies it.
\begin{definition}[Unique signatures]
\label{def:unique-signatures}
Let $\lambda \in \NN$ be the security parameter and let $\Phi$ be any predicate. A digital signature scheme $(\Gen, \Sign, \Ver)$ has unique signatures with respect to $\Phi$, if for any PPT $\cA$,
\begin{align*}
    \Pr\left[ 
    \begin{array}{rl}
        (\pk, \sk) \leftarrow \Gen(1^\lambda) \\
        (\zeta, y, y') \leftarrow \cA^{\Sign_{\sk}}(\pk)
    \end{array}
    :
    \begin{array}{rl}
    \;
    \Big(\Phi(\zeta, y) = 1 \; \wedge \; \Phi(\zeta, y') = 0\Big)\\
    \wedge \;\Big(\exists\sigma \text{ such that } (y, \sigma) \in \Pi \text{ and } (y', \sigma) \in \Pi \Big)
     \end{array}
     \right] \le \negl(\lambda)\;,
\end{align*}
where $\Pi$ is the transcript $\Trans(\cA^{\Sign_{\sk}})$.
\end{definition}

\subsection{Difference-recovering hash family} 
\label{sec:difference-recovery}
The core ingredient of our signature scheme is a property-preserving hashing scheme (PPH) \cite{boyle2019adversarially}. We observe that existing Hamming PPH constructions~\cite{fleischhacker2022property,holmgren2022nearly} support recovery of the bitwise difference $\Delta = x \oplus x'$ between two close strings $x$ and $x'$ using only their hashes $h(x)$ and $h(x')$. We define a slightly weaker property than this, which requires recovering the difference between two inputs when one of the inputs is given in the clear. This property is sufficient for watermarking applications, and we show that it can be achieved by \textit{any} PPH for the Hamming predicate. In Section \ref{sec:homomorphic-pph}, we define the stronger variant, which requires recovering the difference using only the hashes of both inputs.
\begin{definition}[Difference recovering PPH]\label{def:PPH-recoverability}
    For any predicate $\Phi$, a hash family $(\Sample, \{\mathcal{H}_\lambda\})$ is difference-recovering w.r.t. $\Phi$ if there exists a PPT algorithm $\Recover$ such that for any PPT adversary $\cA$,
\begin{align*}
    \Pr\left[
        \begin{array}{rl}
        h &\leftarrow \Sample(1^\lambda) \\ (x, x') &\leftarrow \cA(1^\lambda, h)\\
        \Delta &\leftarrow \Recover(h, h(x), x')
        \end{array}
        :
        \begin{array}{rl}
        \big(\Delta \neq x \oplus x' \wedge \Phi(x, x') = 1\big) \\
        \vee\; \big(\Delta \neq \bot \wedge \Phi(x,x') = 0\big)
        \end{array}
    \right] \leq \negl(\lambda)
\end{align*}
\end{definition}
\begin{proposition}
\label{prop:pph-recover-app} Let $\lambda$ be the security parameter. For any input size $n=\poly(\lambda)$ and threshold $r=\poly(\lambda)$ such that $r < n/2$, every $\Ham_{r/n}$-PPH supports difference recovery w.r.t. $\Ham_{r/n}$.
\end{proposition}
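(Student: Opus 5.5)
The plan is a generic reduction that uses only $\Eval$ as a distance oracle. Given $h$, the hash $z = h(x)$ and a string $x'$ in the clear, $\Recover$ first checks $\Eval(h, z, h(x'))$ and outputs $\bot$ if it returns $0$. Otherwise it finds a string $w$ at Hamming distance \emph{exactly} $r$ from $x$. It then reads off every bit of $x$ by flipping single bits of $w$ and asking $\Eval$ whether the result is still close to $x$. Finally it outputs $\Delta = x \oplus x'$, computed from the reconstructed $x$.

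\textbf{Step 1 (walk to the boundary).} Define $w^{(0)} = x'$ and $w^{(j)} = x' \oplus e_1 \oplus \cdots \oplus e_j$ for $j = 1, \ldots, n$. Each step changes $\|x - w^{(j)}\|_0$ by exactly $\pm 1$.
\begin{itemize}
\item At $j=0$ the distance is at most $r$, since we are in the case $\Ham_{r/n}(x,x')=1$.
\item At $j=n$ the distance is $n - \|x-x'\|_0 \ge n - r > r$, using $r < n/2$.
\end{itemize}
So there is a largest index $j^*$ with $\|x - w^{(j^*)}\|_0 \le r$. By maximality, the next step increases the distance to $r+1$, so $\|x - w^{(j^*)}\|_0 = r$ exactly. $\Recover$ finds $j^*$ by querying $\Eval(h, z, h(w^{(j)}))$ for all $j$ (or by binary search on a sign change; either way this costs $O(n)$ evaluations). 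It sets $w = w^{(j^*)}$.

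\textbf{Step 2 (read off $x$).} For each $i \in [n]$, flipping bit $i$ of $w$ moves the distance to $r+1$ if $w[i] = x[i]$ and to $r-1$ otherwise. Hence $x[i] = w[i]$ if and only if $\Eval(h, z, h(w \oplus e_i)) = 0$. This determines $x$, and $\Recover$ outputs $x \oplus x'$.

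\textbf{Correctness of the procedure.} Assuming every $\Eval$ answer equals the true predicate value, the procedure is exactly correct in both cases of Definition~\ref{def:PPH-recoverability}:
\begin{itemize}
\item If $\Ham_{r/n}(x,x') = 0$, the first check returns $0$ and $\Recover$ outputs $\bot$.
\item If $\Ham_{r/n}(x,x') = 1$, Steps 1 and 2 reconstruct $x$, so the output is $x \oplus x'$.
\end{itemize}

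\textbf{Security reduction.} It remains to show that all answers are correct except with negligible probability. This step is the main, though mild, obstacle, because the queried strings depend on the adversary's $(x,x')$ and on earlier answers. Suppose $\cA$ makes $\Recover$ fail with probability $1/q(\lambda)$. Build $\cB(1^\lambda, h)$ as follows:
\begin{itemize}
\item run $\cA$ to obtain $(x,x')$;
\item run $\Recover(h, h(x), x')$ itself, which it can do since it knows $x$, recording the at most $2n+1$ query strings $u_1, \ldots, u_{2n+1}$;
\item output $(x, u_s)$ for a uniformly random $s \in [2n+1]$.
\end{itemize}
Whenever $\Recover$ errs, at least one query has $\Eval(h, h(x), h(u_s)) \ne \Ham_{r/n}(x, u_s)$. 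Since $\cB$ runs $\Recover$ with the same $h$, $h(x)$ and $x'$, its recorded queries are exactly the ones made in $\cA$'s experiment, so that failing query is among them. Therefore $\cB$ breaks direct-access robustness (Definition~\ref{def:pph}) with probability at least $1/((2n+1)q(\lambda))$, which is non-negligible since $n = \poly(\lambda)$. This is a contradiction, so the PPH supports difference recovery with respect to $\Ham_{r/n}$. The algorithm runs in polynomial time: it makes $O(n)$ hash and $\Eval$ computations.
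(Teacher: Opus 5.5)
Your proposal is correct and is essentially the paper's proof: the same walk $w^{(j)} = x' \oplus e_1 \oplus \cdots \oplus e_j$ to a point at Hamming distance exactly $r$ from $x$, followed by single-bit flips of that boundary point to read off the support of $x \oplus w$. The paper stops at the first $j$ where $\Eval$ returns $0$, while you take the largest index with $\Eval = 1$ (or any crossing found by binary search), which works equally well; your explicit reduction guessing one of the at most $2n+1$ queries just spells out the paper's one-line remark that any incorrect $\Eval$ answer on $\Recover$'s queries would violate direct-access robustness.
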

\begin{figure}[htbp]
\begin{minipage}[t]{0.45\textwidth}
\begin{algorithm}[H]
\begin{algorithmic}[1]
    \caption{$\Recover$}
    \STATE  \textbf{input}: $(h, z, x')$
    \IF{If $\Eval(h, z, h(x')) = 0$}
        \STATE \textbf{return} $\perp$
    \ENDIF
    \STATE $\Delta \gets 0^n$
    \STATE $\zeta \leftarrow \GetBoundaryPoint(h, z, x')$
    \FOR{$j \in [n]$}
        \IF{$\Eval(h, z, h(\zeta \oplus e_j)) = 1$}
                \STATE $\Delta \gets \Delta \oplus e_j$
        \ENDIF
    \ENDFOR
    \STATE \textbf{return} $\Delta \oplus x' \oplus \zeta$
    \end{algorithmic}
\end{algorithm}
\end{minipage}
\hfill
\begin{minipage}[t]{0.45\textwidth}
  \begin{algorithm}[H]
    \begin{algorithmic}[1]
        \caption{$\GetBoundaryPoint$}
        \STATE  \textbf{input}: $(h, z, x')$
        \STATE $\zeta \gets x'$
        \FOR{$j \in [n]$}
            \STATE $\zeta \gets \zeta \oplus e_j$
            \IF{$\Eval(h, z, h(\zeta)) = 0$}
                \STATE \textbf{return} $\zeta \oplus e_j$
            \ENDIF
        \ENDFOR
        \STATE \textbf{return} $\bot$
        \end{algorithmic}
    \end{algorithm}
\end{minipage}
\caption{Generic PPH recovery algorithm}
\label{alg:pph-recover} 
\end{figure}
\begin{proof}
Let $h \leftarrow \Sample(1^\lambda)$. We construct a recovery algorithm $\Recover(h, \cdot, \cdot)$ which, given a hash $h(x)$ and an input $x'$ in the clear, returns the difference $x \oplus x'$ if $\Ham_{r/n}(x, x') = 1$, and returns $\bot$ otherwise. The recovery algorithm only needs oracle access to $\Hash(h,\cdot)$ and $\Eval(h,h(x),\cdot)$. For the reader’s convenience, we include the pseudocode for $\Recover$ in Figure~\ref{alg:pph-recover}.
    
We may assume that $\Eval(h, h(x), h(\zeta)) = \Ham_{r/n}(x, \zeta)$ for \emph{all} inputs $\zeta \in \mathbb{F}_2^n$ used by $\Recover$ to query $\Eval(h, h(x), \cdot)$, since $\Recover$ is an efficient deterministic algorithm. Any violation would contradict the direct-access robustness of the underlying Hamming PPH.

If $\Eval(h, h(x), h(x')) = 0$, then $\Recover(h, h(x), x')$ returns $\bot$, which is fine since $ \Ham_{r/n}(x,x') = \Eval(h, h(x), h(x')) = 0$.

If $\Eval(h, h(x), h(x')) = 1$, then $\Recover(h, h(x), x')$ needs to return the difference $x \oplus x'$. To this end, it first finds a \emph{boundary point} relative to $x$. That is, a point $\zeta \in \FF_2^n$ such that $\|x \oplus \zeta\|_0 = r$. Once such a point is found, difference recovery is straightforward: since we know that $\zeta \oplus x$ is $r$-sparse, we can determine its support of the by flipping each bit of $\zeta$ individually. Specifically, for each $j \in [n]$, we test whether $\Eval(h, h(x), h(\zeta \oplus e_j)) = 1$, where $e_j \in \FF_2^n$ denotes the $j$-th standard basis vector (i.e., the vector with a $1$ in the $j$-th position and zeros elsewhere). We conclude that $j$ lies in the support of $x \oplus \zeta$ if and only if the evaluation remains $1$ after the one-bit perturbation.

We now describe an efficient procedure for finding a boundary point $\zeta$ with respect to $x$, given $h(x)$ and $x'$ satisfying $\|x \oplus x'\|_0 \le r$. For any $j \in [n]$, define
\begin{align*}
    u_j &= e_1 \oplus \cdots \oplus e_j\;,\quad \zeta_j = x' \oplus u_j\;.
\end{align*}

Define the potential function $f: \{0,1,\ldots,n\} \to \{0,1,\ldots, n\}$ by
\begin{align*}
    f(j) = \|x \oplus \zeta_j\|_0\;,
\end{align*}
with $f(0) = \|x \oplus x'\|_0$.

Observe that for each $j < n$, we have $|f(j+1) - f(j)| = 1$. In other words, the value of $f(j)$ always changes by either $1$ or $-1$. In addition, it holds that $f(0) \le r$ and $f(n) = n - f(0) > r$. Thus, by the discrete intermediate value property, there exists $s \in \{0,1,\ldots, n\}$ such that $f(s) = r$. The corresponding $\zeta_s$ is the desired boundary point.

Although $\Recover$ cannot explicitly compute $f(j)$, it can determine whether $f(j) \le r$ by querying $\Eval(h, h(x), h(\zeta_j))$. The algorithm increments $j$ from $1$ to $n$, stopping at the first index where $\Eval$ returns $0$. Let $j$ be the smallest such index. Then the boundary point is $\zeta_{j-1}$.
\end{proof}

\paragraph{Difference recovery from standard CRHF.} Proposition~\ref{prop:pph-recover-app} implies that the PPH from homomorphic CRHFs given in Theorem~\ref{thm:pph-hamming} satisfies difference recovery with respect to $\Ham_\delta$. However, the converse does not hold. In particular, one can instantiate a difference recovering hash family for Hamming that does not admit an evaluation algorithm $\Eval$ for computing the predicate on two hashes, as the recovery algorithm $\Rec$ receives one of its inputs in the clear. This instantiation is heavily inspired by the notion and construction of \emph{secure sketches}~\cite{dodis2008fuzzy}. We refer the reader to Section~\ref{sec:sharp-sketch} for additional details.

\begin{theorem}[{Sharp sketch for Hamming, adapted from~\cite[Section 8.3]{dodis2008fuzzy}}]
\label{thm:sharp-sketch-hamming}
Let $\lambda \in \NN$ be the security parameter, and let $n=\poly(\lambda)$ denote the input size. Suppose there exists a CRHF with output size $\ell=\ell(\lambda)$. For any constants $\delta \in (0,1/2)$ and $\rho > H_2^{\mathsf{BZ}}(\delta)$, where $H_2^{\mathsf{BZ}}(\cdot)$ refers to the Blokh-Zyablov bound, there exists a difference recovering hash family $(\Sample, \Rec)$ w.r.t. $\Ham_\delta$ that has hash size $\rho \cdot n + \ell$.
\end{theorem}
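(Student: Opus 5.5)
We will construct the hash family as a secure-sketch-style object: a syndrome of an efficiently list-decodable linear code, concatenated with a CRHF digest of the input. Concretely, fix an $\FF_2$-linear code $C \subseteq \FF_2^n$ with parity-check matrix $H \in \FF_2^{\rho n \times n}$. We choose $C$ to be efficiently list-decodable from a $\delta$ fraction of errors with polynomial list size; by the Blokh--Zyablov bound (Remark~\ref{rem:blokh-zyablov}) such codes exist with redundancy $\rho n$ for any $\rho > H_2^{\mathsf{BZ}}(\delta)$. Since the list decoder works on cosets, it gives a syndrome list decoder: from $s = Hv$ it outputs all $e$ with $\|e\|_0 \le \delta n$ and $He = s$. $\Sample(1^\lambda)$ samples a CRHF $g$ with output size $\ell$, and sets $h(x) := (Hx, g(x))$, which has size $\rho n + \ell$.

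Given $(h, z = (s, d), x')$, the algorithm $\Recover$ computes the syndrome $s \oplus Hx'$. Since $H$ is linear, this equals $H(x \oplus x')$ whenever $z = h(x)$. It then runs the syndrome list decoder to obtain a polynomial-size list $\{\Delta_1, \ldots, \Delta_L\}$ of vectors of weight at most $\delta n$ with that syndrome. It outputs the unique $\Delta_i$ with $g(x' \oplus \Delta_i) = d$, and outputs $\bot$ if no $\Delta_i$ passes this check or if $|x| \ne |x'|$. Efficiency is immediate from the efficient list decoding and the polynomial list size.

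For correctness, fix a PPT adversary outputting $(x, x')$. \emph{Case $\Ham_\delta(x,x')=1$:} the true difference $x \oplus x'$ has weight at most $\delta n$ and the right syndrome, so it lies in the list and passes the hash check. Recovery fails only if some other $\Delta_i \ne x\oplus x'$ also passes. In that case $x' \oplus \Delta_i \ne x$ and $g(x'\oplus\Delta_i) = g(x)$, which is a collision. \emph{Case $\Ham_\delta(x,x')=0$:} every $\Delta_i$ in the list has weight at most $\delta n$, so $x'\oplus \Delta_i \ne x$. Any output other than $\bot$ therefore again gives a collision $(x, x'\oplus\Delta_i)$ for $g$.

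The reduction $\mathcal{B}$ to collision resistance takes $g$, samples $H$ itself (or fixes it publicly), runs the adversary, runs $\Recover$, and outputs the colliding pair found in the list. Each failure event of Definition~\ref{def:PPH-recoverability} thus yields a collision with the same probability, which is negligible.

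The main obstacle is the coding-theoretic step: pinning down that a linear code meeting the Blokh--Zyablov bound has an efficient list decoder with polynomial list size at radius $\delta$ that works in syndrome form. We would first try to cite the Guruswami--Rudra concatenated constructions~\cite{guruswami2008better}, as in Holmgren et al. Converting a codeword list decoder into a syndrome decoder is routine: solve $Hv = s$ for any preimage $v$, then list-decode $v$ and take $v \oplus c$ for each codeword $c$ in the list.
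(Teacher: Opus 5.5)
Your proposal is correct and matches the construction the paper relies on. The paper gives no separate proof; it defers to Dodis et al.\ (Section 8.3), whose sketch is exactly the syndrome of an efficiently list-decodable binary linear code at the Blokh--Zyablov rate concatenated with a CRHF digest, with your observation that collision resistance both disambiguates the list in the close case and forces the $\bot$ output in the far case being precisely what yields the sharp, difference-recovering guarantee stated here.
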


\begin{remark}[Comparison with~\refThm{thm:pph-hamming}]
The hash family from Theorem \ref{thm:sharp-sketch-hamming} requires only a standard collision-resistant hash function, not a homomorphic one. Moreover, it uses efficiently list-decodable codes over $\FF_2$, whereas~\refThm{thm:pph-hamming} uses codes over $\FF_3$. As a result, it benefits from the Blokh-Zyablov bound over $\FF_2$ and avoids the multiplicative factor of $\log_2 3$ in the hash size. However, it is not a PPH.
\end{remark}

\subsection{Robust signature from PPH}
\label{sec:rds-construction}
We now prove a generic theorem showing that the construction presented in Section~\ref{sec:rds-overview} is both robust and unforgeable with respect to $\Phi$. In Section \ref{sec:rds-instantiation}, we instantiate the theorem with an existing Hamming PPH \cite{holmgren2022nearly}. We recall the construction in Figure~\ref{alg:rds-construct}.

\begin{figure}[htbp]
\centering
\begin{minipage}[t]{0.3\textwidth}
    
  \begin{algorithm}[H]
    \caption{$\RGen$}\label{alg:rgen-app}
    \begin{spacing}{1.2} 
    \begin{algorithmic}[1]
        \STATE  \textbf{input}: $(1^\lambda)$
        \STATE $h \leftarrow \Sample(1^\lambda)$
        \STATE $(\pk, \sk) \leftarrow \Gen(1^\lambda)$
        \STATE \textbf{return} $((h, \pk), \sk)$
    \end{algorithmic}
    \end{spacing}
\end{algorithm}
\end{minipage}
\hfill
\begin{minipage}[t]{0.3\textwidth}
  \begin{algorithm}[H]
    \caption{$\RSign$}\label{alg:rsign-app}
    \begin{spacing}{1.2} 
    \begin{algorithmic}[1]
        \STATE  \textbf{input}: $((h, \pk), \sk, y)$
        \STATE $z \leftarrow h(y)$
        \STATE $\tau \leftarrow \Sign(\sk, z)$
        \STATE \textbf{return} $(z, \tau)$
    \end{algorithmic}
    \end{spacing}
\end{algorithm}
\end{minipage}
\hfill
\begin{minipage}[t]{0.33\textwidth}
\begin{algorithm}[H]
    \caption{$\RVrfy$}
    \label{alg:rvrfy-app}
    \begin{spacing}{1.2}
    \begin{algorithmic}[1]
        \STATE  \textbf{input}: $((h,\pk), \zeta, \sigma)$
        \STATE $(z, \tau) \leftarrow \sigma$
        \IF{$\Ver(\pk, z, \tau) = 1$} 
        \STATE $b \leftarrow \Eval(h, h(\zeta), z)$
        \STATE \textbf{return} $b$
        \ENDIF
        \STATE \textbf{return} 0
        \end{algorithmic}
        \end{spacing}

\end{algorithm}
\end{minipage}
\caption{Robust digital signature construction}
\label{alg:rds-construct}
\end{figure}

\begin{theorem}[Generic robust signatures]
    \label{thm:rds-appdx-main}
    Let $\lambda \in \NN$ be the security parameter, $n = \poly(\lambda)$ be any message size, and $\Phi$ be any predicate. If $\PPH[\Sample, \Hash, \Eval]$ is a $\Phi$-PPH with hash size $k = \poly(\lambda)$ and $\DSS[\Gen, \Sign, \Ver]$ is a strongly unforgeable digital signature scheme with signature size $\ell = \poly(\lambda)$, then $(\RGen, \RSign, \RVrfy)$, defined in Figure~\ref{alg:rds-construct}, is a robust and unforgeable signature scheme for $\Phi$ with signature size $k + \ell$. Moreover, the scheme has unique signatures with respect to $\Phi$.
\end{theorem}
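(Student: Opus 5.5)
We prove the three properties (robustness, strong unforgeability, unique signatures) separately, each by reduction to either the direct-access robustness of $\PPH$ or the strong unforgeability of $\DSS$. The signature-size claim is immediate, since $\sigma=(z,\tau)$ with $|z|=k$ and $|\tau|=\ell$. Throughout, fix a PPT adversary $\cA$ against $(\RGen,\RSign,\RVrfy)$. Let $\Pi=\Trans(\cA^{\RSign_{\sk}})$; its entries have the form $(y,(h(y),\tau))$ with $\tau\leftarrow\DSS.\Sign(\sk,h(y))$. I will assume $\Phi$ is symmetric, or else write $\Eval$ with its arguments in the order matching $\Phi(\zeta,y)$. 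The construction computes $\Eval(h,h(\zeta),z)$, so the argument order is consistent with the predicate $\Phi(\zeta,y)$.

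\emph{Robustness.} Suppose $\cA$ outputs $(\zeta,\sigma)$ with $(y,\sigma)\in\Pi$ and $\Phi(\zeta,y)=1$. Write $\sigma=(h(y),\tau)$, where $\tau$ is an honest $\DSS$ signature on $h(y)$. By perfect correctness of $\DSS$, the check $\Ver(\pk,h(y),\tau)=1$ passes. So $\RVrfy$ outputs $\Eval(h,h(\zeta),h(y))$, and this can be $0$ only if $\Eval$ disagrees with $\Phi(\zeta,y)=1$. I build a PPH adversary $\cB(h)$ as follows. It samples $(\pk,\sk)\leftarrow\DSS.\Gen$ itself, runs $\cA$ while simulating $\RSign$ honestly (it knows $h$ and $\sk$), and receives $(\zeta,\sigma)$. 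It then guesses $y$ uniformly among the transcript messages whose signature equals $\sigma$, or, more simply, finds the matching transcript entry and outputs $(\zeta,y)$. This breaks direct-access robustness with the same probability, which is negligible.

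\emph{Strong unforgeability.} Suppose $\RVrfy(\pk,\zeta,\sigma)=1$ with $\sigma=(z,\tau)$, and $\Phi(\zeta,y)=0$ for every $(y,\sigma)\in\Pi$. There are two cases. In case (a), no transcript entry has signature exactly $\sigma$. Then $(z,\tau)$ is a valid $\DSS$ pair not in $\Trans(\cA'^{\DSS.\Sign_{\sk}})$. To exploit this, a $\DSS$ forger $\cA'(\pk)$ samples $h$ itself, answers each $\RSign$ query $y$ by querying its oracle on $h(y)$, and outputs $(z,\tau)$. The key observation is that the $\DSS$ transcript is exactly $\{(h(y),\tau)\}$, so $(z,\tau)$ lies in it iff $\sigma$ appears in $\Pi$. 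Hence case (a) contradicts strong unforgeability. In case (b), some $(y,\sigma)\in\Pi$ exists, so $z=h(y)$ and $\Phi(\zeta,y)=0$, yet $\Eval(h,h(\zeta),h(y))=1$. The reduction $\cB$ above, outputting $(\zeta,y)$, then breaks the PPH. Because several $y$ may share $\sigma$ and all are $\Phi$-far, any one of them works.

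\emph{Unique signatures.} Suppose $\cA$ outputs $(\zeta,y,y')$ with a common $\sigma=(z,\tau)$ signing both. Then $h(y)=h(y')=z$, while $\Phi(\zeta,y)=1$ and $\Phi(\zeta,y')=0$. Direct-access robustness applied to $(\zeta,y)$ and to $(\zeta,y')$ would force $\Eval(h,h(\zeta),z)$ to equal both $1$ and $0$, which is impossible. So $\cB$, simulating as before, outputs whichever of $(\zeta,y),(\zeta,y')$ is misevaluated (it can compute $\Eval$ and $\Phi$, since predicates are efficient). This succeeds whenever $\cA$ does.

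The main subtlety I expect is bookkeeping rather than mathematics. The issues are the argument order of $\Eval$ versus $\Phi$, making the DSS transcript correspondence exact (this needs $\RSign$ to be deterministic given $\tau$, which it is), and noting that $h$ is public so the PPH adversary can simulate everything.
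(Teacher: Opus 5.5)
Your proposal is correct and follows essentially the same route as the paper. Robustness comes from $\DSS$ correctness plus direct-access robustness of the PPH, and strong unforgeability splits into exactly the paper's two cases: a pair $(z,\tau)$ outside the $\DSS$ transcript is a strong forgery, while a signature appearing in $\Pi$ forces a PPH misevaluation on $(\zeta,y)$. Unique signatures follow because $h(y)=h(y')$ forces $\Eval$ to err on one of $(\zeta,y)$ and $(\zeta,y')$. You also make the simulating reductions more explicit than the paper does. The only nit is that if $\Eval$ is randomized, your unique-signatures adversary $\mathcal{B}$ should output one of the two pairs at random, losing a factor of $2$, rather than ``whichever is misevaluated,'' which matches the paper's ``same output distribution'' argument.
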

\begin{proof}
\label{proof:rds-appdx}
Robustness of $\RDS$ follows immediately from the correctness of the underlying $\DSS$ and the direct-access robustness of $\PPH$. We focus on proving strong unforgeability. Assume for contradiction that there exists a PPT adversary $\cA$ that outputs a pair $(\zeta, \sigma)$ such that $\Phi(\zeta, y) = 0$ for all $(y, \sigma) \in \Trans(\cA^{\RSign})$ but the pair passes $\RVrfy$ with non-negligible probability. 

That is, let $(\RDS.\pk, \RDS.\sk) \leftarrow \RGen(1^\lambda)$, $(\zeta, \sigma) \leftarrow \cA^{\RSign_{\RDS.\sk}}(\RDS.\pk)$, and let $\Pi$ denote the transcript $\Trans(\cA^{\RSign})$, where each entry logs a tuple $(y, z, \tau)$, with $z = h(y)$ and $\tau = \DSS.\Sign(\DSS.\pk, z)$. Then there exists a polynomial $q: \NN \to \NN$ such that
\begin{align*}
\Pr\left[
    \begin{array}{rl}
      \RVrfy(\RDS.\pk, \zeta, \sigma) = 1 \\\wedge \; \forall (y, z, \tau) \in \Pi, \Phi(\zeta, y) = 0
    \end{array}
  \right] > 1/q(\lambda)\;.
\end{align*}
Let $E_{\DSS}$ denote the event that $\sigma = (z, \tau)$ constitutes a successful forgery against $\DSS$ , and let $E_{\PPH}$ denote the event that $(\zeta, z)$ violates the direct-access robustness of $\PPH$. More precisely,
\begin{align*}
    E_\DSS &= \DSS.\Ver(\pk, z, \tau) = 1\; \wedge\; (\cdot, z, \tau) \not\in \Pi\;, \\
    E_\PPH &=\Eval(h,h(\zeta),z) = 1\; \wedge\; \bigvee_{(y,z,\tau) \in \Pi} \Phi(\zeta, y) = 0\;,
\end{align*}
where $(\cdot, z, \tau) \not\in \Pi$ is shorthand for $\not \exists y \text{ such that } (y, z, \tau) \in \Pi$. Since \[\RVrfy(\RDS.\pk,\zeta, \sigma) = \DSS.\Ver(\DSS.\pk, z, \tau) \wedge \Eval(h,h(\zeta),z)\;,\]
we observe that
\begin{align*}
    \RVrfy(\RDS.\pk, \zeta, \sigma) = 1\; \wedge\; (\forall (y, \sigma)& \in \Pi, \Phi(\zeta, y) = 0) \subseteq E_\DSS \vee (E_\PPH \wedge (\cdot, z, \tau) \in \Pi)\;.
\end{align*}
It follows that $\Pr\left[
    E_\DSS \vee (E_\PPH \wedge (\cdot, z, \tau) \in \Pi)
  \right] > 1/q(\lambda)$. This leads to a contradiction, since both $E_{\DSS}$ and $E_{\PPH} \wedge (\cdot, z, \tau) \in \Pi$ should occur with negligible probability due to the the strong unforgeability of $\DSS$, and the direct-access robustness of $\PPH$ respectively. Therefore, the robust signature scheme for $\Phi$ presented in Figure~\ref{alg:rds-construct} is strongly unforgeable.

\paragraph{Unique signatures. } Suppose, for the sake of contradiction, that there exists an adversary $\mathcal{A}$ and a polynomial $q : \mathbb{N} \to \mathbb{N}$ such that for  $(\pk,\sk) \leftarrow \RGen(1^\lambda)$ and $(\zeta, y, y') \leftarrow \cA^{\RSign_{\sk}}(\pk)$\ ,

\begin{align*}
    \Pr\left[ 
    \begin{array}{rl}
    \;
    \exists\sigma \text{ such that } (y, \sigma) \in \Pi \text{ and } (y', \sigma) \in \Pi \\
    \wedge \; \Phi(\zeta, y) = 1 \; \wedge \; \Phi(\zeta, y') = 0
     \end{array}
     \right] > 1/q(\lambda)\;,
\end{align*}
Then $\mathcal{A}$ finds $y$ and $y'$ such that $h(y) = h(y')$, where $h$ is the PPH hash function, with probability at least $1/q(\lambda)$, because their signatures are identical. However, this implies that either the pair $(\zeta, y)$ or $(\zeta, y')$ violates the robustness of $\PPH$. Specifically, the evaluation algorithm $\Eval$ cannot distinguish $h(y)$ from $h(y')$, and hence $\PPH.\Eval(h, h(y), h(\zeta)) \text{ and }  \PPH.\Eval(h, h(y'), h(\zeta))$ will have the same output distribution, even though $\Phi(\zeta, y) \neq \Phi(\zeta, y')$. Therefore, at least one of these pairs must cause $\Eval$ to output the incorrect predicate value with inverse-polynomial probability, contradicting the robustness of $\PPH$.
\end{proof}

\subsection{Recoverable signatures from difference-recovering PPH}
\label{sec:recoverability}
We show that our generic construction from~\refThm{thm:rds-appdx-main} is recoverable, provided the underlying $\PPH$ is difference recovering. Let $(\RGen, \RSign, \Recover)$ denote the scheme from Figure~\ref{alg:rds-construct}, where the verification algorithm $\RVer$ is replaced by the recovery algorithm $\Recover$ presented in Algorithm~\ref{alg:rds-rec}. The guarantees of this scheme are formally stated in~\refCor{cor:ham-pph-to-recoverable}.
\begin{algorithm}[H]
    \caption{$\Recover$}
    \label{alg:rds-rec}
    \begin{algorithmic}[1]
        \STATE  \textbf{input}: $(\pk =(h, \DSS.\pk), \zeta, \sigma' = (z, \tau))$

        \IF{$\RVer(\pk, \zeta, \sigma) = 0$}
            \STATE return $\perp$
        \ENDIF
        \STATE $\Delta \leftarrow \PPH.\Recover(h, z, \zeta)$
        \STATE return $\zeta \oplus \Delta$
        \end{algorithmic}
\end{algorithm}

\begin{corollary}[Recoverable signatures]
\label{cor:ham-pph-to-recoverable}
Let $\Phi$ be any predicate. If $\PPH[\Sample, \Hash, \Recover]$ is difference recovering w.r.t. $\Phi$ (as in Definition \ref{def:PPH-recoverability}) and $\DSS[\Gen, \Sign, \Vrfy]$ is a strongly unforgeable digital signature scheme, then the resulting $(\RGen, \RSign, \Recover)$ is a recoverable digital signature scheme w.r.t.~$\Phi$.
\end{corollary}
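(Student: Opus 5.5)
We prove the two properties of Definition~\ref{def:rds-recoverability-formal} separately, each by reduction to the security of the underlying primitives. The plan is to reuse Theorem~\ref{thm:rds-appdx-main} for the signature part and Definition~\ref{def:PPH-recoverability} for the hash part. Throughout, $\RVer$ is the verification procedure of Figure~\ref{alg:rds-construct}. In the recoverable setting its $\Eval$ check is replaced by the test $\PPH.\Recover(h,z,\zeta)\neq\bot$, since $\PPH$ may lack an evaluation algorithm. Fix a PPT $\cA$ and let $(\zeta,\sigma'=(z,\tau))$ be its output. Let $\Pi$ log tuples $(y,z,\tau)$ with $z=h(y)$ and $\tau\leftarrow\DSS.\Sign(\DSS.\sk,z)$.

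\textbf{Robust recoverability.} Suppose there is $(y,\sigma')\in\Pi$ with $\Phi(\zeta,y)=1$. Then $z=h(y)$ and $\DSS.\Ver(\DSS.\pk,z,\tau)=1$ by correctness of $\DSS$. Build $\cB(h)$ that samples its own $\DSS$ keys, simulates $\RSign$ for $\cA$, and outputs the pair $(x,x')=(y,\zeta)$ with $y$ taken from the matching transcript entry. $\cB$ identifies this entry by checking $\Phi$, which is efficiently computable. By Definition~\ref{def:PPH-recoverability}, except with negligible probability $\PPH.\Recover(h,h(y),\zeta)=y\oplus\zeta\neq\bot$. Hence the verification step passes and $\Recover$ returns $\zeta\oplus\Delta=y$.

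\textbf{Unforgeable recoverability.} Suppose $y_{\rec}\neq\bot$. Then $\DSS.\Ver(\DSS.\pk,z,\tau)=1$ and $\Delta=\PPH.\Recover(h,z,\zeta)\neq\bot$. We split into two cases.
\begin{itemize}
\item If $(z,\tau)$ was never output by the signing oracle, then $(z,\tau)$ is a strong forgery against $\DSS$. A reduction receives $\DSS.\pk$, samples $h$ itself, and answers $\RSign$ queries via its own $\DSS$ signing oracle. This case therefore has negligible probability.
\item Otherwise some $(y,z,\tau)\in\Pi$ exists with $z=h(y)$. Take the tuple $(y,\zeta)$ as a difference-recovery adversary, again simulating the signing oracle with self-sampled $\DSS$ keys. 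If $\Phi(y,\zeta)=0$, then a non-$\bot$ output violates Definition~\ref{def:PPH-recoverability}. So $\Phi=1$ and $\Delta=y\oplus\zeta$, giving $y_{\rec}=y$, $(y_{\rec},\sigma')\in\Pi$, and $\Phi(\zeta,y_{\rec})=1$.
\end{itemize}

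The main obstacle is the second case when several transcript entries share the same $z$, that is, when the adversary has found PPH collisions. For a fixed candidate $y$, the reduction cannot know in advance which entry to target. I would handle this by guessing uniformly among the polynomially many entries with hash $z$, which loses only a polynomial factor. Alternatively, I would observe that the bad event implies a specific entry $y$ with $\Phi(y,\zeta)=0$ or with $\Delta\neq y\oplus\zeta$, and the reduction can find such an entry efficiently by scanning $\Pi$, since $\Phi$ is efficiently computable.

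A second, minor point is the argument order of $\Phi$, since $\Phi(\zeta,y)$ versus $\Phi(y,\zeta)$ matters when $\Phi$ is not symmetric. I would align it with Definition~\ref{def:PPH-recoverability} by choosing $(x,x')=(y,\zeta)$ and reading the predicate accordingly, noting that for $\Ham_\delta$ the order is irrelevant.
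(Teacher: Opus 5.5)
Your proof is correct, but it is organized differently from the paper's.

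The paper proves both properties in two lines by citing Theorem~\ref{thm:rds-appdx-main}. Robustness and strong unforgeability of the $\RDS$ say that, with overwhelming probability, $\RVer$ accepts iff some $y$ with $(y,\sigma)\in\Pi$ is $\Phi$-close to $\zeta$. The correctness clause of Definition~\ref{def:PPH-recoverability}, applied to that $y$, then gives $\zeta\oplus\Delta=y$. That route implicitly needs a genuine PPH whose $\Eval$ has direct-access robustness, which the hypothesis $\PPH[\Sample,\Hash,\Recover]$ does not literally provide.

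You instead drop $\Eval$, test $\PPH.\Recover(h,z,\zeta)\neq\bot$, and reduce directly to strong unforgeability of $\DSS$ and to \emph{both} clauses of difference recovery. The sharpness clause ($\Delta\neq\bot\Rightarrow\Phi=1$) does the work that PPH robustness does in the paper. So, despite your opening sentence, you re-derive the case split of Theorem~\ref{thm:rds-appdx-main} rather than reusing it. In exchange, your argument applies verbatim to $\Eval$-free families such as the sharp sketch of Theorem~\ref{thm:sharp-sketch-hamming}. That is exactly what Corollary~\ref{cor:recoverable-signature-concrete} plugs in. The paper covers that case only through the remark following the corollary, using the different test $\Phi(\zeta,\PPH.\Recover(h,z,\zeta)\oplus\zeta)$. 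If $\Eval$ is present and you want Algorithm~\ref{alg:rds-rec} literally, add one appeal to direct-access robustness in the robust direction.

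Your ``main obstacle'' is milder than you expect in the unforgeability direction. If $y_{\rec}\neq\bot$ and the bad event holds, then \emph{every} $y$ with $(y,\sigma')\in\Pi$ yields a violating pair $(y,\zeta)$. Either $\Phi(y,\zeta)=0$ with $\Delta\neq\bot$, or $\Phi(y,\zeta)=1$ with $\Delta\neq y\oplus\zeta$. So the reduction may output any such entry with no loss. Guessing is needed only in the robust direction, when two distinct colliding entries are both close to $\zeta$.

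Drop the ``scanning'' variant that tests $\Delta\neq y\oplus\zeta$. If $\PPH.\Recover$ is randomized, the reduction's own run is not the challenger's, so the choice of $y$ must not depend on $\Delta$. The argument-order mismatch you flag is also present, unaddressed, in the paper's proof.
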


\begin{remark}[Recoverable signatures without PPH]
Note that Corollary \ref{cor:ham-pph-to-recoverable} can be instantiated using \emph{any} difference recovering hash family, not necessarily a PPH. Although $\RVer$, invoked by $\Rec$, leverages PPH evaluation $\Eval(h, h(\zeta), z)$, this call can instead be replaced by computing $\Phi(\zeta, \PPH.\Rec(h, z, \zeta) \oplus \zeta)$. We present the construction using $\Eval$ only because PPHs are widely studied.
\end{remark}

\begin{proof}
Let $(\pk = (h, \DSS.\pk), \sk) \leftarrow \RGen(1^\lambda)$ and let 
$\zeta \leftarrow \cA^{\Sign_{\sk}}(\pk)$ be the adversary's output. 

\paragraph{Robust recovery.}  
Suppose there exists $y$ such that $(y, \sigma) \in \Pi$ and $\Phi(\zeta, y) = 1$. By robustness of $\RDS$, $\RVer(\pk, \zeta, \sigma) = 1$ with probability at least $1 - \negl(\lambda)$.
Conditioned on this, $\sigma = (h(y), \cdot)$, and the difference-recovery of $\PPH$ satisfies 
$\PPH.\Recover(h, h(y), \zeta) = y \oplus \zeta$ with probability at least $1 - \negl(\lambda)$, so the final output is $y$ with overwhelming probability.

\paragraph{Unforgeable recovery.}  
Suppose $\Rec(\pk, \zeta, \sigma) = y_{\rec}$, with $y_{\rec} \neq \perp$. Then by definition of $\Rec$, $\RVer(\pk, \zeta, \sigma) = 1$.  
By strong unforgeability of $\RDS$, with probability at least $1-\negl(\lambda)$ there exists $y$ such that $(y, \sigma) \in \Pi$ and $\Phi(\zeta, y) = 1$, meaning $\sigma = (h(y), \cdot)$ corresponds to a valid signature of $y$.  

Conditioned on the above, difference-recovery of $\PPH$ yields 
$\PPH.\Recover(h, h(y),$ $\zeta)$  $= y \oplus \zeta$ with probability at least $1 - \negl(\lambda)$, meaning
\[
    y_{\rec} = \zeta \oplus \PPH.\Recover(h, h(y), \zeta) = \zeta \oplus (y \oplus \zeta) = y,
\] 
which implies that $y_{\rec}$ satisfies $(y_{\rec}, \sigma) \in \Pi$ and $\Phi(\zeta, y_{\rec}) = 1$. Since the intersection of both events occurs with overwhelming probability, the claim follows.
\end{proof}

\subsection{Instantiating Hamming signatures}
\label{sec:rds-instantiation}
We instantiate a recoverable digital signature scheme for Hamming predicates, providing concrete trade-offs between the signature size $k$ (relative to the message length $n$) and the tolerable Hamming error $r$. Since the regime of practical interest in watermarking corresponds to a constant error rate $r/n = \Omega(1)$, we focus on the rates achievable in this setting.

Here, we refer to the signature-to-message rate, $k/n$ of a RDS with signature size $k$, message size $n$. 

\begin{corollary}[Recoverable digital signatures with constant error-rate]
\label{cor:recoverable-signature-concrete}
Let $\lambda \in \NN$ be the security parameter, and suppose there exists a CRHF with output size $\ell = \ell(\lambda)$ and a strongly unforgeable digital signature scheme with signature size $k = k(\lambda)$. Let $n = \poly(\lambda)$ such that $(\ell + k)/n = o(1)$ be the message size. For any constants $\delta \in (0, 1/2)$ and $\rho > H_2^{\mathsf{BZ}}(\delta)$, where $H_2^{\mathsf{BZ}}(\cdot)$ denotes the Blokh--Zyablov bound, there exists a recoverable digital signature scheme $(\Gen, \Sign, \Recover)$ for the predicate $\Ham_{\delta}$ over $\{0,1\}^n$ with signature-to-message rate $\rho + o(1)$. Moreover, the scheme has unique signatures with respect to $\Ham_\delta$.
\end{corollary}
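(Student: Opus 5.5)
The plan is to instantiate Corollary~\ref{cor:ham-pph-to-recoverable} with the difference-recovering hash family of Theorem~\ref{thm:sharp-sketch-hamming} and the assumed strongly unforgeable $\DSS$. Fix constants $\delta \in (0,1/2)$ and $\rho > H_2^{\mathsf{BZ}}(\delta)$. Theorem~\ref{thm:sharp-sketch-hamming}, applied with the given CRHF of output size $\ell$, yields a family $(\Sample, \Rec)$ that is difference recovering w.r.t.\ $\Ham_\delta$ on $\{0,1\}^n$, with hash size $\rho n + \ell$.

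\textbf{Removing $\Eval$.} This family is not a PPH, so there is no $\Eval$ to call inside $\RVrfy$. Following the remark after Corollary~\ref{cor:ham-pph-to-recoverable}, I replace the check $\Eval(h,h(\zeta),z)$ by first computing $\Delta \leftarrow \Rec(h,z,\zeta)$. The check then outputs $1$ iff $\Delta \neq \bot$ and $\Ham_\delta(\zeta, \zeta\oplus\Delta)=1$.

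I must confirm that the proofs of Theorem~\ref{thm:rds-appdx-main} and Corollary~\ref{cor:ham-pph-to-recoverable} only use the following consequence of PPH robustness. For adversarially chosen $(y,\zeta)$, the check applied to $h(y)$ and $\zeta$ equals $\Ham_\delta(\zeta,y)$ with overwhelming probability. Difference recovery gives this directly:
\begin{itemize}
\item If $\Ham_\delta(y,\zeta)=1$, then $\Delta = y\oplus\zeta$, so the check returns $\Ham_\delta(\zeta,y)=1$.
\item If $\Ham_\delta(y,\zeta)=0$, then $\Delta=\bot$, so the check returns $0$.
\end{itemize}
In both cases, a reduction that breaks the check yields an adversary $\cA(1^\lambda,h)$ that breaks Definition~\ref{def:PPH-recoverability}. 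That adversary simulates the signing oracle itself, since it can sample the $\DSS$ keys on its own.

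With this substitution, the robustness, strong unforgeability, robust recovery and unforgeable recovery arguments carry over unchanged. Unforgeability still splits into a $\DSS$ forgery case and a ``hash check wrong'' case, and each is reduced to the corresponding primitive.

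\textbf{Unique signatures.} The argument from Theorem~\ref{thm:rds-appdx-main} adapts as follows. Suppose $y$ and $y'$ share a signature; then $h(y)=h(y')=z$. Suppose also $\Ham_\delta(\zeta,y)=1$ and $\Ham_\delta(\zeta,y')=0$. The value $\Rec(h,z,\zeta)$ cannot be both $y\oplus\zeta$ and $\bot$. Strictly it is a single random variable, so it must be wrong for one of the two pairs $(y,\zeta)$ or $(y',\zeta)$ with at least half the adversary's probability. This violates difference recovery.

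\textbf{Size.} The signature is $(z,\tau)$, of length $\rho n + \ell + k$. Since $(\ell+k)/n = o(1)$, the signature-to-message rate is $\rho + o(1)$.

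\textbf{Main obstacle.} The delicate step is checking that no part of the earlier proofs used $\Eval$ on two hashes, as opposed to one hash and one clear input. I expect the verification is only ever applied to the adversary's clear $\zeta$ together with a signed hash $z$, so the substitution is sound. The ``wrong with inverse-polynomial probability'' statements need the usual guessing or union-bound factor, which is routine.
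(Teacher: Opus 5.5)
Your proposal is correct and takes the same route as the paper's proof. The paper instantiates Corollary~\ref{cor:ham-pph-to-recoverable} with the difference-recovering family of Theorem~\ref{thm:sharp-sketch-hamming} and reads off the signature length $\rho n + \ell + k$. Your extra steps fill in details the paper leaves implicit: replacing $\Eval$ by the check $\Ham_\delta(\zeta,\zeta\oplus\Delta)$ as in the remark after Corollary~\ref{cor:ham-pph-to-recoverable}, and re-deriving unique signatures directly from difference recovery, since Theorem~\ref{thm:rds-appdx-main} is stated only for a genuine PPH.
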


\begin{proof}
We instantiate Corollary \ref{cor:ham-pph-to-recoverable} using the difference recovering hash family from~\refThm{thm:sharp-sketch-hamming} for the predicate $\Ham_{\delta}$. The hash size is $\rho n + \ell$, and signing the hash incurs an additional $k$ bits. Thus, the total signature size is $\rho n + \ell + k$, and the resulting recoverable signature-to-message rate is $(\rho n + \ell + k)/n = \rho + o(1)$.
\end{proof}

\begin{remark}
We instantiate our construction using a sharp sketch (Theorem~\ref{thm:sharp-sketch-hamming}) rather than a PPH. Note that we could alternatively use the PPH from Theorem~\ref{thm:pph-hamming}, with recoverability following from Proposition~\ref{prop:pph-recover-app}. However, using a secure sketch offers the advantage that we can achieve any constant $\rho > H_2^{\mathsf{BZ}}(\delta)$ (rather than $\rho > H_3^{\mathsf{BZ}}(\delta)\log_23$), and that we can rely only on CRHFs instead of homomorphic CRHFs.\end{remark}

We can also instantiate signatures that achieve a subconstant error rate, by using a sharp sketch and analysis along the lines of~\cite[Corollary 6.8]{holmgren2022nearly}.  
\begin{corollary}[Recoverable digital signatures with subconstant error-rate]
\label{cor:RDS-subconstant}
Let $\lambda \in \NN$ be the security parameter, and suppose there exists a CRHF with output size $\ell = \ell(\lambda)$ and a strongly unforgeable digital signature scheme with signature size $k = k(\lambda)$. Let $n = \poly(\lambda)$ be the message size. For any $\delta = o(1)$, there exists a recoverable digital signature scheme $(\Gen, \Sign, \Recover)$ for the predicate $\Ham_{\delta}$ with signature size $O(\delta n \log n) + \ell + k$. Moreover, the scheme has unique signatures with respect to $\Ham_\delta$.
\end{corollary}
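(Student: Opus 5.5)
The plan is to repeat the proof of Corollary~\ref{cor:recoverable-signature-concrete}, changing only the difference-recovering hash family. Corollary~\ref{cor:ham-pph-to-recoverable} already turns any difference-recovering hash family for $\Ham_\delta$, together with a strongly unforgeable $\DSS$, into a recoverable signature scheme. The signature is $(z,\tau)$, so its size is the hash size plus $k$. The task therefore reduces to building a difference-recovering hash family for $\Ham_\delta$ with $\delta = o(1)$ and hash size $O(\delta n\log n) + \ell$.

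\textbf{The hash family.} I will follow the sharp-sketch template behind Theorem~\ref{thm:sharp-sketch-hamming}, but replace the Blokh--Zyablov list-decodable code with a code suited to the sparse regime. Set $r = \delta n$. Take a binary BCH code of length $n$ (padding $n$ up to $2^m-1$ if needed) with designed distance $2r+1$, and let $H$ be its parity-check matrix, which has $O(r\log n)$ rows. The hash is $h(x) = (Hx, g(x))$, where $g$ is a CRHF with output length $\ell$; this is the analogue of~\cite[Corollary 6.8]{holmgren2022nearly}. On input $(h, (s,c), x')$, recovery proceeds as follows.
\begin{enumerate}
\item Compute the syndrome $s \oplus Hx' = H(x\oplus x')$.
\item Run the efficient BCH syndrome decoder (Berlekamp--Massey) to find the unique $\Delta$ with $\|\Delta\|_0 \le r$ and $H\Delta$ equal to that syndrome. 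Output $\bot$ if no such $\Delta$ exists.
\item Check that $g(x'\oplus\Delta) = c$; otherwise output $\bot$.
\item Output $\Delta$.
\end{enumerate}

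\textbf{Correctness.} If $\Ham_\delta(x,x')=1$, the true difference $x\oplus x'$ has weight at most $r$. Unique decoding up to radius $r$ then returns exactly $x\oplus x'$, and the CRHF check passes.

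\textbf{Soundness.} If $\Ham_\delta(x,x')=0$ but the algorithm outputs some $\Delta\neq\bot$, then $g(x'\oplus\Delta)=g(x)$. The two inputs are distinct, since $\|\Delta\|_0 \le r < \|x\oplus x'\|_0$. Hence the reduction that runs $\cA$ and outputs the pair $(x, x'\oplus\Delta)$ finds a CRHF collision. This bounds the failure probability by $\negl(\lambda)$.

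\textbf{Uniqueness.} Unique signatures follow as in Theorem~\ref{thm:rds-appdx-main}. The argument there used only that identical signatures force $h(y)=h(y')$, and here $h(y)=h(y')$ yields a $g$-collision whenever $y\neq y'$. So I would rerun that argument via collision resistance rather than PPH robustness. An alternative would be to append $\omega(\log\lambda)$ random bits to each signature. Since the corollary states uniqueness for the scheme itself, I will use the collision argument.

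\textbf{Main obstacle.} The delicate point is bookkeeping, not cryptography. First, I must check that the BCH syndrome length is $O(r\log n) = O(\delta n\log n)$ for all $\delta = o(1)$, including the case where $\delta n$ is not an integer, where I take $r=\lfloor \delta n\rfloor$. Second, I must check that the version of Corollary~\ref{cor:ham-pph-to-recoverable} without a PPH, via the preceding remark replacing $\Eval$ by the $\Phi$-check on the recovered string, applies verbatim. With this hash family, the total signature size is $O(\delta n\log n) + \ell + k$, as claimed.
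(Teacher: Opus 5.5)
Your proposal is correct and follows the paper's intended route. The paper gives no written proof of this corollary; it only says to use a sharp sketch and an analysis along the lines of~\cite[Corollary 6.8]{holmgren2022nearly}, mirroring Corollary~\ref{cor:recoverable-signature-concrete}, and your plan does exactly this: plug a syndrome-plus-CRHF sharp sketch (BCH with designed distance $2\lfloor\delta n\rfloor+1$, giving an $O(\delta n\log n)$ syndrome) into Corollary~\ref{cor:ham-pph-to-recoverable}, with the $\Phi$-check replacing $\Eval$. Your uniqueness argument via a direct $g$-collision is also sound, and it is the appropriate substitute here, since the $\Eval$-based argument of Theorem~\ref{thm:rds-appdx-main} does not literally apply to a hash family without an evaluation algorithm.
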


%% file: sections/watermarking-formal.tex
\section{Recoverable, Robust, and Unforgeable Watermarking}

In Section~\ref{sec:watermark-main}, we present our main construction, which combines block steganography with robust signatures to obtain a robust and unforgeable watermarking scheme. We then demonstrate, in Section~\ref{sec:watermark-chain}, that our verification algorithm can be generalized by chaining across longer substrings, yielding fine-grained control over the tradeoff in substring size between robustness and unforgeability. In Section~\ref{sec:watermark-recoverable}, we prove that if the underlying signature scheme is recoverable, we can build a \emph{recoverable} watermarking scheme. Finally, in Section~\ref{sec:watermark-instantiation}, we instantiate both secret and public key versions of our construction using existing block steganography constructions \cite{christ2024pseudorandom,alrabiah2024ideal,fairoze2023publicly} together with our robust and recoverable digital signature instantiations (Corollaries~\ref{cor:recoverable-signature-concrete} and~\ref{cor:RDS-subconstant}).
\subsection{Robust and unforgeable watermarking}
\label{sec:watermark-main}
Theorem~\ref{thm:watermark-main} summarizes the guarantees of our watermarking scheme. It has the following properties
\begin{itemize}
    \item \((\EBC{\Phi}{n}, 2n)\)-robustness: an (adversarially chosen) $\zeta$ verifies if it contains a substring $\zeta^* \preceq \zeta$ of size $2n$ whose two blocks $(\zeta_1^*, \zeta_2^*)$ are $\Phi$-close to two consecutive response blocks $(y_t, y_{t+1})$.
    \item \((\EBC{\Phi}{n}, n)\)-unforgeability: whenever $\zeta$ verifies, there exists an $n$-sized substring $\zeta^* \preceq \zeta$ that is $\Phi$-close to some response block $y_t$.
\end{itemize}
The parameters \(\Phi\) and \(n\) are inherited from the underlying block steganography and robust signature schemes. Whether robustness is secret key or public key depends on the block steganography scheme. In contrast, unforgeability is inherited entirely from the robust signature and is therefore always public key. Compared to the basic watermarking scheme obtained directly from block steganography (Fact~\ref{fact:full-wm-from-block}), our construction yields slightly weaker robustness---requiring a \(2n\)-sized close substring to verify rather than an \(n\)-sized substring---but additionally provides unforgeability (and, as we show in Section \ref{sec:watermark-recoverable}, recoverability).
\begin{theorem}[Robust and unforgeable 2-block watermarking]
\label{thm:watermark-main}
Let $\lambda \in \mathbb{N}$ be the security parameter, and let $k, n = \poly(\lambda)$ be any message and block size, respectively. Let $\Phi$ be any closeness predicate, and $Q$ be any language model. Suppose that:
\begin{itemize}
    \item $\BWM[\Gen, \Embed, \Dec]$ is a message size $k$, block size $n$ block steganography scheme for $Q$ with (secret or public key) $\Phi$-robustness (Definition \ref{def:block-steg}).
    \item $\RDS[\Gen, \Sign, \Ver]$ is a message size $n$, signature size $k$ $\Phi$-robust digital signature scheme (Definition \ref{def:rds-formal}).
\end{itemize}
Then the construction $(\Gen, \Wat, \Ver)$ in Figure~\ref{alg:wm-construct} is a watermarking scheme for $Q$ with (secret or public key) $(\EBC{\Phi}{n}, 2n)$-robustness, and public key $(\EBC{\Phi}{n}, n)$-unforgeability, where $\EBC{\Phi}{n}$ is defined as in Definition \ref{def:every-block-close}.
\end{theorem}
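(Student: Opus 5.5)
I will first fix the construction of Figure~\ref{alg:wm-construct} as described in Section~\ref{sec:wm-overview}. Key generation samples $(\RDS.\pk,\RDS.\sk)$ and $(\BWM.\vk,\BWM.\sk)$, and sets $\vk=(\RDS.\pk,\BWM.\vk)$ and $\sk=(\RDS.\sk,\BWM.\sk)$. $\Wat$ samples $y_1\leftarrow\Response_n(x)$; for $t\ge2$ it computes $\sigma_{t-1}\leftarrow\RDS.\Sign(\RDS.\sk,y_{t-1})$ and $y_t\leftarrow\BWM.\Embed(\BWM.\sk,(x,y_1,\ldots,y_{t-1}),\sigma_{t-1})$, stopping at a $\done$ token as in Algorithm~\ref{alg:watfull}. $\Ver(\vk,\zeta)$ scans every offset $i$ and accepts iff, for $\zeta^*=\zeta[i:i+2n]$ with blocks $(\zeta_1^*,\zeta_2^*)$, we have $\sigma'=\BWM.\Dec(\BWM.\vk,\zeta_2^*)\neq\perp$ and $\RDS.\Ver(\RDS.\pk,\zeta_1^*,\sigma')=1$.

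\textbf{Undetectability.} I will use a hybrid argument. The distinguisher is given $Q$ and the $\BWM$ undetectability challenge oracle. It can simulate $\Wat_{\sk}$ by sampling the $\RDS$ keys itself, sampling $y_1$ directly, and forwarding each embedding query $(\pi,\sigma_{t-1})$ to the challenge oracle. When the oracle is $\Response_n$, the simulated output is exactly an autoregressive $\Response$ sample. Hence undetectability of $\BWM$ transfers directly.

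\textbf{Robustness.} Suppose $\zeta$ contains a $2n$-bit substring $\zeta^*$ with $\Phi(\zeta_1^*,y_t)=\Phi(\zeta_2^*,y_{t+1})=1$ for some response $y$ in the transcript. By construction, $\big(((x,y_1,\ldots,y_t),\sigma_t),y_{t+1}\big)$ lies in the embedding transcript. I will build a reduction $\mathcal{B}$ against $\BWM$ robustness. $\mathcal{B}$ holds its own $\RDS$ keys, simulates $\Wat_{\sk}$ through its $\Embed$ oracle, and simulates $\Ver_{\vk}$ through its $\Dec$ oracle in the secret-key case (or directly from $\vk$ in the public-key case). It then outputs a uniformly random $n$-bit window of $\zeta$. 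Since there are polynomially many windows, this loses only a polynomial factor, and we get $\Dec(\zeta_2^*)=\sigma_t$ except with negligible probability. Next, a reduction $\mathcal{C}$ against $\RDS$ robustness simulates everything using its $\Sign$ oracle and its own $\BWM$ keys, and outputs $(\zeta_1^*,\sigma_t)$; this pair verifies except with negligible probability. A union bound then gives $\Ver(\vk,\zeta)=1$. The only care needed is that the relevant window is not known in advance; the guessing (or, for $\mathcal{C}$, trying each window and picking the one whose embedded signature is in the transcript) handles this.

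\textbf{Unforgeability (main obstacle).} Suppose $\Ver(\vk,\zeta)=1$. Then some window $(\zeta_1^*,\zeta_2^*)$ satisfies $\RDS.\Ver(\zeta_1^*,\sigma')=1$ with $\sigma'=\BWM.\Dec(\zeta_2^*)$. I will reduce to strong unforgeability of $\RDS$ with a reduction $\mathcal{C}$ that knows only $\RDS.\pk$ and a $\Sign$ oracle. $\mathcal{C}$ samples the $\BWM$ keys itself, so it can run $\Embed$, $\Dec$ and $\Ver$ honestly, and it can even hand $\vk$ to $\cA$. This is why unforgeability holds in the public-key setting regardless of $\BWM$. $\mathcal{C}$ then locates the accepting window and outputs $(\zeta_1^*,\sigma')$. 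Strong unforgeability gives some signed message $y_t$ with $(y_t,\sigma')$ in the signing transcript and $\Phi(\zeta_1^*,y_t)=1$. Every signing query made by $\mathcal{C}$ is on a block $y_t$ of a response returned to $\cA$, and this includes the final block, which is also signed. So $y_t$ is a block of some $(x,y)\in\Pi$, giving $\EBC{\Phi}{n}(\zeta_1^*,y)=1$ with $|\zeta_1^*|=n$.

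The delicate points are the following. The transcript must map faithfully: signatures are issued only on blocks actually output, and signatures computed for blocks that are later truncated at $\done$ must be handled. For the latter, I will have $\Wat$ sign $y_t$ only when it embeds $\sigma_t$ into the next block, so every signed message is a full block of the returned $y$. The remaining event is that $\RDS.\Ver$ accepts with a signature $\sigma'$ absent from the transcript, which is exactly a forgery.
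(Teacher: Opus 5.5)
Your proposal is correct and follows essentially the same route as the paper: robustness first uses block robustness of $\BWM$ to decode $\sigma_t$ from $\zeta_2^*$, then robustness of $\RDS$ on $(\zeta_1^*,\sigma_t)$; unforgeability is a direct reduction to strong unforgeability of $\RDS$ in which the reduction samples the $\BWM$ keys itself, which is why it holds in the public-key setting regardless of $\BWM$. Your explicit window selection and oracle simulation fill in details the paper leaves implicit, and the only discrepancy is cosmetic: in Figure~\ref{alg:wm-construct} the first block is $y_1 \leftarrow \BWM.\Embed(\BWM.\sk, x, \sigma_1)$ for a uniformly random $\sigma_1$ rather than $y_1 \leftarrow \Response_n(x)$, and none of your three arguments depends on how $y_1$ is generated.
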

We present the full details of our watermarking scheme $(\Gen, \Wat, \Ver)$, then prove the theorem. The construction is the same whether $\BWM$ is public or secret key, but its guarantees change. The $\Gen$ algorithm generates the watermarking and verification keys. 
\begin{algorithm}[H]
\caption{$\Gen$}
\label{alg:setup}
    \begin{algorithmic}[1]
        \STATE  \textbf{input}: $(1^\lambda)$
        \STATE $\RDS.\pk, \RDS.\sk \leftarrow \RDS.\Gen(1^\lambda)$
        \STATE $\BWM.\vk, \BWM.\sk \leftarrow \BWM.\Gen(1^\lambda)$
        \STATE $\vk = (\RDS.\pk, \BWM.\vk),\; \sk = (\RDS.\sk, \BWM.\sk)$
        \STATE return $(\vk, \sk)$
        \end{algorithmic}
\end{algorithm}
Given a prompt $x \in \{0, 1\}^*$, the watermarking algorithm $\Wat$ generates a response by sampling text blocks of size $n=n(\lambda)$ sequentially and embedding the robust signature associated with each block into the next. 
\begin{algorithm}[H]
    \caption{$\Wat$}
    \label{alg:wat}
    \begin{algorithmic}[1]
        \STATE  \textbf{input}: ($1^\lambda$, $\sk$, $x$)
        \STATE $k \gets k(\lambda)$ \hfill \texttt{// compute signature size k}
        \STATE $\sigma_1 \gets \{0,1\}^k$
        \STATE $y_1 \leftarrow \BWM.\Embed(\BWM.\sk, x, \sigma_1)$ \hfill \texttt{// sample initial block with random message}
        \STATE $t \gets 1$
        \WHILE{$\done \not \preceq y_t$} 
            \STATE $t \gets t + 1$
            \STATE $\sigma_{t-1} \leftarrow \RDS.\Sign(\RDS.\sk, y_{t-1})$ \hfill \texttt{// generate signature of previous block}
            \STATE $y_t \leftarrow \BWM.\Embed(\BWM.\sk, (x,y_1, \ldots, y_{t-1}), \sigma_{t-1})$ \hfill \texttt{// embed signature into $y_t$}
            \ENDWHILE
        \STATE return $(y_1, \ldots, y_{t-1}, y_t[<\done])$ \hfill \texttt{// return the generated $y$ up to the $\done$ token}
        \end{algorithmic}
\end{algorithm}
The verification algorithm $\Ver$ operates on arbitrary sized strings, by using $\BWM$ to recover embedded signatures and checking whether \emph{any} $2n$-sized substring in the input corresponds to a valid message-signature pair. 
\begin{algorithm}[H]
    \caption{$\Ver$}\label{alg:ver}
    \begin{algorithmic}[1]
        \STATE  \textbf{input}: $(1^\lambda, \vk, \zeta)$
        \STATE $n \gets n(\lambda)$  \hfill \texttt{// compute block size n}
        \FOR{$i \in [|\zeta|  - 2n + 1]$}
            \STATE $\sigma_i \gets \BWM.\Dec(\BWM.\vk, \zeta[i+n:i+2n])$ \hfill \texttt{// attempt to decode the signature}
            \IF{$\RDS.\Ver(\RDS.\pk, \zeta[i: i+n], \sigma_{i}) = 1$} 
                \STATE return $1$ \hfill \texttt{// output $1$ if the signature verifies with the first block}
            \ENDIF
        \ENDFOR
        \STATE return $0$
        \end{algorithmic}
\end{algorithm}
{\captionof{figure}{\label{alg:wm-construct} Pseudocode for watermarking scheme $(\Gen, \Wat, \Ver)$.}}
\begin{proof}
Undetectability follows directly from the undetectability of $\BWM$. For robustness and unforgeability, fix $\lambda \in \NN$ and a PPT adversary $\cA$. Let 
\[(\RDS.\pk, \RDS.\sk) \gets \RDS.\Gen(1^\lambda), \quad (\BWM.\vk, \BWM.\sk) \gets \BWM.\Gen(1^\lambda)\;.\] 
Denote $\vk = (\RDS.\pk,$ $\BWM.\vk)$ and $\sk = (\RDS.\sk, \BWM.\sk)$. Let $\zeta \gets \mathcal{A}^{\Wat_{\sk}}(\vk)$.  
\paragraph{Robustness.}  
We prove the public key version; the secret-key version follows by an analogous argument. Suppose there exists a substring $\zeta^* \preceq \zeta$ with $|\zeta^*| = 2n$ and a prompt-response pair $(x, y) \in \Trans(\mathcal{A}^{\Wat_{\sk}})$ such  that $\zeta^*$ is $\EBC{\Phi}{n}$-close to $y$.  We will show $\Ver(\vk, \zeta) = 1$ with probability at least $1 - \negl(\lambda)$. Denote $(\zeta^*_1, \zeta^*_2) = \Blocks(\zeta^*)$. The closeness condition says that there exists a block index $t \in \NN$ such that
\[\Phi(\zeta^*_1, y_t) = 1 \text{ and } \Phi(\zeta^*_{2}, y_{t+1}) = 1\;.\]
Moreover, because $y_{t+1}$ is an $n$-sized block returned by $\Wat_\sk$, and it isn't the first block $y_1$, it has the signature of the previous block $y_t$ embedded in it. That is,
\[\Big(\big(\pi = (x, y_1, \ldots, y_t), m =\sigma_t\big), y_{t+1}\Big) \in \Trans(\cA^{\BWM.\Embed_{\BWM.\sk}})\]
where $\sigma_t$ is the $\RDS$ signature of $y_t$ computed during an oracle call to $\Wat_{\sk}(x)$.

The closeness of the second half $\Phi(\zeta^*_{2}, y_{t+1}) = 1$, together with the block robustness of $\BWM$, ensures that $\BWM.\Dec(\BWM.\vk, \zeta^*_2) = \sigma_t$ with probability at least $1-\negl(\lambda)$. Conditioned on this event, the closeness of the first half $\Phi(\zeta^*_1, y_t) = 1$, together with the robustness of $\RDS$, ensures that $\RDS.\Ver(\vk, \zeta^*_1, \sigma_t) = 1$ with probability at least $1 - \negl(\lambda)$.

If both events occur, then $\Ver(\vk, \zeta) = 1$, since it accepts if \emph{any} $2n$-substring of $\zeta$ verifies under this two-part check. Because both events occur with all but negligible probability, successful verification also occurs with all but negligible probability.
\paragraph{Unforgeability.}  
Let $\Pi$ denote the transcript $\Trans(\cA^{\Wat_{\sk}})$. Suppose, towards contradiction, that there exists a polynomial $q : \NN \rightarrow \NN$ such that
\begin{align*}
    \Pr\left[
    \begin{array}{rl}
    \Ver(\vk, \zeta) = 1 \\
    \wedge \; \neg \big(\exists \zeta^* \preceq \zeta, (x, y) \in \Pi \text{ such that } \\
    \EBC{\Phi}{n}(\zeta^*, y) = 1\;
    \wedge \; |\zeta^*| = n \big)
    \end{array}
    \right] \ge 1/q(\lambda)\;,
\end{align*}
When the above event occurs, then for every substring $\zeta^* \preceq \zeta$ of size $n$, and for every $(x, y) \in \Trans(\mathcal{A}^{\Wat_\sk})$, $\zeta^*$ is $\EBC{\Phi}{n}$-far from $y$, which ensures that for every block index $t \in [\lfloor |y|/n \rfloor]$, $\Phi(\zeta^*, y_t) = 0$.

Note that the set of all such blocks $y_t$ subsumes the set of message for which $\cA$ observed robust signatures, because the only messages that $\cA$ sees signatures for during interaction with $\Wat_{\sk}$ are response blocks $y_t$. This means that for all $\zeta^* \preceq \zeta$ such that $|\zeta^*| = n$, and for all $(y, \cdot) \in \Trans(\cA^{\RDS.\Sign_{\RDS.\sk}}),\; \Phi(\zeta^*, y) = 0.$

However, under the same event, $\Ver(\vk, \zeta) = 1$, meaning $\cA$ found some $\zeta^* \preceq \zeta$ of size $n$ and some $\sigma \in \{0, 1\}^k$ such that $\RDS.\Ver(\RDS.\pk, \zeta^*, \sigma) = 1$ with probability at least $1/q(\lambda)$. These two facts combined contradict the unforgeability of $\RDS$.

\end{proof}

\subsection{Generalizing verification to chains}
\label{sec:watermark-chain}
The verification algorithm for our watermarking scheme (Figure~\ref{alg:wm-construct}) applies a sliding window to $\zeta$, checking whether any substring of size $2n$ contains a valid message–signature pair. As a result, Theorem \ref{sec:watermark-main} guarantees unforgeability with respect to substrings of size $n$ (the size of the signed message).

We can generalize this approach by using a sliding window of size $rn$ for any $r \in \NN$ with $r \ge 2$. Within each $rn$-bit window, we can verify that the \emph{entire chain} of $r$, $n$-bit blocks verifies back-to-back. This yields a stronger unforgeability guarantee: successful verification implies the existence of an $(r-1)n$-bit substring in $\zeta$ that is close to a consecutive \emph{chain} of blocks, $(y_t, \ldots, y_{t+r-2})$, present in a single observed response. We formally define a family of general verification algorithms, $\{\Ver_r\}_{r \in \NN}$, in Figure \ref{alg:chain-ver}.
\begin{algorithm}[H]
    \caption{$\Ver_r$}\label{alg:ver_r}
    \begin{algorithmic}[1]
        \STATE  \textbf{input}: $(\vk, \zeta)$
        \STATE $n \gets n(\lambda)$  \hfill \texttt{// compute block size n}
        \FOR{$i \in [|\zeta|  - rn + 1]$}
            \STATE $b \leftarrow \ChainVer_r(\vk, \zeta[i:i+rn])$ \hfill \texttt{// verify the entire chain}
            \IF{$b = 1$}
                \STATE return $1$
            \ENDIF
        \ENDFOR
        \STATE return $0$
        \end{algorithmic}
\end{algorithm}
Chain verification takes $r$ blocks of size $n$ as input and verifies that for each consecutive pair of blocks $(\zeta_j^*, \zeta_{j+1}^*)$, the signature recovered from the second block successfully verifies with the message given by the first. If any block fails verification, the entire chain is rejected.
\begin{algorithm}[H]
    \caption{$\ChainVer_r$}\label{alg:chainver_r}
    \begin{algorithmic}[1]
        \STATE  \textbf{input}: $(\vk, \zeta^* = (\zeta_1^*, \ldots, \zeta_r^*) \in \{0, 1\}^{rn})$
        \STATE $\mathsf{flag} \gets 1$ 
        \FOR{$j \in [r-1]$}
            \STATE $\sigma_{j} \gets \BWM.\Dec(\BWM.\vk, \zeta^*_{j+1})$ \hfill \texttt{// attempt to decode signature}
            \IF{$\RDS.\Ver(\RDS.\pk, \zeta^*_j, \sigma_{j}) = 0$}
                \STATE $\mathsf{flag} \gets 0$ \hfill \texttt{// if any pair fails verification, set $\mathsf{flag}$ to $0$}
            \ENDIF
        \ENDFOR
        \STATE return $\mathsf{flag}$ \hfill \texttt{// indicates whether or not \emph{every} block verified}
        \end{algorithmic}
\end{algorithm}
{\captionof{figure}{\label{alg:chain-ver} Pseudocode for chain verification.}}
\vspace{3.0mm}
Theorem~\ref{thm:r-block-watermark-full} states the guarantees obtained when instead using $\Ver_r$ as the verifier for our construction. Its guarantees differ from those obtained using 2-block verification in two respects. First, chain verification requires that $\RDS$ has \emph{unique signatures} ( Definition \ref{def:unique-signatures}), though, as noted previously, this is not particularly restrictive. Second, unlike in the 2-block case, whether unforgeability holds in the public or secret key setting additionally depends on the robustness guarantees of the underlying block steganography scheme $\BWM$.
\begin{theorem}[Robust and unforgeable $r$-block watermarking]
\label{thm:r-block-watermark-full}
Let $\lambda\in \NN$ be the security parameter, let $r = \poly(\lambda)$ satisfying $r \geq 2$ be any chain size, and let $k, n = \poly(\lambda)$ be any message and block size, respectively. Let $\Phi$ be any closeness predicate, and $Q$ be any language model. Suppose that:
\begin{itemize}
    \item $\BWM[\Gen, \Embed, \Dec]$ is a message size $k$, block size $n$ block steganography scheme for $Q$ with (secret or public key) $\Phi$-robustness (Definition \ref{def:block-steg}).
    \item $\RDS[\Gen, \Sign, \Ver]$ is a message size $n$, signature size $k$ $\Phi$-robust digital signature scheme (Definition \ref{def:rds-formal}) that has unique signatures w.r.t. $\Phi$ (Definition \ref{def:unique-signatures}).
\end{itemize}
Then the construction $(\Gen, \Wat, \Ver_r)$ (Figures \ref{alg:wm-construct} and \ref{alg:chain-ver}) is a watermarking scheme for $Q$ with (secret or public key) $(\EBC{\Phi}{n}, rn)$-robustness, and (secret or public key) $(\EBC{\Phi}{n}, (r-1)n)$-unforgeability. 
\end{theorem}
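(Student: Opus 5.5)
Undetectability is inherited from $\BWM$ exactly as in Theorem~\ref{thm:watermark-main}, since $\Wat$ is unchanged. For robustness, I would mirror the 2-block argument, applied block by block along the chain. Suppose $\zeta^*\preceq\zeta$ with $|\zeta^*|=rn$ satisfies $\EBC{\Phi}{n}(\zeta^*,y)=1$ for some $(x,y)\in\Pi$. Then there is a $t$ with $\Phi(\zeta^*_j,y_{t+j-1})=1$ for all $j\in[r]$. For each $j\in[r-1]$, the block $y_{t+j}$ is not the first block of $y$, so it embeds $\sigma_{t+j-1}=\RDS.\Sign(y_{t+j-1})$. Block robustness of $\BWM$ then gives $\BWM.\Dec(\zeta^*_{j+1})=\sigma_{t+j-1}$, and $\RDS$ robustness gives $\RDS.\Ver(\zeta^*_j,\sigma_{t+j-1})=1$. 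Each of these $2(r-1)$ events fails with negligible probability, and each failure would yield a reduction that simulates $\cA$ and outputs the single offending block. Since $r=\poly(\lambda)$, a union bound shows $\ChainVer_r$ accepts the window starting at $\zeta^*$, so $\Ver_r(\vk,\zeta)=1$.

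For unforgeability, suppose $\Ver_r(\vk,\zeta)=1$ via a window $\zeta^*=(\zeta^*_1,\dots,\zeta^*_r)$ with decoded signatures $\sigma_j=\BWM.\Dec(\zeta^*_{j+1})$. I would proceed in three steps.

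\textbf{Step 1 (each block is close to some signed block).} By strong unforgeability of $\RDS$, for each $j\in[r-1]$ there is, except with negligible probability, a block $y^{(j)}$ with $(y^{(j)},\sigma_j)\in\Trans(\cA^{\RDS.\Sign})$ and $\Phi(\zeta^*_j,y^{(j)})=1$. Every such signed message is a response block $y^{(j)}=y_{s_j}$ of some response, and $\sigma_j$ was embedded into the next block $y_{s_j+1}$ of that same response.

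\textbf{Step 2 (the blocks link into one consecutive chain).} This is the main obstacle. I need the indices to be consistent, i.e.\ a single response $y$ and a single $t$ with $\zeta^*_j$ close to $y_{t+j-1}$ for every $j\in[r-1]$. Step 1 only places $\zeta^*_j$ near a block that was signed with $\sigma_j$, and two issues arise.
\begin{itemize}
\item \emph{Shared signatures.} The signature $\sigma_j$ might be shared by several signed blocks. Unique signatures (Definition~\ref{def:unique-signatures}) resolve this: $\zeta^*_j$ is then $\Phi$-close to \emph{every} block signed with $\sigma_j$, except with negligible probability. I can therefore fix $y_{s_j}$ to be any occurrence.
\item \emph{Linking adjacent blocks.} To connect $\zeta^*_{j+1}$ to $y_{s_j+1}$, I would argue as follows. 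Step 1 applied at index $j+1$ says $\zeta^*_{j+1}$ is close to some signed block $y_{s_{j+1}}$. I then want $\zeta^*_{j+1}$, which decodes to $\sigma_j$, to be close to the very block that embedded $\sigma_j$. For $j+1\le r-1$, the block $\zeta^*_{j+1}$ is itself $\Phi$-close to the signed block $y_{s_{j+1}}$. By $\BWM$ robustness, $\BWM.\Dec(\zeta^*_{j+1})$ must equal the message embedded in $y_{s_{j+1}}$, namely the signature of its predecessor block $y_{s_{j+1}-1}$. So $\sigma_j=\sigma_{s_{j+1}-1}$, and by the previous item $\zeta^*_j$ is close to $y_{s_{j+1}-1}$. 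I can therefore take $s_j=s_{j+1}-1$ in the same response.
\end{itemize}
A subtle point arises when $y_{s_{j+1}}$ is the first block of a response, which embeds a random $\sigma_1$ rather than a signature. Then $\sigma_j$ would have to equal a random string that verifies under $\RDS$ for some block. I would argue this contradicts $\RDS$ strong unforgeability, or happens with negligible probability because $\sigma_1$ is a fresh uniform value.

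\textbf{Step 3 (assemble the chain).} Inducting downward from $j=r-1$ yields a single response $y$ and an index $t$ with $\Phi(\zeta^*_j,y_{t+j-1})=1$ for all $j\in[r-1]$. This is exactly $\EBC{\Phi}{n}((\zeta^*_1,\dots,\zeta^*_{r-1}),y)=1$, as unforgeability requires. The reductions in Step 2 invoke $\BWM$ robustness, so the resulting unforgeability is secret key or public key according to whether $\BWM$'s robustness is. All failure events are polynomially many negligible events, so a union bound closes the argument.
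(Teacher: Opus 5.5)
Your proposal is correct and follows essentially the paper's argument. The paper locates the index at which the suffix of $(\zeta_1^*,\dots,\zeta_{r-1}^*)$ stops being blockwise $\Phi$-close to a consecutive chain in a single response, and rules out each such break using your ingredients: block robustness on the next block to identify the decoded signature, strong unforgeability for the random-first-block case, and strong unforgeability plus unique signatures for the predecessor block. This is the contrapositive of your downward induction from $j=r-1$, and robustness is handled as you describe.
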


\begin{proof} Undetectability and robustness are straightforward from the robustness and undetectability of $\BWM$ and the robustness of $\RDS$. Therefore, we will focus on proving the unforgeability guarantee. We will prove secret key unforgeability; public key follows analogously.\\\\ 
Fix $\lambda \in \NN$ and a PPT adversary $\cA$. Let 
\[
(\RDS.\pk, \RDS.\sk) \gets \RDS.\Gen(1^\lambda), \quad (\BWM.\vk, \BWM.\sk) \gets \BWM.\Gen(1^\lambda)\;.
\] 
Denote $\vk = (\RDS.\pk, \BWM.\vk)$ and $\sk = (\RDS.\sk, \BWM.\sk)$. Let $\Pi$ denote the transcript $\Trans(\cA^{\Wat_{\sk}})$ and let $\zeta \gets \mathcal{A}^{\Wat_{\sk}, (\Ver_r)_\vk}(1^\lambda)$ 
\\\\
We begin by showing that the event that a string $\zeta$ is $\EBC{\Phi}{n}$-far from every response $y$ implies a more convenient event. For a predicate $\Phi$ and an index $i \in \NN$, define

\begin{equation}
\label{def:suffix-forms-chain}
\SuffixFormsChain_{\Phi,i}(\zeta,\Pi)
:= 
\left(
\begin{aligned}
&\exists\ell \in \NN \text{ such that } |\zeta| = \ell n \text{ and } \\
&\Big(i > \ell \text{ or }\exists(x,y) \in \Pi \text{ such that }\\
&\EBC{\Phi}{n}\big((\zeta_i,\ldots,\zeta_\ell),\, y\big) = 1\big)
\Big)\end{aligned}
\right).
\end{equation}
Given a sequence $\zeta = (\zeta_1,\ldots,\zeta_\ell)$ of $\ell$ blocks of size $n$, the event asserts that the suffix starting at the $i$th block, $(\zeta_i,\ldots,\zeta_\ell)$ is $\EBC{\Phi}{n}$-close to some response $y$ appearing in the transcript~$\Pi$. When $i > \ell$, we interpret $\SuffixFormsChain_{\Phi, i}(\zeta,$ $\Pi) = 1$.\\\\
Now notice that if a string $\zeta = (\zeta_1, \dots, \zeta_\ell)$ is
$\EBC{\Phi}{n}$-\emph{far} from all responses $y$, then there must be a point
at which the blocks of $\zeta$ transition from being close to a response chain to not. Formally, there
exists an index $i \in [\ell]$ such that the suffix $(\zeta_{i+1}, \dots, \zeta_\ell)$
is close to some response, but $(\zeta_i, \dots, \zeta_\ell)$ is not:
\begin{align*}
    &\Big(\forall (x, y) \in \Pi, \EBC{\Phi}{n}(\zeta, y) = 0\Big)
    \\ &\qquad \subseteq 
    \bigvee_{i \in [\ell]}
    \Big(
        \SuffixFormsChain_{\Phi, i+1}(\zeta, \Pi)
        \;\wedge\;
        \neg \SuffixFormsChain_{\Phi, i}(\zeta, \Pi)
    \Big)\;.
\end{align*}
This also covers the case when \emph{no} suffix of
$\zeta$ is close to a response chain. In particular, if 
no suffix of $\zeta$ is $\EBC{\Phi}{n}$-close to a response $y$, then for
all $i \in [\ell]$ we have
$\SuffixFormsChain_{\Phi,i}(\zeta,\Pi) = 0$, so
$\neg \SuffixFormsChain_{\Phi,i}(\zeta,\Pi) = 1$. Hence, the
right-hand side still evaluates to $1$ due to the $i=\ell$ clause, where
$\SuffixFormsChain_{\Phi,\ell+1}(\zeta,\Pi) = 1$ is trivially true.\\\\
Now consider $\Forgery_i$ for $i \in \NN$, the event that there exists a substring $\zeta^* \preceq \zeta$ such that the closeness chain of all-but-the-last block of $\zeta^*$ breaks exactly at index $i$, yet $\ChainVer_r$ still accepts it. Formally,
\begin{equation}
\label{event:forgery}
\Forgery_i(\zeta, \Pi, \vk) :=\left(
\begin{aligned}
&\exists\, \zeta^* \preceq \zeta \text{ of length } rn \text{ such that}\\
    &\ChainVer_r(\vk, \zeta^*) = 1, \\
    & \text{and }
    \SuffixFormsChain_{\Phi, i+1}((\zeta_1^*, \ldots, \zeta_{r-1}^*), \Pi), \\
    & \text{and } 
    \neg \SuffixFormsChain_{\Phi, i}((\zeta_1^*, \ldots, \zeta_{r-1}^*), \Pi).
\end{aligned}
\right).
\end{equation}
Lemma~\ref{lemma:bad-uf} shows that for any fixed $i \in [r-1]$, this event occurs with probability at most $\negl(\lambda)$.  
Applying a union bound over all $i \in [r-1]$ shows that the event $\bigvee_{i \in [r-1]}\Forgery_i(\zeta, \Pi, \vk)$ also occurs with negligible probability.\\\\
We conclude by showing that this suffices to prove unforgeability. Consider the event that an adversary successfully forges the watermark:
\[
\Ver_r(\vk, \zeta) = 1
\;\wedge\;
\Big(\forall\, \zeta' \preceq \zeta, (x, y) \in \Pi, \; |\zeta'| \neq (r-1)n \; \vee \; \EBC{\Phi}{n}(\zeta', y) = 0\Big).
\]
By definition of $\Ver_r$, if $\Ver_r(\vk, \zeta) = 1$, then there exists a
substring $\zeta^* \preceq \zeta$ of length $rn$ such that $\ChainVer_r(\vk, \zeta^*) = 1$.
Moreover, if every $(r-1)n$-sized
substring $\zeta' \preceq \zeta$ is $\EBC{\Phi}{n}$-far from all
responses, then the specific substring $(\zeta_1^*, \dots, \zeta_{r-1}^*) \preceq \zeta$, must be far from all responses. By the previous chain-breaking argument, this implies that there exists an index $i \in [r-1]$ where the chain breaks. Hence, the event of a successful forgery implies the event $\bigvee_{i \in [r-1]}\Forgery_i(\zeta, \Pi, \vk)
$, which we showed happens with negligible probability.
\end{proof}
\begin{lemma}
    \label{lemma:bad-uf}
    Let $\lambda \in \NN$ be a security parameter, let $r=\poly(\lambda)$ satisfying $r \geq 2$ be any chain size, and let $i \in [r-1]$ be any block index. Let $(\Gen, \Wat, \Ver_r)$ be the watermarking scheme presented in Figures \ref{alg:wm-construct} and \ref{alg:chain-ver}. Under the same assumptions as Theorem \ref{thm:r-block-watermark-full}, for any PPT adversary $\cA$, 
    \begin{align*}
    \Pr\left[
    \begin{array}{rl}
        (\pk, \sk) &\leftarrow \Gen(1^\lambda) \\ 
        \zeta &\leftarrow \cA^{\Wat_{\sk}, (\Ver_r)_\vk}
    \end{array} 
    :
    \begin{array}{rl}
       \Forgery_i(\zeta, \Pi, \vk)
    \end{array}
    \right] \leq \negl(\lambda)\;,
\end{align*}
where $\Pi$ is the transcript $\Trans(\cA^{\Wat_{\sk}})$ and $\Forgery_i(\zeta,\Pi, \vk)
$ is defined as in \ref{event:forgery}.    
\end{lemma}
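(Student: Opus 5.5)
Fix $i \in [r-1]$ and suppose $\Forgery_i$ occurs with probability at least $1/q(\lambda)$. Let $\zeta^* = (\zeta^*_1,\ldots,\zeta^*_r)$ be the witnessing window, so $\ChainVer_r(\vk,\zeta^*)=1$. Then every consecutive pair verifies: $\sigma_j \gets \BWM.\Dec(\BWM.\vk,\zeta^*_{j+1})$ satisfies $\RDS.\Ver(\RDS.\pk,\zeta^*_j,\sigma_j)=1$ for all $j\in[r-1]$. The plan is to split into two cases according to whether $i = r-1$ or $i<r-1$. In each case we reduce to either strong unforgeability of $\RDS$, block robustness of $\BWM$, or unique signatures of $\RDS$. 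Each reduction $\cB$ simulates $\cA$ by implementing $\Wat_{\sk}$ with its own oracles ($\RDS.\Sign$ and $\BWM.\Embed$, plus $\BWM.\Dec$ in the secret-key case). It implements $(\Ver_r)_{\vk}$ by running $\ChainVer_r$ with $\RDS.\pk$ and the available $\BWM$ decoding access. It then guesses (or tries all polynomially many) windows $\zeta^*\preceq\zeta$.

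\textbf{Case $i=r-1$.} Here $\SuffixFormsChain_{\Phi,r}$ is trivial, and $\neg\SuffixFormsChain_{\Phi,r-1}$ says $\zeta^*_{r-1}$ is $\Phi$-far from every block $y_t$ of every response in $\Pi$. The only messages ever signed by $\RDS$ are such blocks, yet $\RDS.\Ver(\RDS.\pk,\zeta^*_{r-1},\sigma_{r-1})=1$. Outputting $(\zeta^*_{r-1},\sigma_{r-1})$ therefore breaks strong unforgeability of $\RDS$, exactly as in the proof of Theorem~\ref{thm:watermark-main}.

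\textbf{Case $i<r-1$.} Now $(\zeta^*_{i+1},\ldots,\zeta^*_{r-1})$ is blockwise $\Phi$-close to a chain $(y_t,\ldots,y_{t+r-2-i})$ in a single response $y$. However, prepending $\zeta^*_i$ does not yield a close chain. In particular, $\zeta^*_{i+1}$ is $\Phi$-close to $y_t$ and $\zeta^*_i$ is not $\Phi$-close to $y_{t-1}$; if $t=1$, there is no preceding block. Since $\Phi(\zeta^*_{i+1},y_t)=1$, block robustness of $\BWM$ gives $\sigma_i=\BWM.\Dec(\zeta^*_{i+1})=m_t$ except with negligible probability, where $m_t$ is the message embedded in $y_t$. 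Otherwise $\cB$ outputs $\zeta^*_{i+1}$ and breaks $\BWM$ robustness, in the secret- or public-key sense matching $\BWM$; this is what makes the unforgeability setting depend on $\BWM$. If $t=1$, then $m_1$ is a uniformly random $k$-bit string never given to $\RDS.\Sign$. If $t\ge2$, then $m_t=\sigma_{t-1}$, the $\RDS$ signature of $y_{t-1}$, and $\zeta^*_i$ is $\Phi$-far from $y_{t-1}$. By strong unforgeability of $\RDS$, the verification $\RDS.\Ver(\zeta^*_i,\sigma_i)=1$ forces some $(y',\sigma_i)\in\Trans(\cA^{\RDS.\Sign})$ with $\Phi(\zeta^*_i,y')=1$, except with negligible probability. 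In the subcase $t=1$, this means a random $m_1$ coincides with an $\RDS$ signature, or else $(\zeta^*_i,m_1)$ is a fresh forgery. The first happens with probability at most (number of queries)$^2\cdot 2^{-k}$ or so. The second is handled by the strong unforgeability reduction, since that reduction chooses $m_1$ itself.

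\textbf{Main obstacle.} The crux is the $t\ge2$ subcase. Here one signature $\sigma_{t-1}$ is attached to two signed messages, $y_{t-1}$ and $y'$, with $\zeta^*_i$ close to $y'$ but far from $y_{t-1}$. This is exactly the configuration ruled out by unique signatures (Definition~\ref{def:unique-signatures}): $\cB$ outputs $(\zeta^*_i,y',y_{t-1})$. The care needed is in showing that $\cB$ can identify $y'$ and $y_{t-1}$ from its own transcript, which it can, since it logs all signing queries and checks $\Phi$ efficiently. It must also handle $y'=y_{t-1}$, which is impossible by the farness. Summing the negligible failure probabilities of the $\BWM$, $\RDS$ unforgeability, and uniqueness reductions contradicts the assumed $1/q(\lambda)$ bound.
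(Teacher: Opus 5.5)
Your proposal is correct and follows the paper's proof essentially step for step. The case $i=r-1$ goes through strong unforgeability of $\RDS$ as in Theorem~\ref{thm:watermark-main}, and for $i\le r-2$ you restrict to the pair $(\zeta^*_i,\zeta^*_{i+1})$, decode the embedded message via block robustness of $\BWM$, and split into $t=1$ (random message) versus $t\ge 2$ (strong unforgeability plus unique signatures).

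Your $t=1$ subcase, where the reduction outputs the $m_1$ it sampled itself, is slightly more precise than the paper's version. However, the $(\#\text{queries})^2 2^{-k}$ collision bound is immediate only for signatures issued before $m_1$ is sampled. Later signatures, such as the signature of $y_1$ itself, may depend on $m_1$, so that step needs a short extra argument; the paper glosses over this point as well.
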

\begin{proof}
Fix $\lambda \in \NN$. The case $i = r-1$ follows similarly to the proof of Theorem~\ref{thm:watermark-main}, therefore, we will prove the lemma for a fixed $i \in [r-2]$. Assume, toward contradiction, that there exists a polynomial $q : \mathbb{N} \to \mathbb{N}$ such that
\[
  \Pr[\Forgery_i(\zeta, \Pi, \vk)] \ge 1/q(\lambda)
\]
Notice that the event $\Forgery_i(\zeta, \Pi, \vk)$ implies the following event, specifically for the substring $(\zeta_i^*, \zeta_{i+1}^*)$:
\begin{equation*}
\left(
\begin{aligned}
&\exists\, \zeta^* \preceq \zeta \text{ of size } 2n \text{ such that }\RDS.\Ver\Big(\RDS.\pk, \zeta_1^*, \BWM.\Dec(\BWM.\vk, \zeta_{2}^*)\Big) = 1, \\
    & \text{and }
    \exists (x, y) \in \Pi, t \in \NN \text{ such that }\Phi(\zeta_2^*, y_t) = 1 \text{ and } 
   \big(\Phi(\zeta_1^*, y_{t-1}) = 0 \text{ or } t = 1\big)
\end{aligned}
\right).
\end{equation*}
Consequently, the event above occurs with probability at least $1/q(\lambda)$. Moreover, when this event occurs, the fact that $\Phi(\zeta_2^*, y_t) = 1$ and the block robustness of $\BWM$ tell us that $\BWM.\Dec(\BWM.\vk, \zeta_{2}^*)$ decodes to the message $\sigma_{t-1}$ embedded in $y_t$ with all but negligible probability. Therefore, with inverse-polynomial probability, $\cA$ finds a block $\zeta_1^* \in \{0, 1\}^n$ and a message $\sigma_{t-1} \in \{0, 1\}^k$ such that 
\[(\sigma_{t-1}, y_t) \in \Trans(\cA^{\BWM.\Embed_{\BWM.\sk}}) \text{ and } \RDS.\Ver(\RDS.\pk, \zeta_1^*, \sigma_{t-1}) = 1\;.\] 
By definition of $\Wat_{\sk}$, $\sigma_{t-1}$ is either a uniformly random string when $t = 1$, or the robust signature of the previous block $y_{t-1}$ when $t > 1$. We analyze these two cases separately.
\begin{enumerate}
    \item $\sigma_{t-1} \leftarrow \{0, 1\}^k$ is a uniformly random string, independent of $\zeta_1^*$.  
    We have that
    \[
      \RDS.\Ver(\RDS.\pk,\, \zeta_1^*,\, \sigma_{t-1}) = 1
    \]
    with inverse-polynomial probability. But then, there exists an adversary $\mathcal{B}^{\RDS.\Sign_{\sk}}(\RDS.\vk)$ against the strong unforgeability of $\RDS$ that simulates $\cA$ and outputs $\zeta_1^*$, along with a uniformly random string $\sigma' \leftarrow \{0, 1\}^k$. This verifies with inverse-polynomial probability, even though $\sigma'$ is likely not any of the signatures seen by $\mathcal{B}$.
    \item $\sigma_{t-1} \gets \RDS.\Sign(\RDS.\sk, y_{t-1})$ is a robust signature of the previous block.
    In this case, 
    \begin{equation}\label{eq:event1uf}
        (y_{t-1},\sigma_{t-1}) \in \Trans(\cA^{\RDS.\Sign_{\RDS.\pk}}).
    \end{equation}
    However, we have that $\Phi(\zeta_1^*, y_{t-1}) = 0$. Either $(\zeta_1^*, \sigma_{t-1})$ contradicts the unforgeability of $\RDS$, or there exists a different message $m \neq y_{t-1}$ such that 
    \[(m, \sigma_{t-1}) \in \Trans(\cA^{\RDS.\Sign_{\RDS.\pk}}) \text{ and } \Phi(\zeta_1^*, m) = 1\;.\]
    This, however, breaks the unique signature property of $\RDS$, because the adversary finds two different messages $m \ne y_{t-1}$ with the same
      signature $\sigma_{t-1}$, where $\zeta_1^*$ is $\Phi$-far from $y_{t-1}$ but
      $\Phi$-close to $m$.  Thus, in the case, $\cA$ either breaks unforgeability or unique signatures of $\RDS$.
\end{enumerate}
In both cases, we obtain a contradiction to the security of $\RDS$.  
Therefore, $\Forgery_i(\zeta, \Pi, \vk)$ occurs only with negligible probability.
\end{proof}

\subsection{Recoverable watermarking}
\label{sec:watermark-recoverable}
If the robust signature scheme 
$\RDS$ is recoverable, meaning it includes a recovery algorithm $\RDS.\Recover$ that can reconstruct the original message from its signature and a perturbed version of that message, then we can recover the original prefix of the errored text using $\Recover$. Notably, $\Recover$ does not rely on the underlying language model $Q$ or generation prompt $x$. Consequently, it enables users to recover original content using only the verification key and modified text. We obtain the following guarantees:
\begin{itemize}
    \item \((\EBC{\Phi}{n}, 2n, n)\)-robust recoverability: if an (adversarially chosen) $\zeta$ contains a substring $\zeta^* \preceq \zeta$ of size $rn$ for some $r \geq 2$, where the blocks $(\zeta^*_1, \ldots, \zeta_r^*)$ are blockwise $\Phi$-close to $r$ consecutive response blocks $(y_t, \ldots, y_{t+r-1})$, then $\Rec(\vk, \zeta)$ will recover a substring of that response, $\xi \preceq y$ of size $(r-1)n$. In particular, in our construction, $\xi = (y_t, \ldots, y_{t+r-2})$.
\item \((\EBC{\Phi}{n}, n)\)-unforgeable recoverability: if an (adversarially chosen) $\zeta$ leads to a recovery $\xi \in \Rec(\vk, \zeta)$ then $\xi$ must satisfy:
\begin{enumerate}
    \item \(\xi\) has size \(rn\) for some \(r \geq 1\),  
    \item \(\xi\) is a substring of some close response \(y\), i.e., \(\xi \preceq y\), and there exists \(\zeta^* \preceq \zeta\) of size $rn$ that is \(\EBC{\Phi}{n}\)-close to \(y\).  
\end{enumerate}
In particular, in our construction, \(\zeta^*\) is \(\EBC{\Phi}{n}\)-close to \(\xi\) and \(\xi\) forms chain of blocks in \(y\).
\end{itemize}
Our recovery algorithm leverages the chaining idea from the previous section  to not only reconstruct individual $n$-bit blocks using $\RDS.\Rec$, but to also recover multi-block prefixes when the adversary has approximately copied a \emph{chain} of consecutive watermarked blocks. The idea is to apply chain
recoverability---analogous to chain verification---with
\emph{every} window size $rn$, ensuring recovery from any long enough substring
$\zeta^* \preceq \zeta$ with $|\zeta^*| \ge 2n$.
\begin{algorithm}[H]
    \caption{$\Recover$}\label{alg:rec}
    \begin{algorithmic}[1]
        \STATE  \textbf{input}: $(\vk, \zeta)$
        \STATE $n \gets n(\lambda)$  \hfill \texttt{// compute block size n}
        \STATE $\mathcal{L} \gets \emptyset$
        \FOR{$r \in \{2, \ldots ,\floor{|\zeta|/n}\}$  \hfill \texttt{// for every chain size $r$} }
            \FOR{$i \in [|\zeta|  - rn + 1]$  \hfill \texttt{// for every $rn$-bit substring in the input}} 
                \STATE $\xi \leftarrow \ChainRec_r(\vk, \zeta[i:i + rn])$ \hfill \texttt{// attempt to recover the first $r-1$ blocks}
                \IF{$\perp \not \preceq \xi$ and $|\xi| = (r-1)n$}
                    \STATE $\mathcal{L} \leftarrow \mathcal{L} \cup \{\xi\}$ \hfill \texttt{// only keep $\xi$ if all $r-1$ recoveries were successful}
                \ENDIF
            \ENDFOR
        \ENDFOR
        \STATE return $\mathcal{L}$ \hfill 
        \end{algorithmic}
\end{algorithm}
\noindent The subroutine $\ChainRec_r$ takes in an $r$-block chain and attempts to recover the first $r-1$ blocks.
\begin{algorithm}[H]
    \caption{$\ChainRec_r$}\label{alg:recr}
    \begin{algorithmic}[1]
        \STATE  \textbf{input}: $(\vk, \zeta^*= (\zeta_1^*, \ldots, \zeta_r^*))$
        \STATE $n \gets n(\lambda)$  \hfill \texttt{// compute block size n}
        \FOR{$i \in [r-1]$}
            \STATE $\sigma_j \gets \BWM.\Dec(\BWM.\vk, \zeta^*_{j+1})$
            \STATE $\xi_{j} \gets \RDS.\Rec(\RDS.\pk, \zeta^*_j, \sigma_j)$
        \ENDFOR
        \STATE return $\xi = (\xi_1, \ldots, \xi_{r-1})$ \hfill \texttt{// may contain $\perp$ symbols}
        \end{algorithmic}
\end{algorithm}
{\captionof{figure}{\label{alg:wm-recovery} Pseudocode for the recovery algorithm $\Rec$, which calls upon $\ChainRec_r$ as a subroutine.}}
\begin{corollary}[Recoverable watermarking]
\label{cor:watermark-recoverable}
    Let $\lambda\in \NN$ be the security parameter and let $k, n = \poly(\lambda)$ be any message and block size, respectively. Let $\Phi$ be any closeness predicate, and $Q$ be any language model. Suppose that:
\begin{itemize}
    \item $\BWM[\Gen, \Embed, \Dec]$ is a message size $k$, block size $n$ block steganography scheme for $Q$ with (secret or public key) $\Phi$-robustness (Definition \ref{def:block-steg}).
    \item $\RDS[\Gen, \Sign, \Ver]$ is a message size $n$, signature size $k$ $\Phi$-recoverable digital signature scheme (Definition \ref{def:rds-recoverability-formal}) that has unique signatures w.r.t. $\Phi$ (Definition \ref{def:unique-signatures}).
\end{itemize}
Then the construction $(\Gen, \Wat, \Rec)$ in Figures~\ref{alg:wm-construct} and \ref{alg:wm-recovery} is a \emph{recoverable} watermarking scheme with (secret or public key) $(\EBC{\Phi}{n}, 2n, n)$-robustness and (secret or public key) $(\EBC{\Phi}{n}, n)$-unforgeability.
\end{corollary}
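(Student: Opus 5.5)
Undetectability is inherited from $\BWM$ exactly as in Theorem~\ref{thm:watermark-main}, since $\Wat$ is unchanged. The proof therefore splits into robust recoverability and unforgeable recoverability. Both reduce to per-pair statements about the consecutive pairs $(\zeta^*_j,\zeta^*_{j+1})$ examined by $\ChainRec_r$. Because $\Rec$ runs over polynomially many pairs $(r,i)$ and each $\ChainRec_r$ makes at most $r-1$ calls, a union bound lets us pass from per-pair guarantees holding with probability $1-\negl(\lambda)$ to the whole list $\mathcal{L}$. A per-pair bad event for an adversary $\cA$ can be turned into a single-output adversary $\mathcal{B}$ against $\BWM$ or $\RDS$ by guessing the offending index uniformly, losing only a polynomial factor. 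In the secret key setting, $\mathcal{B}$ answers $\cA$'s $\Rec_{\vk}$ queries using its own $\Dec_{\vk}$ oracle together with the public $\RDS.\pk$. This is where the secret or public key nature of the guarantee is inherited from $\BWM$.

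\textbf{Robust recoverability.} Suppose $\zeta^*\preceq\zeta$ satisfies $|\zeta^*|\ge 2n$ and $\EBC{\Phi}{n}(\zeta^*,y)=1$ for some $(x,y)\in\Pi$. By Definition~\ref{def:every-block-close}, $|\zeta^*|=rn$ with $r\ge2$, and $\Phi(\zeta^*_j,y_{t+j-1})=1$ for all $j\in[r]$. For each $j\in[r-1]$, block $y_{t+j}$ is not the first block of $y$, so it is an embedding of $\sigma_{t+j-1}=\RDS.\Sign(y_{t+j-1})$. Block robustness of $\BWM$ then gives $\BWM.\Dec(\zeta^*_{j+1})=\sigma_{t+j-1}$. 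The pair $(y_{t+j-1},\sigma_{t+j-1})$ lies in the $\RDS$ signing transcript and $\Phi(\zeta^*_j,y_{t+j-1})=1$, so robust recovery of $\RDS$ gives $\RDS.\Rec(\zeta^*_j,\sigma_{t+j-1})=y_{t+j-1}$. Hence the window of $\Rec$ aligned with $\zeta^*$ adds $\xi=(y_t,\ldots,y_{t+r-2})\preceq y$ to $\mathcal{L}$, and $|\xi|=|\zeta^*|-n$. This gives loss $n$.

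\textbf{Unforgeable recoverability.} Every $\xi\in\mathcal{L}$ has length $(r-1)n\ge n$, which settles the size condition. Now fix $\xi=(\xi_1,\ldots,\xi_{r-1})$ produced from a window $\zeta^*$. Unforgeable recovery of $\RDS$, applied to each $j$, gives with overwhelming probability $(\xi_j,\sigma_j)$ in the signing transcript and $\Phi(\zeta^*_j,\xi_j)=1$, where $\sigma_j=\BWM.\Dec(\zeta^*_{j+1})$. Each $\xi_j$ is therefore a genuine block $y^{(j)}_{t_j}$ of some observed response. It remains to show that these blocks are \emph{consecutive} in a \emph{single} response, i.e.\ $y^{(j+1)}=y^{(j)}$ and $t_{j+1}=t_j+1$. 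Then $\xi\preceq y$ and $(\zeta^*_1,\ldots,\zeta^*_{r-1})$ is $\EBC{\Phi}{n}$-close to $y$, as the definition requires. For $r=2$ there is nothing to chain.

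The chaining step is the main obstacle. The plan is to use the structure of Lemma~\ref{lemma:bad-uf}. Since $\Phi(\zeta^*_{j+1},\xi_{j+1})=1$ and $\xi_{j+1}=y_{t_{j+1}}$ is an embedding of $\sigma_{t_{j+1}-1}$ (a random string if $t_{j+1}=1$), block robustness forces $\sigma_{j+1}$'s predecessor decoding to match, i.e.\ $\BWM.\Dec(\zeta^*_{j+1})=\sigma_{t_{j+1}-1}$. We already know $\BWM.\Dec(\zeta^*_{j+1})=\sigma_j$, which is a signature of $\xi_j$. If $t_{j+1}=1$, then a random string verifies, which contradicts strong unforgeability as in case 1 of Lemma~\ref{lemma:bad-uf}. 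Otherwise $\sigma_j=\sigma_{t_{j+1}-1}$ signs both $\xi_j$ and the predecessor $y_{t_{j+1}-1}$. If these differ and $\zeta^*_j$ is far from the predecessor, unique signatures (Definition~\ref{def:unique-signatures}) are violated. If $\zeta^*_j$ is close to the predecessor, then robust recovery forces $\RDS.\Rec$ to output the predecessor, so it equals $\xi_j$. In either case we get $\xi_j=y_{t_{j+1}-1}$ as a signed pair, and we identify $\xi_j$ with that occurrence in the same response. The remaining ambiguity, namely the same block text appearing twice with the same signature, is absorbed by choosing the occurrence, since we only need existence. Inducting over $j$ yields the chain and completes the proof.
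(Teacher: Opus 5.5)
Your proposal is correct and follows essentially the paper's proof. Robust recoverability is the same decode-then-recover argument per block, finished with a union bound. For unforgeable recoverability, your backward induction over consecutive pairs uses exactly the case analysis of Lemma~\ref{lemma:bad-rec}: a random first-block message, robust recovery of $\RDS$ when $\zeta^*_j$ is close to the predecessor block, and unique signatures when it is far. The paper packages the same induction as a chain-break index via $\SuffixRecoversChain$ and $\RecForgery_{i,r}$ rather than inducting directly.
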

\begin{proof}
Fix $\lambda \in \NN$ and a PPT adversary $\cA$. Let 
\[
(\RDS.\pk, \RDS.\sk) \gets \RDS.\Gen(1^\lambda), \quad (\BWM.\vk, \BWM.\sk) \gets \BWM.\Gen(1^\lambda)\;.
\] 
Denote $\vk = (\RDS.\pk, \BWM.\vk)$ and $\sk = (\RDS.\sk, \BWM.\sk)$. Let $\zeta \gets \mathcal{A}^{\Wat_{\sk}}(\vk)$. We prove the public key version for both properties; the secret key version follow analogously. 
\paragraph{Robust recoverability.}  
Suppose there exists a substring $\zeta^* \preceq \zeta$ with $|\zeta^*| \ge 2n$ and a pair $(x, y) \in \Trans(\mathcal{A}^{\Wat_{\sk}})$ such that $\zeta^*$ and $y$ are close under $\EBC{\Phi}{n}$. Then there exist $r, t \in \mathbb{N}$, with $r \geq 2$, such that $|\zeta^*| = r n$ and
\[\Phi(\zeta^*_j, y_{t+j-1}) = 1, \; \forall j \in [r]\;.\]
Fix $j \in [r-1]$. As in the proof of robustness,
\[\sigma_{j} \leftarrow \BWM.\Dec(\BWM.\vk, \zeta_{j+1}^*)\]
is the signature of $y_{t+j-1}$ with probability at least $1 - \negl(\lambda)$. Since $\Phi(\zeta^*_j, y_{t+j-1}) = 1$, the recovery algorithm of $\RDS$ satisfies $\RDS.\Rec(\RDS.\pk, \zeta_j^*, \sigma_j) = y_{t+j-1}$ with probability at least $1 - \negl(\lambda)$ . Applying a union bound over all $j \in [r-1]$, we conclude that
\[
\ChainRec_r(\vk, \zeta^*) = (y_{t}, \ldots, y_{t+r-2}),
\]
which is a substring of $y$ of size $|\zeta^*| - n$, with all but negligible probability. Since $r \le \lfloor |\zeta|/n \rfloor$ (otherwise $r n > |\zeta|$ and $\zeta^*$ could not be a substring of $\zeta$), $\Rec(\vk, \zeta)$ calls $\ChainRec_r(\vk, \zeta^*)$ during its execution, so the chain $(y_{t}, \ldots, y_{t+r-2})$, which is an $(|\zeta^*| - n)$-sized substring of $y$, will be added to the final output list with all but negligible probability.

\paragraph{Unforgeable recoverability.} The proof follows similarly to the proof of unforgeable chain verification. Let $\Pi$ be the transcript $\Trans(\cA^{\Wat_{\sk}})$. For a predicate $\Phi$ and a block index $i \in \NN$, define 
\[
\SuffixFormsExactChain_{\Phi,i}(\xi, \Pi)
:= 
\left(
\begin{aligned}
&\exists(x,y) \in \Pi,t \in \NN \text{ such that } \\&
\xi_{j} = y_{t+j-i}, \; \forall j \in \{i, \ldots, \ell\}
\end{aligned}
\right).
\]
the event that an input string $\xi \in \{0, 1\}^*$ forms an \emph{exact} chain of blocks in some response $y$ (in contrast with $\SuffixFormsChain$ \ref{def:suffix-forms-chain} from the chain verification proof, which only required the input was \emph{close} to a chain of blocks in some response $y$). Moreover, define 
\[
\SuffixRecoversChain_{\Phi,i}(\zeta, \xi, \Pi)
:= 
\left(
\begin{aligned}
&\exists\, \ell \in \NN \text{ such that } |\zeta| = |\xi| =  \ell n \text{ and}\\
&\Big(i > \ell \text{ or}\\
&\big(\EBC{\Phi}{n}\big((\zeta_i,\ldots,\zeta_\ell),\, (\xi_i, \ldots, \xi_\ell)\big) = 1\\
&\text{and } \SuffixFormsExactChain_{\Phi, i}(\xi, \Pi)\big)\Big)
\end{aligned}
\right).
\]
Given two sequences $\zeta, \xi \in \{0, 1\}^{\ell n}$ of $\ell$ blocks of size $n$, this event asserts that the suffix starting from the $i$th block $(\zeta_i, \ldots, \zeta_\ell)$ is blockwise $\Phi$-close to the blocks $(\xi_i, \ldots, \xi_\ell)$, and that $\xi$ forms of chain of blocks in some response $y$. When $i=1$, this implies that $\xi \preceq y$ for some response $y$ and that $\zeta$ is $\EBC{\Phi}{n}$ close to $y$.\\\\
Now notice that if a string $\zeta = (\zeta_1, \ldots, \zeta_\ell)$ and recovery $\xi = (\xi_1, \ldots, \xi_\ell)$ do not form a recovery chain---that is, either $\xi$ does not form a chain of blocks in some response $y$, or $\zeta$ is $\EBC{\Phi}{n}$-far from $\xi$---then there must be some index $i \in [\ell]$ at which the chain transitions from being a valid recovery chain to not:
\begin{align*}
    &\Big(\EBC{\Phi}{n} (\zeta, \xi) = 0 \; \vee \; \neg \SuffixFormsExactChain_{\Phi, 1}(\xi, \Pi)\Big)
    \\ &\qquad\subseteq 
    \bigvee_{i \in [\ell]}
    \Big(
        \SuffixRecoversChain_{\Phi, i+1}(\zeta, \xi, \Pi)
        \;\wedge\;
        \neg \SuffixRecoversChain_{\Phi, i}(\zeta, \xi, \Pi)
    \Big)\;.
\end{align*}
Now, for $i, r \in \NN$, consider the event $\RecForgery_{i, r}$: there exists a substring $\zeta^* \preceq \zeta$ such that $\ChainRec_r(\vk, \zeta^*)$ successfully recovers some string $\xi$, but for which the recovery chain between $\zeta^*$ and $\xi$ breaks at index $i$. Formally, 
\begin{equation}
\label{event:rec-forgery}
\RecForgery_{i, r}(\zeta, \Pi, \vk) := \left(
\begin{aligned}
&\exists\, \zeta^* \preceq \zeta \text{ of length } rn \text{ such that }\\
    &\ChainRec_r(\vk, \zeta^*) = \xi \; \text{ with } \perp \not \preceq \xi, \\
    & \text{and }
    \SuffixRecoversChain_{\Phi, i+1}((\zeta_1^*, \ldots, \zeta_{r-1}^*), \xi, \Pi), \\
    & \text{and } 
    \neg \SuffixRecoversChain_{\Phi, i}((\zeta_1^*, \ldots, \zeta_{r-1}^*), \xi, \Pi)
\end{aligned}
\right).
\end{equation}
Lemma~\ref{lemma:bad-rec} shows that for any fixed $r=\poly(\lambda)$ with $r \geq 2$ and $i \in [r-1]$, this event occurs with probability at most $\negl(\lambda)$.  
Applying a union bound over all $r  \in \{2, \ldots, \floor{|\zeta|/n}\}, i \in [r-1]$ shows that the event $\bigvee_{i, r}\RecForgery_{i, r}(\zeta, \Pi, \vk)$ also occurs with negligible probability.\\\\
We conclude by showing that this suffices to prove unforgeable recoverability. Consider the event that an adversary successfully forges:
\[
\xi \in \Rec(\vk, \zeta)
\;\wedge\;
\Big(\forall\, \zeta' \preceq \zeta, (x, y) \in \Pi, \; |\zeta'| \neq |\xi| \;\vee\; \xi \not \preceq y \; \vee \; \EBC{\Phi}{n}(\zeta', y) = 0 \Big).
\]
By definition of $\Rec$, if $\xi \in \Rec(\vk, \zeta)$, then there exists an $r \in \{2, \ldots, \floor{|\zeta|/n}\}$ and substring $\zeta^* \preceq \zeta$ of size $rn$ where $\ChainVer_r(\vk, \zeta^*) = \xi$, $\perp \not \preceq \xi$, and $|\xi| = (r-1)n$. Moreover, if for every $|\xi|$-sized substring $\zeta' \preceq \zeta$ and every response $y$, either $\xi \not \preceq y$ or $\zeta'$ is $\EBC{\Phi}{n}$-far from $y$, then either $\zeta'$ is $\EBC{\Phi}{n}$-far from $\xi$ or $\xi$ does not form a chain in any response.

By the previous chain breaking argument, this means that the suffix recovery chain must break at some point for every substring $\zeta'$ of size $|\xi|$, including the specific substring $(\zeta_1^*, \ldots, \zeta_{r-1}^*)$. Hence, the event of a successful forgery implies the event $\bigvee_{i, r}\RecForgery_{i, r}(\zeta, \Pi, \vk)
$, which we showed happens with negligible probability.
\end{proof}

\begin{lemma}
    \label{lemma:bad-rec}
    Let $\lambda \in \NN$ be any security parameter, let $r = \poly(\lambda)$ satisfying $r \geq 2$ be any chain size, and let $i \in [r-1]$ be any block index. Let $(\Gen, \Wat, \Rec)$ be the watermarking scheme presented in Figures \ref{alg:wm-construct} and \ref{alg:wm-recovery}. Under the same assumptions as Corollary \ref{cor:watermark-recoverable}, for any PPT adversary $\cA$, 
    \begin{align*}
    \Pr\left[
    \begin{array}{rl}
        (\pk, \sk) &\leftarrow \Gen(1^\lambda) \\ 
        \zeta &\leftarrow \cA^{\Wat_{\sk}, \Rec_\vk}
    \end{array} 
    :
    \begin{array}{rl}
       \RecForgery_{i, r}(\zeta, \Pi, \vk)
    \end{array}
    \right] \leq \negl(\lambda)\;,
\end{align*}
where $\Pi$ is the transcript $\Trans(\cA^{\Wat_{\sk}})$ and $\RecForgery_{i, r}(\zeta,\Pi, \vk)
$ is defined as in \ref{event:rec-forgery}.    
\end{lemma}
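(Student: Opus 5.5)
The plan is to mirror the proof of Lemma~\ref{lemma:bad-uf}. I first reduce $\RecForgery_{i,r}$ to an event about a single $2n$-bit window $(\zeta_i^*,\zeta_{i+1}^*)$, and then derive a contradiction with the recoverability (Definition~\ref{def:rds-recoverability-formal}) or unique-signature (Definition~\ref{def:unique-signatures}) properties of $\RDS$, using the block robustness of $\BWM$ to identify the decoded signature.

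Suppose toward contradiction that $\Pr[\RecForgery_{i,r}]\ge 1/q(\lambda)$. On this event $\ChainRec_r(\vk,\zeta^*)=\xi$ with every $\xi_j\neq\perp$. Write $\sigma_j=\BWM.\Dec(\BWM.\vk,\zeta^*_{j+1})$, so that $\xi_j=\RDS.\Rec(\RDS.\pk,\zeta_j^*,\sigma_j)$. Unfolding the suffix events gives two facts at index $i$.
\begin{itemize}
\item Since $\SuffixRecoversChain_{\Phi,i+1}$ holds, there is $(x,y)\in\Pi$ and $t$ with $\xi_{j}=y_{t+j-i-1}$ for $j\ge i+1$ and $\Phi(\zeta_j^*,\xi_j)=1$.
\item Since $\SuffixRecoversChain_{\Phi,i}$ fails, either $\Phi(\zeta_i^*,\xi_i)=0$, or $\xi_i$ does not extend this chain, i.e.\ $\xi_i\neq y_{t-1}$ (or $t=1$). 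Here I must check that the existential over $(y,t)$ in $\SuffixFormsExactChain$ is handled correctly: failure at $i$ means no response extends, in particular not this $y$.
\end{itemize}
The edge case $i=r-1$ has an empty suffix. There I directly apply unforgeable recoverability of $\RDS$ to $(\zeta^*_{r-1},\sigma_{r-1})$. A non-$\perp$ output must be $\Phi$-close to $\zeta^*_{r-1}$ and signed with $\sigma_{r-1}$, so it is some response block $y_s$ signed during a $\Wat$ call, and then $\SuffixRecoversChain_{\Phi,r-1}$ holds. This is a contradiction except with negligible probability.

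For $i\le r-2$, I have $\Phi(\zeta_{i+1}^*,y_t)=1$, so block robustness of $\BWM$ gives $\sigma_i$ equal to the message embedded in $y_t$ with overwhelming probability. I then split into cases as in Lemma~\ref{lemma:bad-uf}. If $t=1$, $\sigma_i$ is uniformly random and independent of the $\RDS$ keys. A reduction $\mathcal{B}$ simulates $\cA$, with $\BWM$ keys sampled itself and $\Wat$ answered via the $\RDS$ signing oracle, and outputs $(\zeta_i^*,\sigma')$ for a fresh random $\sigma'$. Non-$\perp$ recovery then contradicts unforgeable recoverability, since a random $\sigma'$ lies in the transcript only with negligible probability. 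If $t>1$, then $\sigma_i=\RDS.\Sign(y_{t-1})$ and $(y_{t-1},\sigma_i)\in\Trans(\cA^{\RDS.\Sign})$. Unforgeable recoverability says $\xi_i=y_{\rec}\neq\perp$ satisfies $(\xi_i,\sigma_i)\in\Pi$ and $\Phi(\zeta_i^*,\xi_i)=1$. So the break must come from $\xi_i\neq y_{t-1}$, and I sub-split:
\begin{itemize}
\item If $\Phi(\zeta_i^*,y_{t-1})=1$, robust recoverability forces $\xi_i=y_{t-1}$, a contradiction.
\item If $\Phi(\zeta_i^*,y_{t-1})=0$, then $\xi_i$ and $y_{t-1}$ share the signature $\sigma_i$, with $\zeta_i^*$ close to one and far from the other. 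The reduction outputs $(\zeta_i^*,\xi_i,y_{t-1})$, breaking unique signatures.
\end{itemize}

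I expect the main obstacle to be the bookkeeping rather than any single cryptographic step. First, I need to make sure the negation of $\SuffixRecoversChain_{\Phi,i}$ yields exactly the local failure at block $i$ for the same response $y$ and offset $t$ witnessing the $(i+1)$-suffix. Second, the reductions must simulate $\Rec_\vk$ or $\Wat_\sk$ queries for the adversary; in the public-key $\RDS$ setting this is possible since $\mathcal{B}$ holds $\BWM.\sk$ and the $\RDS$ public key. A final union bound over the polynomially many windows $\zeta^*\preceq\zeta$ is handled by having $\mathcal{B}$ guess the window.
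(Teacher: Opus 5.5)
Your proposal follows the paper's proof of this lemma essentially step for step. You reduce $\RecForgery_{i,r}$ to the window $(\zeta_i^*,\zeta_{i+1}^*)$ and use $\Phi(\zeta_{i+1}^*,y_t)=1$ with block robustness of $\BWM$ to identify $\sigma_i$ as the message embedded in $y_t$. You then split into $t=1$ and $t>1$, resolving $t>1$ by unforgeable recoverability followed by the robust-recoverability and unique-signature subcases. Your handling of $i=r-1$ and of the existential over $(y,t)$ matches the paper, and is in fact more explicit.

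The one step that fails as written is the $t=1$ reduction. You inherit it from the paper: its proof of this case points to Lemma~\ref{lemma:bad-uf}, which makes the same move. Outputting $(\zeta_i^*,\sigma')$ for a \emph{fresh} random $\sigma'$ only works if $\zeta_i^*$ is independent of $\sigma_i$, and it is not. $\cA$ sees $y_1=\BWM.\Embed(\BWM.\sk,x,\sigma_i)$ and, in the public-key setting, can decode it with $\BWM.\vk$. It may then choose $\zeta_i^*$ as a function of $\sigma_i$. In that case $\RDS.\Rec(\RDS.\pk,\zeta_i^*,\sigma')$ can be $\perp$ almost always, even though $\RDS.\Rec(\RDS.\pk,\zeta_i^*,\sigma_i)\neq\perp$ with noticeable probability.

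The repair is to have $\mathcal{B}$ output $(\zeta_i^*,\sigma_i)$ directly; it samples the first-block strings itself when simulating $\Wat_{\sk}$, so it knows $\sigma_i$. You then must also show that a uniformly random $k$-bit string coincides with an honestly generated $\RDS$ signature only with negligible probability. Otherwise $\sigma_i$ may be the signature of some signed block $y'_s$, so $\RDS.\Rec$ may legitimately return $y'_s$. The chain $(y'_s,y_1,\ldots)$ still breaks, and no property of $\RDS$ is contradicted.
\begin{itemize}
\item For signatures produced before $\sigma_i$ was sampled, this is a $\poly(\lambda)\cdot 2^{-k}$ collision bound.
\item For later signatures it needs a short extra reduction. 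At a guessed signing call on $m$, output $(m,\sigma_i)$ instead of querying. Robust recoverability on honest signatures (using $\Phi(m,m)=1$) then transfers the collision probability to a non-$\perp$ recovery on an unsigned pair, which unforgeable recoverability rules out.
\end{itemize}
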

\begin{proof}
Fix $\lambda \in \NN$. The case $i = r-1$ is similar to the proof of Theorem~\ref{thm:watermark-main}, but by using the unforgeable recoverability of $\RDS$ instead of standard unforgeability. Hence, we prove the lemma for a fixed $i \in [r-2]$. Assume, toward contradiction, that there exists a polynomial $q : \mathbb{N} \to \mathbb{N}$ such that
\[
  \Pr[\RecForgery_{i, r}(\zeta, \Pi, \vk)] \ge 1/q(\lambda)\;.
\]
Notice that this event implies the following event, specifically for the substring $(\zeta_i^*, \zeta_{i+1}^*)$:
\begin{equation}
\label{event:implies-rec}
\left(
\begin{aligned}
&\exists\, \zeta^* \preceq \zeta \text{ of size } 2n \text{ such that}\\
    &\RDS.\Rec\Big(\RDS.\pk, \zeta_1^*, \BWM.\Dec(\BWM.\vk, \zeta_{2}^*)\Big) = \xi \in \{0, 1\}^n, \\
    & \text{and }\exists (x, y) \in \Pi, t \in \NN \text{ such that}\\\ 
    &\Phi(\zeta_2^*, y_t) = 1 \text{ and }\Big(\Phi(\zeta_1^*, \xi) = 0  \text{ or } \xi \neq y_{t-1} \text{ or } t= 1\Big)
\end{aligned}
\right).
\end{equation}
Consequently, the event above occurs with probability at least $1/q(\lambda)$. Moreover, when this event occurs, the fact that $\Phi(\zeta_2^*, y_t) = 1$ and the block-robustness of $\BWM$ tell us that $\BWM.\Dec(\BWM.\vk, \zeta_{2}^*)$ decodes to the message encoded in $y_t$ with all but negligible probability. Therefore, with inverse-polynomial probability, $\cA$ finds a block $\zeta_1^* \in \{0, 1\}^n$ and a message $\sigma_{t-1} \in \{0, 1\}^k$ such that 
\[(\sigma_{t-1}, y_t) \in \Trans(\cA^{\BWM.\Embed_{\BWM.\sk}}) \text{ and } \RDS.\Rec(\RDS.\pk, \zeta_1^*, \sigma_{t-1}) = \xi \in \{0, 1\}^n\;.\] 
By definition of $\Wat_{\sk}$, the value $\sigma_{t-1}$ is either a uniformly random string when $t = 1$, or the robust signature of the previous block $y_{t-1}$ when $t > 1$. We analyze these two cases separately.
\begin{enumerate}
    \item $\sigma_{t-1} \leftarrow \{0,1\}^k$ is a uniformly random string. We have that
    \[
      \RDS.\Rec(\RDS.\pk,\, \zeta_1^*,\, \sigma_{t-1}) \neq \perp
    \]
    with inverse-polynomial probability. As in the proof of chain verification, there exists an adversary against the unforgeable recovery of $\RDS$.
    \item $\sigma_{t-1} \leftarrow \RDS.\Sign(\RDS.\pk, y_{t-1})$ is a robust signature of the previous block. In this case,
    \[(y_{t-1}, \sigma_{t-1}) \in \Trans(\cA^{\RDS.\Sign_{\RDS.\pk}})\;.\]
    Since $\RDS.\Rec(\RDS.\pk, \zeta_1^*, \sigma_{t-1}) = \xi \in \{0, 1\}^n$, it must be that
    \begin{equation}
        \label{event:break-uf-rec}
        (\xi, \sigma_{t-1}) \in \Trans(\cA^{\RDS.\Sign_{\RDS.\pk}}) \text{ and } \Phi(\zeta_1^*, \xi) = 1\;.
    \end{equation}
    with all but negligible probability; otherwise, $\cA$ would break the unforgeable recoverability of $\RDS$. Therefore, we may further condition on this event.
    
    Together event~\ref{event:implies-rec}, which guarantees that either
$\Phi(\zeta_1^*, y_{t-1}) = 0$ or $\xi \neq y_{t-1}$, and event \ref{event:break-uf-rec}, which guarantees that $\Phi(\zeta_1^*, \xi) = 1$, imply that $\xi \neq y_{t-1}$; otherwise, $\zeta_1^*$ would be simultaneously $\Phi$-close to and $\Phi$-far from $\xi$, a contradiction.

When $\xi \neq y_{t-1}$, we have two subcases:
\begin{enumerate}
    \item If $\Phi(\zeta_1^*, y_{t-1}) = 1$, then with inverse-polynomial probability, $\RDS.\Rec(\vk,$ $\zeta_1^*,\sigma_{t-1}) \neq y_{t-1}$ despite the fact that
    $(y_{t-1}, \sigma_{t-1}) \in \Trans(\cA^{\RDS.\Sign_{\RDS.\sk}})$ and $\Phi(\zeta_1^*, y_{t-1}) = 1$. This contradicts the robust recoverability of $\RDS$.
    \item If $\Phi(\zeta_1^*, y_{t-1}) = 0$, then $\cA$ produces
    $(\zeta_1^*, \xi, y_{t-1})$ that violates the unique signature property
    of $\RDS$.
\end{enumerate}
\end{enumerate}
In both cases, we arrive at a contradiction, proving that the scheme is in fact unforgeably recoverable.
\end{proof}
\subsection{Instantiating Hamming watermarks}
\label{sec:watermark-instantiation}
We instantiate our watermarking scheme for high-entropy language models. In particular, we combine our RDS construction (Corollary~\ref{cor:recoverable-signature-concrete}) with an existing secret key block steganography scheme (Theorem~\ref{thm:secret-key-block-wm}) to obtain a watermark that is secret key recoverable, robust, and unforgeable in the presence of a constant fraction of Hamming errors.
\begin{corollary}[Concrete secret-key watermarking]
\label{cor:sk-watermark-concrete}
Let $\lambda \in \NN$ be the security parameter, and suppose there exists
\begin{itemize}
    \item A CRHF with output size $m = m(\lambda)$.
    \item A strongly unforgeable digital signature with signature size $k = k(\lambda)$.
    \item An ideal PRC with with error rate $\delta_{\PRC} \in (0, \frac{1}{4})$ and information rate $\rho \in (0, 1)$\footnote{Recall that for any error-rate $\delta_{\PRC} \in (0,1/4)$, there exists an ideal PRC with constant rate $\rho = \Omega(1)$, assuming subexponential hardness of LPN~\cite[Theorem 3]{alrabiah2024ideal}.}. 
\end{itemize}
Let $n = \poly(\lambda)$ such that $n \geq \lambda$ and $(m + k)/n = o(1)$ be the block size. Let $\epsilon < \delta_{\PRC}$ be any constant and let $Q$ be any language model satisfying the following entropy condition,
    \[\forall \pi \in \{0, 1\}^*, y \leftarrow \Response_n(\pi) \text{ satisfies } H_\infty(y) \geq 4\Big(\sqrt{\frac{1}{2}-\epsilon}\Big)n\;.\]
Let $\delta > 0$ be any constant such that 
\begin{align*}
    \delta < \min\left\{ \delta_{\RDS}(\rho),\delta_{\PRC} - \epsilon\right\}\;,
\end{align*}
where $\delta_{\RDS}(\rho) := \sup\{ \delta \in (0,1/2) \mid \rho > H_2^{\mathsf{BZ}}(\delta) \}$.
Then, our construction in Corollary~\ref{cor:watermark-recoverable} yields a secret key 
\[\big(\EBC{\Ham_{\delta}}{n}, 2n\big)\text{-robust and } \big(\EBC{\Ham_{\delta}}{n}, n\big)\text{-unforgeable}\]
\emph{recoverable} watermarking scheme for $Q$. 
\end{corollary}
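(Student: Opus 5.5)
The plan is to obtain Corollary~\ref{cor:sk-watermark-concrete} by plugging two earlier instantiations into Corollary~\ref{cor:watermark-recoverable}. Once both ingredients are shown to be robust for the same predicate $\Ham_\delta$ at block size $n$, with matching signature and message lengths, the robustness and unforgeability statements follow directly. The work is mostly bookkeeping: fixing parameters so the two predicates coincide and the sizes line up.

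\emph{Step 1 (block steganography).} I apply Theorem~\ref{thm:secret-key-block-wm} with the ideal PRC of error rate $\delta_{\PRC}$ and rate $\rho$. The entropy hypothesis on $Q$ is exactly the one in the corollary, and $n \ge \lambda$ holds. This gives a secret key $\Ham_{\delta_{\PRC}-\epsilon}$-robust block steganography scheme with message size $\rho n$. Since $\delta < \delta_{\PRC}-\epsilon$, we have $\Ham_\delta(\zeta,y)=1 \Rightarrow \Ham_{\delta_{\PRC}-\epsilon}(\zeta,y)=1$. Robustness is monotone under strengthening the closeness predicate: any adversary breaking $\Ham_\delta$-robustness also breaks $\Ham_{\delta_{\PRC}-\epsilon}$-robustness. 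So the scheme is secret key $\Ham_\delta$-robust.

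\emph{Step 2 (recoverable signatures).} Since $\delta < \delta_{\RDS}(\rho)$ and $\rho > H_2^{\mathsf{BZ}}(\delta')$ for some $\delta' > \delta$, I will pick $\rho'$ with $H_2^{\mathsf{BZ}}(\delta) < \rho' < \rho$. This uses monotonicity of the Blokh--Zyablov function, together with the fact that the supremum definition gives $\rho > H_2^{\mathsf{BZ}}(\delta)$ for every $\delta$ below it. Corollary~\ref{cor:recoverable-signature-concrete} then gives a $\Ham_\delta$-recoverable signature scheme with unique signatures on $\{0,1\}^n$ and signature length $\rho' n + m + k = \rho' n + o(n)$.

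\emph{Step 3 (matching lengths).} Corollary~\ref{cor:watermark-recoverable} needs the signature size to equal the steganography message size $\rho n$. Since $\rho' n + o(n) \le \rho n$ for large $\lambda$, I pad each signature with fixed zeros up to length $\rho n$. The verifier and recovery algorithm strip the padding, rejecting if it is nonzero.

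I expect this padding step to be the only real obstacle, in the sense that it needs a short check. Padding preserves robustness and recoverability trivially. It preserves strong unforgeability and unique signatures because the map from signatures to padded signatures is injective and the reduction is a direct simulation. For the same reason, the undetectability required of $\Wat$, which is inherited from $\BWM$, is unaffected.

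\emph{Step 4 (conclusion).} Both hypotheses of Corollary~\ref{cor:watermark-recoverable} now hold for $\Phi = \Ham_\delta$. It yields a recoverable watermarking scheme for $Q$ that is secret key $(\EBC{\Ham_\delta}{n},2n,n)$-robust and $(\EBC{\Ham_\delta}{n},n)$-unforgeable. The plain verifier $\Ver$ of Theorem~\ref{thm:watermark-main} with the same ingredients gives the stated $(\EBC{\Ham_\delta}{n},2n)$-robustness. Alternatively, one can define verification as $\Rec(\vk,\cdot)\neq\emptyset$ and read robustness off the recoverability statement with $|\zeta^*| = 2n$.
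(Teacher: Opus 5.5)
Your proposal is correct and follows the same route as the paper. You instantiate Theorem~\ref{thm:secret-key-block-wm} and Corollary~\ref{cor:recoverable-signature-concrete} at the common predicate $\Ham_\delta$ and plug them into Theorem~\ref{thm:watermark-main} and Corollary~\ref{cor:watermark-recoverable}; your handling of the length mismatch (using $\delta < \delta_{\RDS}(\rho)$ to pick a signature rate $\rho' \in (H_2^{\mathsf{BZ}}(\delta), \rho)$, then zero-padding and rejecting nonzero padding) is more careful than the paper, which ignores the $o(n)$ discrepancy and only remarks that a PRC of slightly higher rate would remove it.
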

\begin{proof} 
From the ideal PRC assumption and Theorem~\ref{thm:secret-key-block-wm}, for any error rate $\delta \leq \delta_{\PRC} - \epsilon$, there exists a block steganography scheme for $Q$ with block size $n$, message size $\rho n$. From the CRHF and strongly unforgeable digital signature assumptions and Corollary \ref{cor:recoverable-signature-concrete}, for any $\delta \leq \delta_{\RDS}(\rho)$, there exists a $\Ham_{\delta}$-recoverable digital signature scheme with message size $n$ and signature size $\rho n + o(n)$. We apply Theorem \ref{thm:watermark-main} and Corollary \ref{cor:watermark-recoverable} to get the result. Note that the block steganography message size and the digital signature size differ by $o(n)$. Our proof ignores this difference, but it could be made exact by assuming an ideal PRC with a slightly better information rate $\rho' > \rho$.
\end{proof}
Moreover, we leverage an existing public-key block steganography scheme (Theorem~\ref{thm:public-key-block-wm}) to obtain a public-key construction that is robust to a subconstant fraction of Hamming errors, subject to the restriction that the first~$\ell$ bits of each block are error-free, as captured by the predicate~$\Phi_\delta$ defined in Theorem~\ref{thm:public-key-block-wm}.

\begin{corollary}[Concrete public-key watermarking]
\label{cor:pk-watermark-concrete}
Let $\lambda \in \NN$ be the security parameter, and suppose there exists
\begin{itemize}
    \item A CRHF with output size $m = m(\lambda)$.
    \item A strongly unforgeable digital signature with signature size $k = k(\lambda)$.
\end{itemize}
 Let $Q$ be any language model such that for some $\ell = \poly(\lambda)$,
    \[\forall \pi \in \{0, 1\}^*, y\leftarrow \Response_\ell(\pi) \text{ satisfies } H_\infty(y) \geq \omega (\log \lambda) \;.\]
Let $n = \poly(\lambda)$ be any block size satisfying $\ell(m + k)/n = o(1)$. There exists a subconstant $\delta_{\mathrm{max}} = \Theta(\frac{1}{\ell\log n})$ such that for any $\delta = o(1)$ satisfying $\delta(\lambda) \leq \delta_{\mathrm{max}}(\lambda) \; \forall \lambda \in \NN$, there exists a public key
\[\big(\EBC{\Phi_\delta}{n}, 2n\big)\text{-robust and } \big(\EBC{\Ham_{\delta}}{n}, n\big)\text{-unforgeable}\]
\emph{recoverable} watermarking scheme for $Q$ in the random oracle model, where $\Phi_\delta$ is defined the same as in Theorem~\ref{thm:public-key-block-wm}.
\end{corollary}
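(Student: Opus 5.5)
The plan mirrors the proof of Corollary~\ref{cor:sk-watermark-concrete}. We plug the two building blocks into Corollary~\ref{cor:watermark-recoverable} (and Theorem~\ref{thm:watermark-main}). For the first block, the entropy hypothesis on $Q$ is exactly the one in Theorem~\ref{thm:public-key-block-wm}. In the random oracle model that theorem therefore gives a public key $\Phi_{\delta'}$-robust block steganography scheme with block size $n$ and message size $\rho'(n/\ell - 1)$ for some constants $\delta' \in (0,1/2)$ and $\rho' \in (0,1)$. Since $\Phi_\delta$ is monotone in $\delta$, this scheme is also $\Phi_\delta$-robust for every $\delta \le \delta'$; subconstant $\delta$ certainly qualifies. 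For the second block, Corollary~\ref{cor:RDS-subconstant} gives, for any $\delta = o(1)$, a $\Ham_\delta$-recoverable signature scheme with unique signatures on messages of length $n$, with signature size $O(\delta n \log n) + m + k$.

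The first step is to match sizes: the signature must fit in the steganographic message space. We need $O(\delta n\log n) + m + k \le \rho'(n/\ell - 1)$. Dividing by $n/\ell$, this becomes $O(\delta \ell \log n) + \ell(m+k)/n \le \rho' - o(1)$. The term $\ell(m+k)/n = o(1)$ by hypothesis, so it suffices to take $\delta \le \delta_{\max} := c\rho'/(\ell \log n)$ for a small enough constant $c$. That is where $\delta_{\max} = \Theta(1/(\ell\log n))$ comes from. Any shortfall is absorbed by padding the signature to the message length, which is harmless.

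The second step, which I expect to be the main obstacle, is that the framework requires both building blocks to be robust to the \emph{same} predicate, but here we have $\Phi_\delta$ for steganography and $\Ham_\delta$ for the signature. I would handle this by noting that the proofs of Theorem~\ref{thm:watermark-main} and Corollary~\ref{cor:watermark-recoverable} use the two predicates in separable places. Block robustness is applied only to the block carrying the signature, $\zeta^*_{j+1}$ versus $y_{t+j}$. Signature robustness, signature unforgeability, and uniqueness are applied only to the signed block, $\zeta^*_j$ versus $y_{t+j-1}$. Unforgeability never invokes steganographic robustness in the two-block case; chain unforgeability uses it, but only as a step that decodes the embedded signature.

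So the generalization I would state is this. With a $\Phi_1$-robust steganography scheme and a $\Phi_2$-robust and recoverable signature scheme, one gets robustness and robust recovery for the lifted predicate ``$\Phi_1 \wedge \Phi_2$ blockwise'', and unforgeability and unforgeable recovery for $\EBC{\Ham_\delta}{n}$ (i.e.\ $\Phi_2$). It then remains to check that $\Phi_\delta(a,b) = 1$ implies $\Ham_{\delta}(a,b)=1$ up to adjusting constants. If $\Phi_\delta$-closeness allows a $\delta$-fraction of differing $\ell$-bit sub-blocks, the bit-level Hamming distance is at most $\delta n$, since each of the roughly $\delta n/\ell$ bad sub-blocks contributes at most $\ell$ differing bits. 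Hence $\Phi_\delta \Rightarrow \Ham_\delta$, and the robustness predicate collapses to $\EBC{\Phi_\delta}{n}$, as stated.

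Finally, public-key status follows in two parts. Robustness is public key because the steganography scheme is public key in the random oracle model. Unforgeability comes from the signature and is inherited as public key, as in Theorem~\ref{thm:watermark-main}. Undetectability is inherited from Theorem~\ref{thm:public-key-block-wm}.
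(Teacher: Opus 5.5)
Your route matches the paper's: the same two ingredients (Theorem~\ref{thm:public-key-block-wm} and Corollary~\ref{cor:RDS-subconstant}), the same size bookkeeping yielding $\delta_{\max}=\Theta(1/(\ell\log n))$, and the same observation $\Phi_\delta\Rightarrow\Ham_\delta$. Your explicit two-predicate separation makes precise what the paper leaves implicit. The paper only notes that the $\Ham_\delta$-signature is $\Phi_\delta$-robust and then applies Theorem~\ref{thm:watermark-main}. For that two-block verifier your separation is sound, because unforgeability there uses only the signature's unforgeability.

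The gap is your claim that chain unforgeability and unforgeable recovery also separate, with steganographic robustness used ``only as a step that decodes the embedded signature.'' That step is exactly what fails.

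In Lemmas~\ref{lemma:bad-uf} and~\ref{lemma:bad-rec}, the chain-breaking event only tells you that $\zeta^*_{i+1}$ is close to $y_t$ under the \emph{unforgeability} predicate, here $\Ham_\delta$. The proof then needs $\BWM.\Dec(\zeta^*_{i+1})=\sigma_{t-1}$. Your steganography scheme guarantees this only for $\Phi_\delta$-close inputs, and $\Ham_\delta\not\Rightarrow\Phi_\delta$: one flipped bit in the first $\ell$-bit sub-block already violates $\Phi_\delta$. So the chain argument needs the signature's predicate to imply the steganography's ($\Phi_2\Rightarrow\Phi_1$). You have only the reverse implication, the one that collapses robustness to $\EBC{\Phi_\delta}{n}$.

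Here is how it fails concretely. A $\Phi_\delta$-robust scheme is free to decode a $\Ham_\delta$-close, $\Phi_\delta$-far copy $\tilde y_s$ of $y_s$ to the signature $\sigma_{s-2}$. Then $\ChainRec_3$ on $(y_{s-2},\tilde y_s,y_{s+1})$ returns $(y_{s-2},y_s)$. These are two genuine blocks that, with overwhelming probability, form no substring of any response, so unforgeable recoverability is violated.

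What your argument does establish is:
\begin{itemize}
\item $(\EBC{\Phi_\delta}{n},2n)$-robustness and robust recovery;
\item $(\EBC{\Ham_\delta}{n},n)$-unforgeability of the two-block $\Ver$;
\item unforgeable recovery for $r=2$ windows.
\end{itemize}
Longer recovered chains need either a steganography scheme that decodes correctly under $\Ham_\delta$-closeness, or a construction-specific argument for the scheme of Theorem~\ref{thm:public-key-block-wm}. The paper's proof invokes only Theorem~\ref{thm:watermark-main}, so it does not supply this either.
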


\begin{proof}
Fix $\lambda \in \NN$ sufficiently large. By Theorem~\ref{thm:public-key-block-wm}, in the random oracle model there exist constants
$\rho \in (0,1)$ and $\delta^* \in (0,1/2)$ such that for any $\delta \le \delta^*$, 
there exists a public key $\Phi_\delta$-robust block steganography scheme for $Q$ with block size $n$ and 
message size $\rho (n/\ell - 1)$.

Moreover, the CRHF and strongly unforgeable digital signature assumptions, Corollary~\ref{cor:RDS-subconstant} guarantees that for any $\delta = o(1)$,
there exists a $\Ham_\delta$-robust digital signature scheme with signature size
\[
O(\delta n \log n) + m + k \le C \delta n \log n + o(n/\ell)
\]
for some constant $C>0$.

To apply our main theorem, we require that the signature fits as a message for the block steganography scheme:
\[
C \delta n \log n + o(n/\ell) \le \rho (n/\ell - 1)\;.
\]
This condition holds whenever
\[\delta(\lambda) \leq \delta_{\mathrm{max}}(\lambda) \text{ where } \delta_{\mathrm{max}} = \frac{\rho(n/\ell) - \rho - o(n/\ell)}{C n\log n} = \Theta(\frac{1}{\ell \log n})\;\]

Therefore, there exists a $\Ham_{\delta(\lambda)}$-recoverable signature scheme with message size $n$ and signature size at most $\rho (n/\ell - 1)$. Note we can adjust $\delta_{\mathrm{max}}$ so that additionally, $\delta(\lambda) \le \delta^*$ for all $\lambda$, since $\delta^*$ is constant.  Then both a $\Phi_\delta$-robust block steganography scheme and a $\Ham_\delta$-recoverable signature scheme exist.  

Furthermore, for any strings $y, y' \in \{0,1\}^*$, we have
\[
\Phi_\delta(y, y') = 1 \implies \Ham_\delta(y, y') = 1,
\]
because if at most $\delta \big(\frac{n}{\ell}-1\big)$ blocks of size $\ell$ differ between inputs $y$ and $y'$, then at most $\delta (n-\ell) \le \delta n$ bits differ.  
It follows that the robust signature is also $\Phi_\delta$-robustly recoverable. Applying Theorem~\ref{thm:watermark-main} with these constructions then yields the desired result.
\end{proof}

\section*{Acknowledgements}
This work was supported in part by the Simons Collaboration on the Theory of Algorithmic Fairness. Part of this work was conducted while some of the authors were visiting the Simons Institute for the Theory of Computing.

%% file: sections/appendix.tex
\section{Strong Difference Recovery and Homomorphic PPH}
\label{sec:homomorphic-pph}
We notice that the PPH construction of Holmgren et al.~\cite{holmgren2022nearly}, when instantiated with a \emph{homomorphic} collision-resistant hash function (CRHF), satisfies two notable properties: it supports strong difference recovery and it is homomorphic. We formalize both properties and state the guarantees. 

Note, however, that our recoverable watermarking scheme does not rely on strong recovering or homomorphic PPHs, as the perturbed input $x'$ is available in the clear. Instead, the notion of \emph{sharp sketches}, introduced in~\refSc{sec:sharp-sketch}, more accurately captures the relevant functionality. We include strong difference recovery and homomorphic PPHs here primarily for completeness.

\begin{definition}[Strong difference recovering PPH]\label{def:strong-PPH-recoverability}
    A $\Phi$-PPH $(\Sample, \Hash, \Eval)$ is strong difference-recovering if there exists a PPT algorithm $\Recover$ such that for any PPT adversary $\mathcal{A}$,
\begin{align*}
    \Pr\left[
        \begin{array}{rl}
        h &\leftarrow \Sample(1^\lambda) \\ (x, x')&\leftarrow \cA(1^\lambda, h)\\
        \Delta &\leftarrow \Recover(h, h(x), h(x'))
        \end{array}
        :
        \begin{array}{rl}
        \big(\Delta \neq x \oplus x' \wedge \Phi(x, x') = 1\big) \\
        \vee\; \big(\Delta \neq \bot \wedge \Phi(x,x') = 0\big)
        \end{array}
    \right] \leq \negl(\lambda)
\end{align*}
\end{definition}
\begin{definition}[Homomorphic PPH]
\label{def:homomorphic-pph}
A PPH scheme $(\Sample, \Hash, \Eval)$ with input size $n$ and hash size $k$ is said to be \emph{homomorphic} if there exists a fixed abelian group $(\cG_n, +)$ and an injective map $\psi: \cG_n \to \{0,1\}^k$ such that for every hash function representation $h \in \cH$, there exists an efficiently computable\footnote{That is, computable using $\poly(n)$ many elementary algebraic operations.} homomorphism $\varphi: \ZZ^n \to \cG_n$ satisfying
\begin{align*}
    \Hash(h, x) = \psi(\varphi(x)) \quad \text{for all } x \in \{0,1\}^n\;.
\end{align*}
In other words, every hash function $h: \{0,1\}^n \to \{0,1\}^k$ in the family $\cH$ can be expressed as the composition of a homomorphism $\varphi: \ZZ^n \to \cG_n$ and a fixed injective embedding $\psi: \cG_n \to \{0,1\}^k$.
\end{definition}

\paragraph{Notation for homomorphic PPH.} If $\PPH[\Sample,\Hash,\Eval]$ is homomorphic, we abuse notation and write $h(x) - h(x') = h(x - x')$ for any $h \in \cH$ and $x, x' \in \{0,1\}^n$, where subtraction is interpreted over $\ZZ^n$ (so that $x-x' \in \{-1,0,1\}^n$) and $h(x)$ refers to the group element $\varphi(x) \in \cG_n$ (i.e., before applying the embedding $\psi$ into $\{0,1\}^k$).

\begin{definition}[Sparsity testing]
A homomorphic PPH scheme $(\Sample, \Hash, \Eval)$ is said to support \emph{sparsity testing} if there exists an algorithm $\mathsf{SparseEval}$ such that for all $h \in \cH$ and $x, x' \in \{0,1\}^n$,
\begin{align*}
    \Eval(h, h(x), h(x')) &= \mathsf{SparseEval}(h, h(x) - h(x')) \\
    &= \mathsf{SparseEval}(h,h(\Delta))\;,
\end{align*}
where $\Delta = x - x' \in \{-1, 0, 1\}^n$, and the final equality follows from the homomorphic property.
\end{definition}

In other words, if a homomorphic PPH supports sparsity testing, then its evaluation algorithm $\Eval$ depends only on the hash function $h$ and the difference $h(x)-h(x')=h(x-x')$, rather than on the individual hash values. Specifically, $\Eval$ returns $1$ if the difference $\Delta$ is sparse, and $0$ otherwise. 

Holmgren et al.'s homomorphic PPH goes further by supporting \emph{sparse recovery}: if $\Delta$ is sparse, then given only $(h, h(\Delta))$, their recovery algorithm $\Recover$ outputs $\Delta \in \{-1,0,1\}^n$ exactly. Note that this recovers only the difference vector, not the preimages $x$ or $x'$.

\begin{theorem}[{\cite[Corollary 6.7]{holmgren2022nearly}}]
\label{thm:homomorphic-pph-hamming}
Let $\lambda \in \NN$ be the security parameter, and let $n = \poly(\lambda)$ denote the input size. Suppose there exists a homomorphic CRHF with output size $\ell = \ell(\lambda)$ that is collision-resistant over $\{-1,0,1\}^n$~\cite[Definition 6.1]{holmgren2022nearly}. Then, for any constants $\delta \in (0,1/2)$ and $\rho > H_3^{\mathsf{BZ}}(\delta) \cdot \log_2 3$, where $H_3^{\mathsf{BZ}}(\cdot)$ refers to the Blokh-Zyablov bound~\cite[Fact 2.12]{holmgren2022nearly}, there exists a \emph{strongly recoverable} homomorphic PPH over the input domain $\{0,1\}^n$ for $\Ham_{\delta}$ with hash size $\rho \cdot n + \ell$.
\end{theorem}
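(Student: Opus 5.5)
The plan is to reuse the construction behind \cite[Corollary 6.7]{holmgren2022nearly}, that is, the PPH of Theorem~\ref{thm:pph-hamming}, and check two things directly: that it is homomorphic in the sense of Definition~\ref{def:homomorphic-pph}, and that its evaluation procedure actually outputs the sparse difference, which gives strong difference recovery (Definition~\ref{def:strong-PPH-recoverability}).

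\textbf{The hash.} Fix a family of $\mathbb{F}_3$-linear codes of length $n$ that are efficiently list-decodable from a $\delta$-fraction of errors and have rate approaching the Blokh--Zyablov bound. Let $H$ be a parity-check matrix for such a code, with redundancy $(1-R)n$ trits, where $(1-R)\log_2 3 < \rho$. Let $g$ be the assumed homomorphic CRHF, which is collision-resistant over $\{-1,0,1\}^n$, with output size $\ell$. Define
\[\varphi(x) = \big(Hx \bmod 3,\; g(x)\big) \in \mathbb{F}_3^{(1-R)n} \times \cG',\]
and let $\psi$ be a fixed injective encoding of this group into bit strings.

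\textbf{Homomorphism and hash size.} The map $\varphi$ is a group homomorphism from $\ZZ^n$ because both of its coordinates are. The hash size is $(1-R)n\log_2 3 + \ell \le \rho n + \ell$.

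\textbf{Recovery.} Given $h(x)$ and $h(x')$, the algorithm $\Recover$ proceeds as follows.
\begin{enumerate}
\item Subtract the two hashes homomorphically to obtain $(H\Delta,\, g(\Delta))$, where $\Delta = x - x' \in \{-1,0,1\}^n$.
\item Run syndrome list decoding on $H\Delta$ to get a polynomial-size list of vectors $\Delta'$ of weight at most $\delta n$ with $H\Delta' = H\Delta \pmod 3$.
\item Reinterpret each such $\Delta'$ as an element of $\{-1,0,1\}^n$ and keep those with $g(\Delta') = g(\Delta)$.
\item If exactly one candidate survives, output its coordinatewise absolute value, which equals $x \oplus x'$ when $\Delta'=\Delta$. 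Otherwise output $\bot$.
\end{enumerate}
Setting $\Eval := [\Recover \neq \bot]$ recovers the PPH of Theorem~\ref{thm:pph-hamming}. It also shows sparsity testing, since the output depends only on $h(\Delta)$.

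\textbf{Correctness and security.}
\begin{itemize}
\item If $\Ham_\delta(x,x') = 1$, then $\Delta$ itself is in the list and survives the $g$-check. Any other survivor $\Delta' \ne \Delta$ would satisfy $g(\Delta') = g(\Delta)$, a collision in $g$ over $\{-1,0,1\}^n$. An adversary producing $(x,x')$ that causes this can be turned into a CRHF breaker: it runs the efficient list decoder itself to find $\Delta'$.
\item If $\Ham_\delta(x,x') = 0$, then $\Delta$ has weight more than $\delta n$ and is not in the list, so any survivor is again a collision.
\end{itemize}
In both cases, failure happens with probability at most $\negl(\lambda)$.

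\textbf{Main obstacle.} The delicate step is matching integer differences to $\mathbb{F}_3$ syndromes. Reduction mod~3 is injective on $\{-1,0,1\}$, so $\Delta'$ lifts uniquely and the $g$-check over $\{-1,0,1\}^n$ is well defined. This step is also exactly where the $\log_2 3$ factor and the $H_3^{\mathsf{BZ}}$ bound enter the hash size.
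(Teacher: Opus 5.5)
Your proposal is correct and takes essentially the paper's approach. The paper gives no separate proof: it cites the Holmgren et al.\ construction (an $\mathbb{F}_3$ syndrome of an efficiently list-decodable code near the Blokh--Zyablov bound, paired with a homomorphic CRHF) and notes that it is homomorphic and recovers the sparse difference $\Delta \in \{-1,0,1\}^n$ from $h(x)-h(x')$ alone. You supply exactly those details, including the reduction from any spurious surviving candidate to a CRHF collision over $\{-1,0,1\}^n$ and the conversion $|\Delta| = x \oplus x'$ that Definition~\ref{def:strong-PPH-recoverability} requires.
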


\section{Sharp Sketches and Lower Bounds}
\label{sec:sharp-sketch}
A \emph{sharp sketch} is a cryptographic primitive that enables efficient recovery given auxiliary input $x'$ in the clear. The goal is to reconstruct an input $x$ from its sketch $g(x)$ and a candidate $x'$ that is sufficiently close to it. Its defining feature is \emph{sharpness}: $\Recover(g, g(x), x')$ must output $\bot$ if $x'$ is far from $x$.

This notion is inspired by the secure sketches of Dodis et al.~\cite{dodis2008fuzzy}, but differs in both purpose and security guarantees. Secure sketches require \emph{unconditional} hardness of inversion: it must be hard to recover $x$ from its sketch $g(x)$, even for an unbounded adversary.  This necessitates a non-trivial compression rate $\rho = k/n$, so secure sketches are compressive by definition. Their recoverability guarantee is also \emph{unconditional}, requiring recovery to succeed for \emph{every} $x'$ that is close to $x$.

In contrast, sharp sketches enforce a different notion of security, one not centered on inversion. They require that no \emph{computationally-bounded} adversary can produce input pairs $(x, x')$ that violate the sharp recoverability condition (see~\refDef{def:sharp-sketch}). Thus, compressiveness is not required by definition. Even the identity sketch $g(x) = x$ qualifies as a (trivial) sharp sketch, though it would not qualify as a secure sketch. On the other hand, any secure sketch can be converted into a sharp sketch by returning $\bot$ whenever $\hat{x} = \Recover(g, g(x), x')$ is either inconsistent $g(x) \neq g(\hat{x})$ or far from $x'$.

Sharp sketch constructions for Hamming predicates were given by Dodis et al.~\cite[Section 8.3]{dodis2008fuzzy}. We include it here for completeness and to enable precise reference in our watermarking scheme, where it plays a key role. Our main technical contribution is a generalization of the lower bound by Holmgren et al.~\cite{holmgren2022nearly}, extending it from Hamming predicates to any predicate $\Phi$ that satisfies an efficient sampleability condition (\refThm{thm:sharp-sketch-lower-bound}). When specialized to the Hamming case, our bound implies that the construction of Dodis et al.~achieves nearly optimal sketch size, since the gap between the Hamming bound $H(\delta)$ and the Blokh-Zyablov bound $H_2^{\mathsf{BZ}}(\delta)$ is small (see also Remark~\ref{rem:blokh-zyablov}). This shows that the recoverable digital signature in Corollary \ref{cor:recoverable-signature-concrete} is nearly optimal compared to constructions based on the generic PPH/sharp sketch transformation of Section \ref{sec:rds-construction}.

\begin{definition}[Sharp sketch]
\label{def:sharp-sketch}
Let $\lambda \in \NN$ be the security parameter, let $\Phi$ be any predicate, and let $n = n(\lambda)$ and $k = k(\lambda)$ be such that $k \leq n$. A sharp sketching scheme for $\Phi$ with input size $n$ and sketch size $k$ is a triple of PPT algorithms $(\Sample, \Sketch, \Recover)$ defined as follows:

\begin{itemize}
    \item $\Sample(1^\lambda)$: samples a sketching function $g$ from a family of functions $\cG_\lambda$.
    \item $\Sketch(g, x)$: takes in $x \in \{0,1\}^n$ and outputs its sketch $g(x) \in \{0,1\}^k$.
    \item $\Recover(g, z, x')$: takes in $z \in \{0,1\}^k$ and $x' \in \{0,1\}^n$, and returns an output in $\{0,1\}^n \cup \{\bot\}$.
\end{itemize}
The scheme satisfies the following security guarantee:

\medskip
\noindent
\emph{\textbf{Sharp recoverability:}} For any PPT adversary $\cA$,
\begin{align*}
    \Pr\left[
        \begin{array}{rl}
            g &\leftarrow \Sample(1^\lambda); \\
            (x, x') &\leftarrow \cA(1^\lambda, g)
        \end{array}
        :
        \begin{array}{rl}
            \big(\Recover(g, g(x), x') \neq x \;\wedge\; \Phi(x, x') = 1\big) \\
            \vee\; \big(\Recover(g, g(x), x') \neq \bot \;\wedge\; \Phi(x, x') = 0\big)
        \end{array}
    \right] \leq \negl(\lambda)\;.
\end{align*}
\end{definition}
We restate Theorem~\ref{thm:sharp-sketch-hamming}, which gave a difference recovery hash family for Hamming from CRHFs, in the language of sharp sketches.
\begin{theorem}[{Sharp sketch for Hamming, adapted from~\cite[Section 8.3]{dodis2008fuzzy}}]
Let $\lambda \in \NN$ be the security parameter, and let $n=n(\lambda)$ denote the input size. Suppose there exists a CRHF with output size $\ell=\ell(\lambda)$. For any constants $\delta \in (0,1/2)$ and $\rho > H_2^{\mathsf{BZ}}(\delta)$, where $H_2^{\mathsf{BZ}}(\cdot)$ refers to the Blokh-Zyablov bound, there exists a sharp sketching scheme $(\Sample, \Sketch, \Recover)$ over $\{0,1\}^n$ for $\Ham_{\delta}$ with sketch size $\rho \cdot n + \ell$.
\end{theorem}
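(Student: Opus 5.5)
The plan is to adapt the code-offset / syndrome construction of Dodis et al. and harden it against computationally bounded adversaries with a CRHF, as Holmgren et al. do for their PPH.

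\textbf{Ingredients.} Fix $\rho > H_2^{\mathsf{BZ}}(\delta)$. By the Blokh--Zyablov bound there is an explicit binary linear code $C \subseteq \FF_2^n$ with the following properties:
\begin{itemize}
\item its parity-check matrix $H \in \FF_2^{\rho n \times n}$ has $\rho n$ rows;
\item it admits efficient syndrome list decoding up to radius $\delta n$. Concretely, given $s \in \FF_2^{\rho n}$, a polynomial-time algorithm outputs a list $L(s)$ of polynomial size containing every $e$ with $\|e\|_0 \le \delta n$ and $He = s$.
\end{itemize}
This comes from efficiently list-decoding the coset $\{v : Hv = s\}$, i.e.\ list decoding $C$ from an arbitrary coset representative. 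Let $\hat h \leftarrow \Sample_{\mathrm{CRHF}}(1^\lambda)$ be a CRHF with output size $\ell$.

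\textbf{Construction.} $\Sample$ outputs $g = (H, \hat h)$, and $\Sketch(g,x) = (Hx, \hat h(x))$, which has size $\rho n + \ell$. The algorithm $\Recover(g,(s,c),x')$ works as follows:
\begin{enumerate}
\item Compute $s' = s \oplus Hx'$. If $x$ is the sketched input, then $s' = H(x \oplus x')$.
\item Run the list decoder to obtain $L(s')$.
\item For each $e \in L(s')$ with $\|e\|_0 \le \delta n$, test whether $\hat h(x' \oplus e) = c$.
\item Output the first $x' \oplus e$ that passes. If none passes, output $\bot$.
\end{enumerate}
For the difference-recovering formulation of Theorem~\ref{thm:sharp-sketch-hamming}, output $e$ instead. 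Since the recovered value $\hat x = x' \oplus e$ satisfies $\|\hat x \oplus x'\|_0 \le \delta n$ by construction, this output automatically satisfies the sharpness check.

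\textbf{Security.} Fix a PPT $\cA$ outputting $(x,x')$. There are two failure modes.
\begin{itemize}
\item \emph{Close inputs:} $\Ham_\delta(x,x')=1$ but $\Recover$ does not output $x$. Here $e^* = x \oplus x'$ is $\delta n$-sparse with $He^* = s'$, so $e^* \in L(s')$ and $\hat h(x' \oplus e^*) = c$. Hence $x$ is among the candidates that pass the test. Recovery can fail only if some other candidate $x' \oplus e \neq x$ passes first. Then $\hat h(x' \oplus e) = \hat h(x)$, which is a collision.
\item \emph{Far inputs:} $\Ham_\delta(x,x') = 0$ but $\Recover$ outputs some $\hat x$. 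Then $\hat x$ is within distance $\delta n$ of $x'$ and therefore $\hat x \ne x$, yet $\hat h(\hat x) = \hat h(x)$. Again this is a collision.
\end{itemize}

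\textbf{Reduction.} I would build a collision finder $\mathcal B(1^\lambda,\hat h)$ as follows:
\begin{enumerate}
\item Sample $H$ itself (it is public and fixed, or sampled honestly).
\item Run $\cA(g)$ to obtain $(x,x')$.
\item Run $\Recover$ itself on $(Hx, \hat h(x), x')$.
\item Output whichever pair of distinct equal-hash inputs is exhibited in the failure case.
\end{enumerate}
Since $\mathcal B$ is efficient, the failure probability is bounded by the CRHF advantage, which is $\negl(\lambda)$. The only point requiring care is that $\mathcal B$ can compute everything, which holds because recovery uses only public data.

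\textbf{Main obstacle.} I expect the hardest step to be justifying the coding-theoretic ingredient. I need an explicit binary linear code of redundancy $\rho n$ with $\rho > H_2^{\mathsf{BZ}}(\delta)$ that is efficiently list-decodable up to radius $\delta n$ with polynomial list size, and I need list decoding of a codeword to transfer to syndrome list decoding. I would cite Guruswami--Rudra concatenated codes achieving the Blokh--Zyablov bound, as Holmgren et al.\ do via their Fact~2.12. For the transfer, I would compute any preimage $v$ with $Hv = s'$ by linear algebra, list-decode $v$ to codewords $w$, and take $e = v \oplus w$.
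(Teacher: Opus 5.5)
Your proposal is correct and follows essentially the same route as the construction the paper relies on. The paper gives no proof of its own beyond citing Dodis et al.\ (Section 8.3) and the syndrome-list-decoding-plus-CRHF idea of Holmgren et al., and that construction is exactly your sketch $(Hx, \hat h(x))$, with recovery by list decoding the syndrome $Hx \oplus Hx'$ and keeping the candidate whose digest matches; your reduction of both failure modes to a collision in $\hat h$ and your coset-to-codeword transfer for syndrome list decoding are both sound.
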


\begin{remark}[PPH and sharp sketch]
    By Proposition~\ref{prop:pph-recover-app}, every Hamming PPH yields a sharp sketch for the same predicate. However, the converse does not hold. For instance, the sharp sketching scheme in~\refThm{thm:sharp-sketch-hamming} does not give rise to a Hamming PPH, as it crucially relies on receiving $x'$ in the clear, rather than just its sketch. Note, however, that our recoverable digital signature algorithm (Section~\ref{sec:recoverability}) \emph{does} receive one input in the clear, and therefore sharp sketches suffice.
\end{remark}

\begin{lemma} [\cite{dodis2008fuzzy}, Lemma 2.2]
\label{lem:cond-entropy} For random variables $X, Y, Z$, where $Y$ is supported over a set of size $k$, we have
\[H_\infty(X | Y, Z) \geq H_\infty(X |Z) - \log k\]
\end{lemma}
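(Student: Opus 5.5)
The statement to prove is Lemma~\ref{lem:cond-entropy}, $H_\infty(X\mid Y,Z)\ge H_\infty(X\mid Z)-\log k$, where $Y$ is supported on a set of size $k$. I will work with the average-case conditional min-entropy of Dodis et al.,
\[
H_\infty(X\mid W)=-\log \mathbb{E}_{w\leftarrow W}\Big[\max_x \Pr[X=x\mid W=w]\Big],
\]
which is the notion used in \cite{dodis2008fuzzy}. The plan is to convert both sides into statements about guessing probabilities and bound one by the other directly.

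First I express the left-hand side as a guessing probability. Define
\[
p_{\mathrm{L}}:=\mathbb{E}_{(y,z)}\Big[\max_x \Pr[X=x\mid Y=y,Z=z]\Big]=\sum_{y,z}\max_x \Pr[X=x,Y=y,Z=z].
\]
The equality holds because multiplying the conditional probability by $\Pr[Y=y,Z=z]$ turns the expectation into a sum of joint probabilities.

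Next I bound each term. For every fixed $y$ and $z$,
\[
\max_x \Pr[X=x,Y=y,Z=z]\le \max_x \Pr[X=x,Z=z],
\]
since dropping the event $Y=y$ can only increase the probability. Summing over the at most $k$ values $y$ in the support of $Y$ and then over $z$ gives
\[
p_{\mathrm{L}}\le k\sum_z \max_x \Pr[X=x,Z=z]=k\cdot 2^{-H_\infty(X\mid Z)}.
\]
Taking $-\log$ of both sides yields $H_\infty(X\mid Y,Z)\ge H_\infty(X\mid Z)-\log k$.

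The only real obstacle is getting the definition right. The bound above is for the average-case notion. If instead the worst-case conditional min-entropy were meant, the bound can fail: conditioning on a rare value of $Y$ can collapse the min-entropy entirely. I will therefore state explicitly that the average-case definition from \cite{dodis2008fuzzy} is the one in use. For joint distributions with continuous parts, the sums would become integrals over $z$, but the finite case is all the paper needs.
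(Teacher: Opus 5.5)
Your proof is correct. The paper gives no proof of its own (it imports the lemma from Dodis et al.), so the relevant comparison is that source, and its argument is essentially yours. The only difference is that Dodis et al.\ first show $H_\infty(X\mid Y,Z)\ge H_\infty((X,Y)\mid Z)-\log k$ and then drop $Y$, whereas you drop the event $\{Y=y\}$ termwise in a single step.

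Insisting on the average-case notion is also the right choice. The paper uses the lemma in the sharp-sketch lower bound precisely as a bound on the best guessing probability of $X$ given $(g,g(X))$.
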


\begin{theorem}[Sharp sketch lower bound]
\label{thm:sharp-sketch-lower-bound}
Let $\Phi$ any predicate and let $A \subseteq \{(x,x') \in \{0,1\}^n \times \{0,1\}^n: \Phi(x,x') = 1\}$ be such that the uniform distribution $\cU$ over pairs $(x,x')$ conditioned on $A$ is efficiently sampleable. Then the sketch size of any sharp sketch for $\Phi$ is at least 
\begin{align*}
    n + \log \cU(A) \;.
\end{align*}
\end{theorem}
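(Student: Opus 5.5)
We argue by an entropy-counting argument in the style of the Holmgren et al.\ lower bound, using the sharp recoverability guarantee against a specific efficient adversary. Fix a sharp sketch $(\Sample,\Sketch,\Recover)$ for $\Phi$ with sketch size $k$. The adversary $\cA(1^\lambda,g)$ ignores $g$ and outputs a pair $(X,X')$ drawn from $\cU$ conditioned on $A$. By assumption this sampling is efficient, so $\cA$ is PPT. Since $A \subseteq \Phi^{-1}(1)$, sharp recoverability gives
\[\Pr\big[\Recover(g,g(X),X') = X\big] \ge 1-\negl(\lambda),\]
over $g\leftarrow\Sample(1^\lambda)$ and $(X,X')\leftarrow \cU|_A$. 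Here $(X,X')$ is independent of $g$.

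Next we bound the min-entropy of $X$ given $(X',g)$. The pair $(X,X')$ is uniform on $A$, so $H_\infty(X,X') = \log|A| = 2n+\log\cU(A)$. Hence
\[H_\infty(X\mid X') \ge H_\infty(X,X') - H_0(X') \ge 2n+\log\cU(A) - n = n+\log\cU(A).\]
If needed, we pass to the average-case conditional min-entropy of Dodis et al., for which this chain rule holds in exactly this form. Because $g$ is independent of $(X,X')$, conditioning on $g$ as well does not lower this bound. We then apply Lemma~\ref{lem:cond-entropy} with $Y=g(X)$, which is supported on $\{0,1\}^k$, a set of size $2^k$, and with $Z=(X',g)$:
\[H_\infty(X \mid g(X), X', g) \ge n+\log\cU(A) - k.\]

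Finally we compare this with the recovery guarantee. The predictor $\Recover(g, g(X), X')$ guesses $X$ from $(g(X),X',g)$ with success probability at least $1-\negl(\lambda)$. Average min-entropy is exactly minus the log of the best guessing probability, and randomness in $\Recover$ cannot help beyond the optimal deterministic guess. So $2^{-(n+\log\cU(A)-k)} \ge 1-\negl(\lambda)$, that is, $k \ge n+\log\cU(A) - \log\frac{1}{1-\negl(\lambda)} = n+\log\cU(A) - o(1)$. Rounding, or absorbing this negligible additive term as the statement implicitly does, gives $k\ge n+\log\cU(A)$.

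The main subtlety is the entropy bookkeeping. We have to use the average-case notion of conditional min-entropy so that Lemma~\ref{lem:cond-entropy} applies, and we have to handle the conditioning on the random function $g$ correctly. Independence of the adversary's sample from $g$ is what keeps this step clean. The efficient sampleability hypothesis is used only to make $\cA$ a legitimate PPT adversary.
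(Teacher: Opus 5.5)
Your proof is correct. It reaches the bound by a somewhat different decomposition from the paper's.

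The paper keeps $X$ uniform on $\{0,1\}^n$ and uses only $H_\infty(X\mid g,g(X))\ge n-k$. It then builds an inverter that sees only $(g,g(x))$, guesses $x'$ uniformly at random, and runs $\Recover$. The $\log\cU(A)$ term enters as the probability that this blindly guessed pair lands in $A$, which gives $2^{-(n-k)}\ge \cU(A)\,(1-\negl(\lambda))$.

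You instead sample $(X,X')$ from $\cU$ conditioned on $A$ and put $X'$ into the conditioning. There the $\log\cU(A)$ term enters through the chain rule $\tilde H_\infty(X\mid X')\ge H_\infty(X,X')-n$. This is the joint form of Dodis et al.'s Lemma 2.2(b), which is slightly stronger than the form recorded as Lemma~\ref{lem:cond-entropy}. It is standard, and you can check it directly: $\sum_{x'}\max_x\Pr[X=x,X'=x']\le 2^n/|A|$.

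The paper's route is more elementary: one basic entropy bound plus a one-line probability estimate. Yours identifies the bound as exactly $\tilde H_\infty(X\mid X')-k$ for the hard distribution, which is $\log\binom{n}{r}$ in the Hamming case. For that reason it extends directly to any efficiently sampleable distribution supported on $\Phi^{-1}(1)$, not only the uniform one on $A$.

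Both arguments end with the same $\log(1-\negl(\lambda))$ slack. Your handling of it is, if anything, more careful than the paper's: its ``$\negl(\lambda)<1/2$'' step literally yields only $k\ge n+\log\cU(A)-1$.
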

\begin{proof}
Let $X$ be uniformly distributed over $\{0,1\}^n$, and let $g: \{0,1\}^n \to \{0,1\}^k$ be a sharp sketch drawn independently of $X$. Then the min-entropy of $X$ conditioned on $h$ and $g(X)$, which we denote by $H_\infty(X\mid h,g(X))$, satisfies:
\begin{align*}
    H_\infty(X \mid g, g(X)) \ge H_\infty(X \mid g) - k \ge n - k\;.
\end{align*}

In particular, no (even unbounded) adversary can ``invert'' $(g, g(X))$ to recover the original input $X$ with probability greater than $2^{-(n - k)}$ when $X$ is drawn uniformly from $\{0,1\}^n$.

Now consider the following inversion strategy based on $\Recover(g,g(x),\cdot)$. Define a randomized algorithm $\Inv$ that, on input $(g, g(x))$, samples a fresh $x' \leftarrow \{0,1\}^n$ and outputs $\Recover(g, g(x), x')$. Let $\mu$ be the distribution of $(x, x')$ sampled uniformly from $\{0,1\}^{2n}$ conditioned on the event $A$. By assumption, $(x,x') \in A$ implies $\Phi(x,x') = 1$. Since $\mu$ is efficiently sampleable and $\Phi(x,x')=1$, guarantees of the sharp sketch $g$ yield
\begin{align*}
    \Pr\left[
    \begin{array}{rl}
        g &\leftarrow \Sample(1^\lambda) \\
        (x, x') &\leftarrow \mu
    \end{array}
    :
    \Recover(g, g(x), x') = x
    \right] \ge 1 - \negl(\lambda)\;.
\end{align*}

Using the above recovery guarantee, we lower bound the success probability of $\Inv$ as follows.
\begin{align*}
    \Pr&\bigg[
    \begin{array}{rl}
        g &\leftarrow \Sample(1^\lambda)  \\
        x &\leftarrow \{0,1\}^n 
    \end{array}
    :\;
    \Inv(g,g(x)) = x
    \bigg] \\
    &\ge \Pr\bigg[
    \begin{array}{rl}
        g &\leftarrow \Sample(1^\lambda)  \\
        (x, x') &\leftarrow \{0,1\}^{2n} 
    \end{array}
    :
    \begin{array}{rl}
        \Recover(g,g(x),x') = x \;\bigg\vert\; A
    \end{array}
    \bigg] \cdot \cU(A) \\
    &\ge (1-\negl(\lambda)) \cdot \cU(A)\;.
\end{align*}

But from the earlier information-theoretic bound, no function $g$ with output size $k$ can support inversion with success probability greater than $2^{-(n - k)}$. Hence, we conclude
\begin{align*}
    2^{-(n - k)} \ge \cU(A) \cdot (1 - \negl(\lambda))\;,
\end{align*}

For sufficiently large $\lambda$, we have $\negl(\lambda) < 1/2$, which implies
\begin{align*}
    k \ge n + \log\cU(A) \;.
\end{align*}
\end{proof}
\begin{claim}
Let $r = r(n)$ be any efficiently computable function. Then, the uniform distribution $\cU$ over pairs $(x,x') \in \{0,1\}^n \times \{0,1\}^n$ conditioned on $\|x-x'\|_0 = r$ is efficiently sampleable.
\end{claim}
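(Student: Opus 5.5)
The sampler is explicit. On input $1^n$, first sample $x \leftarrow \{0,1\}^n$ uniformly. Then sample a uniformly random subset $S \subseteq [n]$ of size exactly $r = r(n)$. Finally set $x' = x \oplus \mathbf{1}_S$, where $\mathbf{1}_S$ is the indicator vector of $S$, and output $(x,x')$. If $r \notin \{0,\ldots,n\}$ the conditioning event is empty, so there is nothing to sample and the algorithm simply outputs $\bot$.

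For efficiency, computing $r(n)$ takes polynomial time by assumption. A uniform $r$-subset of $[n]$ can be drawn in $\poly(n)$ time, for instance by a partial Fisher--Yates shuffle that performs $r \le n$ swaps, each using a uniform index. If only fair coins are available, the uniform indices come from rejection sampling; the expected number of trials is constant, and after $\lambda$ trials the failure probability is $2^{-\Omega(\lambda)}$, so truncating there changes the output distribution by only a negligible amount. This is the only point needing care, and it is routine.

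For correctness, the map $(x,S) \mapsto (x, x \oplus \mathbf{1}_S)$ is a bijection from $\{0,1\}^n \times \binom{[n]}{r}$ onto the set $\{(x,x') : \|x - x'\|_0 = r\}$. Its inverse is $S = \mathrm{supp}(x \oplus x')$. The pair $(x,S)$ is uniform on the product set, so its image is uniform on the target set. That target set has size $2^n\binom{n}{r}$, and this is exactly the uniform distribution $\cU$ conditioned on $\|x-x'\|_0 = r$.

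For use in Theorem~\ref{thm:sharp-sketch-lower-bound}, take $A = \{(x,x') : \|x-x'\|_0 = \delta n\}$, which is contained in the $\Ham_\delta$-close pairs. Then $\cU(A) = \binom{n}{\delta n}2^{-n}$. Combined with Remark~\ref{rem:blokh-zyablov}, this gives the lower bound $k \ge nH(\delta) - o(n)$.
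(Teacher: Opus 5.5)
Your proposal is correct and takes the same route as the paper: sample $x$ uniformly, independently sample a uniformly random weight-$r$ vector $\Delta$ (your $\mathbf{1}_S$), and output $(x, x \oplus \Delta)$. The bijection argument for uniformity and the remarks on efficiently sampling a uniform $r$-subset add details that the paper's one-line proof leaves implicit.
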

\begin{proof}
First, sample $x \leftarrow \{0,1\}^n$ uniformly at random. Then sample an $r$-sparse perturbation vector $\Delta \in \{0,1\}^n$ independently of $x$. Finally, set $x' = x \oplus \Delta$.
\end{proof}

\begin{corollary}
\label{cor:hamming-sketch-lower-bound}
Let $r = r(n)$ be any efficiently computable function. Then, for all sufficiently large $n$, the sketch size of any sharp sketching scheme for $\Ham_{r/n}$ is at least $\log \binom{n}{r}$.
\end{corollary}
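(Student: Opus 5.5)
The plan is to apply Theorem~\ref{thm:sharp-sketch-lower-bound} with the set
\[A = \{(x,x') \in \{0,1\}^n \times \{0,1\}^n : \|x - x'\|_0 = r\}.\]
Since $\Ham_{r/n}(x,x') = 1$ exactly when $|x| = |x'|$ and $\|x-x'\|_0 \le r$, every pair in $A$ satisfies $\Ham_{r/n}(x,x')=1$. Hence $A$ is contained in the set of $\Ham_{r/n}$-close pairs, as the theorem requires.

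Efficient sampleability of the uniform distribution over $A$ is exactly the preceding Claim. First sample $x$ uniformly. Then sample a uniformly random $r$-subset of $[n]$ as the support of $\Delta$, which is efficient because $r$ is efficiently computable. Finally set $x' = x \oplus \Delta$. This map is a bijection between (point, $r$-subset) choices and $A$, so the output is uniform on $A$.

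It remains to compute $\cU(A)$. There are $2^n$ choices of $x$ and $\binom{n}{r}$ choices of $x'$ at distance exactly $r$, out of $2^{2n}$ pairs. This gives
\[\cU(A) = \frac{2^n \binom{n}{r}}{2^{2n}} = 2^{-n}\binom{n}{r}.\]
Plugging this into Theorem~\ref{thm:sharp-sketch-lower-bound} yields a sketch size of at least
\[n + \log \cU(A) = n - n + \log\binom{n}{r} = \log\binom{n}{r}.\]

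The main subtlety is the qualifier ``for all sufficiently large $n$.'' In Theorem~\ref{thm:sharp-sketch-lower-bound}, that qualifier is where one needs $\negl(\lambda) < 1/2$. Because $n = \poly(\lambda)$, large $n$ corresponds to large $\lambda$, so this requirement holds eventually.

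There is also a small loss hidden in that step. The theorem's proof actually gives $2^{-(n-k)} \ge \cU(A)(1-\negl)$, which yields $k \ge \log\binom{n}{r} - 1$ once $\negl < 1/2$. I would absorb this additive constant as the paper's theorem statement does, noting that $1-\negl(\lambda) \to 1$. The corollary should therefore be read with the same convention as Theorem~\ref{thm:sharp-sketch-lower-bound}, and I do not expect any other obstacle.
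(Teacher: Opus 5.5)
Your proposal is correct and is the intended argument: you apply Theorem~\ref{thm:sharp-sketch-lower-bound} with $A$ the set of pairs at Hamming distance exactly $r$, which is sampleable by the preceding Claim, so $\cU(A) = 2^{-n}\binom{n}{r}$ and $n + \log \cU(A) = \log\binom{n}{r}$. Your remark about the last step of the theorem's proof is also accurate, though the slack there is only $\log\frac{1}{1-\negl(\lambda)} = o(1)$, not a full additive $1$.
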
